\newtheorem{thm}{Theorem}
\newtheorem{lem}{Lemma}
\newtheorem{cor}{Corollary}
\newtheorem{example}{Example}
\newtheorem{defn}{Definition}
\newtheorem{conj}{Conjecture}
\newtheorem{cons}{Construction}
\newtheorem{rem}{Remark}
\def\BibTeX{{\rm B\kern-.05em{\sc i\kern-.025em b}\kern-.08em
    T\kern-.1667em\lower.7ex\hbox{E}\kern-.125emX}}
\begin{document}

\title{Plotkin-like Bound and Explicit Function-Correcting Code Constructions for Lee Metric Channels}
\author{\IEEEauthorblockN{Hareesh K., Rashid Ummer N.T. and B. Sundar Rajan} \\
	\IEEEauthorblockA{\textit{Department of Electrical Communication Engineering} \\
		\textit{Indian Institute of Science,}
		Bangalore, India \\
		\{hareeshk,rashidummer,bsrajan\}@iisc.ac.in}
}

\maketitle

\begin{abstract}
	Function-Correcting Codes (FCCs) are a novel class of codes designed to protect function evaluations of messages against errors while minimizing redundancy. A theoretical framework for systematic FCCs to channels matched to the Lee metric has been studied recently, which introduced function-correcting Lee codes (FCLCs) and also derived upper and lower bounds on their optimal redundancy. In this paper, we first propose a Plotkin-like bound for irregular Lee-distance codes. We then construct explicit FCLCs for specific classes of functions, including the Lee weight, Lee weight distribution, modular sum and locally bounded function. For these functions, lower bounds on redundancy are obtained, and our constructions are shown to be optimal in certain cases. Finally, a comparative analysis with classical Lee error-correcting codes and codes correcting errors in function values demonstrates that FCLCs can significantly reduce redundancy while preserving function correctness.
\end{abstract}

\begin{IEEEkeywords}
Function-correcting codes, optimal redundancy, Lee functions, Plotkin-like bound. 
\end{IEEEkeywords}

\section{Introduction}

In conventional communication systems, a sender transmits a message to a receiver through an error-prone channel. Traditional error-correcting codes, paired with appropriate decoders, aim to recover the entire message, treating all parts as equally significant. However, in many practical scenarios, the receiver is primarily interested in a specific attribute of the message, i.e., the value of a function evaluated on the message-rather than only reconstructing the entire message. While recovering the entire message naturally allows the receiver to compute the desired function, this approach can be inefficient when the message is long and the image of the function is small. For instance, in IoT sensor networks, a receiver may only need the maximum temperature reading (a function) rather than all sensor measurements (the full message). To address this inefficiency, Lenz et al. \cite{Lenz} introduced FCCs. These codes are designed to protect the evaluation of functions from errors, offering reduced redundancy and improved transmission efficiency.

The framework of FCCs over binary symmetric channels was first introduced in the seminal work by Lenz et al. \cite{Lenz}. Systematic encoding is preferred because in applications such as distributed computing and archival storage, preserving the original data is often essential. Since the principal advantage of FCCs lies in their reduced redundancy, the key objective is to obtain the smallest amount of redundancy referred to as optimal redundancy, that allows the reliable recovery of the attribute. To achieve this, the authors introduced the concept of irregular-distance codes and established a fundamental connection between their shortest achievable lengths and the optimal redundancy of FCCs. Leveraging this connection, general upper and lower bounds on the optimal redundancy were derived in \cite{Lenz} for general functions, and subsequently applied these results to specific function classes.

\subsection{Motivation}
As emphasized in \cite{Lenz}, investigating FCCs under various channel models is a valuable direction of research, as it can both deepen our theoretical understanding of FCCs and broaden their applicability to practical scenarios. One such relevant channel model is the Lee metric channel, where errors are characterized not by arbitrary symbol changes, but by small numerical deviations \cite{Lee}. The Lee metric is a distance measure used in coding theory, particularly useful for codes over modulo $q$ integers  $(\mathbb{Z}_{q})$, where $q \ge 2$. It is especially relevant in communication systems where errors occur as small magnitude changes (e.g., $+1$ or $-1$) in transmitted symbols. In classical coding theory, the choice of a distance metric is closely related to the nature of the communication channel and is dependent on the decoding scheme used\cite{CW} as well. The Hamming metric is well-matched to the binary symmetric channel (BSC) for maximum likelihood decoding (MLD), where bit-flip errors occur independently with equal probability. However, in many practical scenarios, especially in non-binary systems and phase-modulated schemes, the Hamming metric may no longer reflect the true error characteristics of the channel. In such cases, alternative metrics like the Lee metric offer a more suitable model. A discrete,  memoryless, symmetric channel as shown in Fig.~\ref{Fig_Lee} is strictly matched to the Lee metric for MLD if the probability of symbol errors depends on their Lee distance from the transmitted symbol. In Fig.~\ref{Fig_Lee}, $M = \left[ \frac{q}{2} \right]$, where $[x]$ denotes the integer part of $x$. The conditional probability $\Pr(i|0) = p_i$ for $i = 0, 1, 2, \ldots, M$. The probability $\Pr(-i|0) = p_i$ holds for $i = 1, 2, \ldots, M$. Finally, $\Pr(i|j) = \Pr(k|0)$, where $k \equiv i-j \ (\mathrm{mod}\ q)$. For instance, if the alphabet size is $q$, the conditional probabilities $p_i$ and $p_{-i}$ represent the likelihoods of shifting by $i$  units in either direction. The symmetry of the channel ensures that the probability of an error depends only on the magnitude of the shift, not its direction.  

\begin{figure}[h!]
	\centering
	\begin{tikzpicture}[scale=1]
		\node at (0,0) [circle, fill=black, inner sep=1pt] {};
		\node[left] at (-0.1,0) {0};
		
		\foreach \i/\labelx/\probx in {
			0/0/p_0,
			1/1/p_1,
			2/2/p_2
		} {
			\draw[thick] (0,0) -- (2, \i*0.5);
			\node[circle, fill=black, inner sep=1pt] at (2, \i*0.5) {};
			\node[right] at (2.1, \i*0.5) {$\scriptstyle \labelx$};
			\node[right] at (3.4, \i*0.5) {$\scriptstyle \probx$};
		}
		
		\node at (2.1, 1.2) {$\cdot$};
		\node at (2.1, 1.5) {$\cdot$};
		
		\node at (3.7, 1.2) {$\cdot$};
		\node at (3.7, 1.5) {$\cdot$};
		
		\draw[thick] (0,0) -- (2, 1.9);
		\node[circle, fill=black, inner sep=1pt] at (2, 1.9) {};
		\node[right] at (2.1, 1.9) {$\scriptstyle \left[ \frac{q}{2} \right]$};
		\node[right] at (3.4, 1.9) {$p_M$};
		
		\foreach \i/\labelx/\probx in {
			1/-1/p_1,
			2/-2/p_2
		} {
			\draw[thick] (0,0) -- (2, -\i*0.5);
			\node[circle, fill=black, inner sep=1pt] at (2, -\i*0.5) {};
			\node[right] at (2.1, -\i*0.5) {$\scriptstyle \labelx$};
			\node[right] at (3.4, -\i*0.5) {$\scriptstyle \probx$};
		}
		
		\node at (2.1, -1.2) {$\cdot$};
		\node at (2.1, -1.5) {$\cdot$};
		
		\node at (3.7, -1.2) {$\cdot$};
		\node at (3.7, -1.5) {$\cdot$};
		
		\draw[thick] (0,0) -- (2,-1.9);
		\node[circle, fill=black, inner sep=1pt] at (2,-1.9) {};
		\node[right] at (2.1,-1.9) {$\scriptstyle -\left[ \frac{q-1}{2} \right]$};
		\node[right] at (3.4,-1.9) {$p_M$};
		
	\end{tikzpicture}
	\caption{Conditional probabilities for a discrete, memoryless, symmetric channel matched to the Lee metric.}
	\label{Fig_Lee}
\end{figure}

The Lee metric is ideally suited for systems where data is represented over finite rings $(\mathbb{Z}_{q})$ and errors occur as small-magnitude deviations rather than complete symbol substitutions \cite{CW}. This makes it highly relevant for practical communication systems, including phase modulation schemes\cite{KN}, multi-level flash memories\cite{AB}, and certain classes of networked and distributed architectures. In particular, Lee metric-based codes have found applications in constrained and partial-response channels\cite{RS}, interleaving schemes\cite{MJ}, orthogonal frequency-division multiplexing (OFDM)\cite{KUS}, and multidimensional burst-error correction\cite{TY}. In such settings, the Hamming metric, which is commonly used to count the number of symbol mismatches, is less appropriate, as it does not account for the magnitude of symbol transitions. In contrast, the Lee metric is better suited, as it quantifies the exact distance between symbols from a finite ring such as $\mathbb{Z}_{q}$, particularly when errors correspond to small perturbations. Extending function-correcting codes to the Lee metric not only enhances the robustness of such systems but also opens up new avenues in algebraic coding theory by enabling code design over rings rather than fields. Consequently, the development of function-correcting codes under the Lee metric represents both a significant theoretical advancement and an important practical tool for next-generation communication and computation systems.

\subsection{Related Works}
Xia et al. \cite{QHB} extended the concept of FCCs to symbol-pair read channels, introducing Function-Correcting Symbol-Pair Codes (FCSPCs). In \cite{QHB}, the authors focused on specific classes of functions—such as pair-locally binary functions, pair-weight functions, and pair-weight distribution functions—and provided explicit constructions of FCSPCs for these cases. Premlal and Rajan in \cite{PR} established a lower bound on the redundancy of FCCs. Notably, when the function under consideration is bijective, FCCs reduce to classical error-correcting codes (ECCs), implying that this bound also applies to systematic ECCs. The tightness of this bound was demonstrated for a certain range of parameters. Additionally, the authors analyzed FCCs for linear functions and showed that the upper bound proposed by Lenz et al. is tight by constructing optimal codes for a class of such functions. In \cite{GZ},  Ge et al. focused on FCCs for two important function classes: the Hamming weight and Hamming weight distribution functions. The authors presented improved redundancy bounds and proposed optimal constructions that achieve the lower bound in these cases. A generalization of FCCs to b -symbol-pair read channels was introduced by Singh et al. in \cite{AY}, resulting in the notion of b -symbol-pair function-correcting codes. The authors established both lower and upper bounds on the optimal redundancy required for general functions in this setting. Sampath and Rajan in \cite{SR} investigated FCCs for linear functions in the context of b -symbol read channels and derived a Plotkin-like bound for such codes. The concept of locally $(\rho,\lambda)$-function-correcting codes was introduced by Rajput et al. in \cite{CR}, where an upper bound on the redundancy of such codes was derived. An upper bound on the redundancy of FCCs over finite fields was derived by Ly et al. in \cite{LE}, who conjectured that this bound holds universally across all finite fields. Later, in \cite{GAA} Gyanendra et. al extended function-correcting $b$-symbol codes for locally $(\lambda, \rho, b)$-functions and discussed the possible values of $\lambda$ and $\rho$ for which any function can be considered as locally $(\rho,\lambda)$-function in $b$-symbol metric. Recently, Liu et al. in \cite{LL} introduced a new class of FCCs known as Function-Correcting Codes with Homogeneous Distance (FCCHDs), deriving several bounds on the optimal redundancy for certain classes of functions. Very recently, a theoretical framework for systematic FCCs for channels matched to the Lee metric has been studied by Gyanendra et al. in \cite{GA}. Upper bounds and lower bounds on the optimal redundancy were also derived in \cite{GA}.  
\subsection{Contributions}
In \cite{GA}, the authors developed theoretical foundations of FCCs for Lee metric channels which introduce at most $t$ errors and derived bounds on optimal redundancy of FCLCs. These bounds were then simplified and applied to Lee weight function and Lee weight distribution function. However, \cite{GA} did not provide any constructions of FCLCs for these functions.  We construct explicit FCLCs for specific classes of functions, including the Lee weight, Lee weight distribution, and modular sum function. For these functions, lower bounds on redundancy are obtained, and our constructions are shown to be optimal in certain cases. Our results demonstrate that FCLCs can achieve significantly lower redundancy than both classical Lee error-correcting codes and codes that correct errors in function values, while still ensuring accurate function evaluation in the presence of errors.
The contributions of this paper are summarized as follows:
\begin{itemize}
	\item A Plotkin-like lower bound is presented for irregular-Lee-distance codes over $\mathbb{Z}_q$, the ring of integers modulo $q$.  
	\item Explicit constructions are provided for specific classes of functions in Lee metric, including:
	\begin{itemize}
		\item  Lee weight function,
		\item  Lee weight distribution function,
		\item  Modular sum function.
		\item  Locally bounded function.
	\end{itemize}
	\item Lower bounds on redundancy are derived for the Lee weight, Lee weight distribution and modular sum functions by simplifying the Plotkin-like bounds in certain cases.
	\item The redundancy of these functions is shown to be optimal for certain values of the alphabet size $q$, message length $k$, and number of errors $t$.
\end{itemize} 
\subsection{Organization}
The rest of the paper is organized as follows. In Section~\ref{preliminaries}, we review the basic concepts and definitions related to Lee metric codes, FCCs, and irregular-Lee-distance codes. In Section~\ref{General  Results on the Optimal Redundancy}, the connection between FCLCs and irregular-Lee-distance codes, and bounds on the optimal redundancy of FCLCs \cite{GA} are reviewed. A variant of Plotkin-like bound is also presented for irregular-Lee-distance codes in this section. In Section~\ref{Lee Functions}, we apply these general results to specific classes of functions, including the Lee weight function, Lee weight distribution function, modular sum function, and locally bounded function providing explicit constructions and corresponding redundancy values and bounds. Section~\ref{Redundancy Comparisons} presents a comparative analysis of FCLCs with classical Lee error-correcting codes and codes that correct errors in function values, demonstrating the redundancy gains achieved by FCLCs. Finally, Section~\ref{Conclusion} concludes the paper with a summary of key results and potential directions for future work. 

\noindent \textit{Notations:}
Let $\mathbb{N}_0$ denote the set of non-negative integers and $\mathbb{N}_{0}^{M\times M}$ represents the set of all  $M\times M$ matrices with non-negative integer entries. For a matrix $\boldsymbol{D}\in\mathbb{N}_{0}^{M\times M}$, we denote its $(i,j)-th$  entry by $[\boldsymbol{D}]_{ij}$. Given two vectors $\boldsymbol{\mathbf{x},\mathbf{y}}\in \mathbb{Z}_q^n$, $d_H(\boldsymbol{\mathbf{x},\mathbf{y}})$ denotes the Hamming distance between $\mathbf{x}$ and $\mathbf{y}$ and $d_L(\boldsymbol{\mathbf{x},\mathbf{y}})$ denotes the Lee distance between $\boldsymbol{\mathbf{x}}$ and $\boldsymbol{\mathbf{y}}$  for $\boldsymbol{\mathbf{x},\mathbf{y}}\in \mathbb{Z}_q^n$. For any integer $M$, we write \( [M]^+ \triangleq max\{M,0\}\) and we let \( [M] \triangleq \{1,2,\ldots,M \}\). Also $\{a\}^k \triangleq (\underbrace{a, a, \ldots, a}_{k\text{ times}})$, where the element $a$ is repeated $k$ times. For two integers $a$ and $b$, $[a : b]$ represents the set $\{a, a+1, \ldots b\}$.
 
\section{Preliminaries}
\label{preliminaries}
In this section, we review some basic concepts related to irregular-distance codes and Lee codes. Throughout the paper, we focus on q-ary Lee codes defined over the ring of integers modulo $q$, specifically considering codes over the alphabet $\mathbb{Z}_{q} = \{0,1,2,\ldots,q-1\}$.
\subsection{Irregular-Distance Codes}
Irregular-distance codes are a class of error-correcting codes with non-uniform distance constraints, designed to correct a specified set of error magnitudes, rather than all errors up to a certain weight. 
\begin{defn}[Irregular-Distance Codes\cite{Lenz}]  
	 For a given matrix $\boldsymbol{D}\in\mathbb{N}_{0}^{M\times M}$, $\mathcal{P}=\{\boldsymbol{p}_{1},\boldsymbol{p}_{2},\ldots,\boldsymbol{p}_{M}\}\subseteq\mathbb{Z}_{q}^{r}$ is called a $\mathbf{D}$-irregular-distance code ($\boldsymbol{D}$-code for short) if there exists an ordering of the codewords of $\mathcal{P}$ such that $d_H(\boldsymbol{p}_{i},\boldsymbol{p}_{j})\geq[\boldsymbol{D}]_{ij}$ for all $i,j\in[M]$.
	In addition, $N_{H}(\boldsymbol{D})$ is defined to be the smallest integer $r$ such that there exists a $\boldsymbol{D}$-code of length $r$.
	If $[\boldsymbol{D}]_{ij}=D$ for all $i\neq j$,  we write $N_{H}(\boldsymbol{D})$ as  $N_{H}(M,D)$.
\end{defn}

\subsection{Lee Codes}
We review some definitions and basic concepts related to the Lee metric here. The definition of the Lee weight of a vector in $\mathbb{Z}_q^n$ and the Lee distance between two vectors in $\mathbb{Z}_q^n$ are given below. 
\begin{defn}[Lee Weight\cite{CW}]
	 Let $\boldsymbol{x} = (x_{1}, x_{2},\ldots,x_{n})$ be a vector of length $n$ over ${\mathbb{Z}_{q}}$, where ${x_{i} \in  {Z}_{q} }$. The \textit{Lee weight} of a symbol $x_{i}$, i.e., $w_{L}(x_{i}) = min(x_{i}, q - x_{i})$. The \textit{Lee weight} of a vector $\boldsymbol{x} = (x_{1}, x_{2},\ldots, x_{n})$ is the sum of the  \textit{Lee weights} of its components, i.e. $w_{L}(\boldsymbol{x}) = \sum_{i=1}^n w_{L}(x_{i})$.
\end{defn}

\begin{defn}[Lee Distance\cite{CW}]
	 Let $\boldsymbol{x} = (x_{1}, x_{2},\ldots, x_{n})$ and  $\boldsymbol{y} = (y_{1}, y_{2},\ldots, y_{n})$ be two vectors of length $n$ over ${\mathbb{Z}_{q}}$, where $x_{i} \in  Z_{q} = \{0,1,2,\ldots,q-1\}$. The \textit{Lee distance} between any two symbols  $x_{i}$ and $x_{j}$, i.e., $d_{L}(x_{i},x_{j}) = min(|x_{i} - x_{j}|, q - |x_{i} - x_{j}|)$ and the \textit{Lee distance} between any two vectors   $\boldsymbol{x}$ and  $\boldsymbol{y}$, i.e., $d_{L}(\boldsymbol{x},\boldsymbol{y}) = \sum_{i=1}^n min(|x_{i} - y_{i}|, q - |x_{i} - y_{i}|)$. It can also be written as, $d_{L}(\boldsymbol{x},\boldsymbol{y}) = \sum_{i=1}^{n} w_{L} ((x_i - y_i) \bmod q)$.
\end{defn}

	Let $\boldsymbol{x} = (x_{1}, x_{2},\ldots, x_{n})$ be a vector of length $n$ over ${\mathbb{Z}_{q}}$. A vector $\boldsymbol{y} = (y_{1}, y_{2},\ldots, y_{n})$ is the result of at most \textit{t} errors from $\boldsymbol{x}$ if ${d_{L}(\boldsymbol{x}, \boldsymbol{y})} \le$ \textit{t}.
To compare the error-resilience of codes under different metrics, it is useful to establish a relationship between the Lee and Hamming distances. The following lemma provides bounds on the Lee distance in terms of the Hamming distance for vectors over $\mathbb{Z}_q^n$. 
\begin{lem}[\!\!\cite{CW}]
	 Let \( \boldsymbol{x}, \boldsymbol{y} \in \mathbb{Z}_q^n \) be two vectors over the ring of integers modulo \( q \), where \( q \geq 2 \). Denote by \( d_H(\boldsymbol{x}, \boldsymbol{y}) \) the Hamming distance between \( \boldsymbol{x} \) and \( \boldsymbol{y} \), and by \( d_L(\boldsymbol{x}, \boldsymbol{y}) \) the Lee distance. Then the following bounds hold:
	\[
	d_H(\boldsymbol{x}, \boldsymbol{y}) \leq d_L(\boldsymbol{x}, \boldsymbol{y}) \leq \left\lfloor \frac{q}{2} \right\rfloor d_H(\boldsymbol{x}, \boldsymbol{y}). 
	\] 
\end{lem}
A Lee code $\mathcal{C} \subseteq \mathbb{Z}_q^n$ with  $|\mathcal{C}| = M$ and minimum distance $d_L$  will be specified by $(n, M, d_L)_q$. The following result in the lemma characterizes the error-correcting capability of Lee codes in terms of their minimum Lee distance.
\begin{lem}[\!\!\cite{CW}]
	A Lee code with minimum Lee distance \( d_L \) can correct up to \( t \) errors if and only if
	\(
	d_L \geq 2t + 1.
	\)
\end{lem}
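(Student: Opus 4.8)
The plan is to treat this as the Lee-metric analogue of the classical Hamming-distance error-correction theorem, exploiting the fact that $d_L$ is a genuine metric on $\mathbb{Z}_q^n$. First I would record the triangle inequality $d_L(\mathbf{x},\mathbf{z}) \le d_L(\mathbf{x},\mathbf{y}) + d_L(\mathbf{y},\mathbf{z})$ for all $\mathbf{x},\mathbf{y},\mathbf{z} \in \mathbb{Z}_q^n$; since $d_L(\mathbf{x},\mathbf{y}) = \sum_i d_L(x_i,y_i)$ it suffices to verify this coordinatewise, where $d_L(a,b) = \min(|a-b|, q-|a-b|)$ is the shortest-arc distance on the cycle $\mathbb{Z}_q$ and subadditivity is immediate.

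For the ``if'' direction, assume $d_L \ge 2t+1$ and let $\mathbf{y}$ be received from a transmitted codeword $\mathbf{c} \in \mathcal{C}$ after at most $t$ errors, so $d_L(\mathbf{c},\mathbf{y}) \le t$. A minimum-Lee-distance decoder outputs some $\hat{\mathbf{c}} \in \mathcal{C}$ with $d_L(\hat{\mathbf{c}},\mathbf{y}) \le d_L(\mathbf{c},\mathbf{y}) \le t$. If $\hat{\mathbf{c}} \ne \mathbf{c}$, the triangle inequality gives $d_L(\mathbf{c},\hat{\mathbf{c}}) \le d_L(\mathbf{c},\mathbf{y}) + d_L(\mathbf{y},\hat{\mathbf{c}}) \le 2t < 2t+1 \le d_L$, contradicting the minimum Lee distance of $\mathcal{C}$; hence $\hat{\mathbf{c}} = \mathbf{c}$ and every pattern of at most $t$ errors is corrected. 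Equivalently, the Lee balls of radius $t$ centered at distinct codewords are disjoint.

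For the ``only if'' direction I would argue the contrapositive: suppose $d_L \le 2t$ and pick codewords $\mathbf{c}_1 \ne \mathbf{c}_2$ with $d_L(\mathbf{c}_1,\mathbf{c}_2) = d_L$. The key step is to construct an ``in-between'' vector $\mathbf{y}$ with $d_L(\mathbf{c}_1,\mathbf{y}) \le t$ and $d_L(\mathbf{c}_2,\mathbf{y}) \le t$ simultaneously; then $\mathbf{y}$ can arise from either codeword after at most $t$ errors, so no decoder can reliably distinguish them and $\mathcal{C}$ cannot correct $t$ errors. To build $\mathbf{y}$, note that on the cycle $\mathbb{Z}_q$ one can walk from $c_{1,i}$ to $c_{2,i}$ along a shortest arc, so for every integer $0 \le a_i \le d_L(c_{1,i},c_{2,i})$ there is a symbol $y_i$ with $d_L(c_{1,i},y_i) = a_i$ and $d_L(y_i,c_{2,i}) = d_L(c_{1,i},c_{2,i}) - a_i$. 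Choosing the $a_i$ greedily so that $\sum_i a_i = \lfloor d_L/2 \rfloor$ (possible since the coordinatewise distances can be partially filled to hit any total between $0$ and $d_L$) yields $d_L(\mathbf{c}_1,\mathbf{y}) = \lfloor d_L/2 \rfloor \le t$ and $d_L(\mathbf{c}_2,\mathbf{y}) = \lceil d_L/2 \rceil \le t$, as required.

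The main obstacle is the explicit construction in the converse: one must be sure that geodesics in the Lee metric admit intermediate symbols at every integer distance and that these can be glued across coordinates to split the total distance $d_L$ into two parts each at most $t$. Once this coordinatewise ``betweenness'' is established the argument is routine, and the remaining bookkeeping (the floor/ceiling split and the observation that $\lceil d_L/2 \rceil \le t$ exactly when $d_L \le 2t$) is immediate.
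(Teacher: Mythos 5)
Your proof is correct. The paper does not prove this lemma at all --- it is quoted from the reference [CW] as a known result --- so there is no internal proof to compare against; your argument is the standard metric-space one (triangle inequality for disjointness of radius-$t$ Lee balls in one direction, and an explicit midpoint construction for the converse), and the one genuinely Lee-specific step, namely that a point at arc-distance $a_i \le d_L(c_{1,i},c_{2,i}) \le \lfloor q/2 \rfloor$ along the shortest arc realizes Lee distances $a_i$ and $d_L(c_{1,i},c_{2,i})-a_i$ to the two endpoints, is handled correctly.
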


Plotkin Low-rate Average Distance Bound \cite{GW} is presented next. This upper bound on minimum distance \( d_L \) is based on the fact that the minimum distance between any pair of codewords in a code cannot exceed the average distance between all pairs of distinct codewords.
\begin{lem}[\!\cite{GW}]
 
The minimum Lee distance of an $(n, M, d_L)_q$ Lee code is bounded from above as \[ d_{L}  \le \begin{cases}
		\begin{aligned}
		&\frac{n(q^{2}-1)M}{4q(M-1)},  \text{ if } q \text{ is odd,} \\
		&\frac{nqM}{4(M-1)}, \text{ if } q \text{ is even.}	
		\end{aligned}
	\end{cases} \] 
\label{PLAD}	
\end{lem}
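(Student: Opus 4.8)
The plan is to run the classical ``average distance'' argument, specialized to the Lee metric. Let $\mathcal{C}=\{\mathbf{c}_1,\dots,\mathbf{c}_M\}\subseteq\mathbb{Z}_q^n$. Since the minimum Lee distance cannot exceed the average Lee distance over all ordered pairs of distinct codewords,
\[
d_L\ \le\ \frac{1}{M(M-1)}\sum_{\mathbf{x}\in\mathcal{C}}\sum_{\mathbf{y}\in\mathcal{C}}d_L(\mathbf{x},\mathbf{y}),
\]
so it is enough to upper bound $S\triangleq\sum_{\mathbf{x},\mathbf{y}\in\mathcal{C}}d_L(\mathbf{x},\mathbf{y})$. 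Splitting $d_L$ coordinatewise gives $S=\sum_{\ell=1}^{n}S_\ell$ with $S_\ell=\sum_{\mathbf{x},\mathbf{y}}d_L(x_\ell,y_\ell)$, and if $m_a$ denotes the number of codewords whose $\ell$-th symbol equals $a\in\mathbb{Z}_q$, then $\sum_a m_a=M$ and $S_\ell=\sum_{a,b\in\mathbb{Z}_q}m_am_b\,d_L(a,b)$. Thus the whole problem reduces to showing that the quadratic form $Q(\mathbf{m})=\sum_{a,b}m_am_b\,d_L(a,b)$, maximized over nonnegative $\mathbf{m}=(m_0,\dots,m_{q-1})$ with $\sum_a m_a=M$, is at most $\frac{M^2}{q^2}\sum_{a,b}d_L(a,b)$ --- i.e.\ that a uniform column of symbols maximizes the in-column average Lee distance.

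For that maximization I would exploit the circulant structure: $\mathbf{D}=[d_L(a,b)]_{a,b\in\mathbb{Z}_q}$ is a symmetric circulant matrix with first row $(w_L(0),\dots,w_L(q-1))$, hence diagonalized by the DFT with eigenvalues $\lambda_j=\sum_{k=0}^{q-1}w_L(k)\cos(2\pi jk/q)$ and with $\mathbf{1}$ the eigenvector of $\lambda_0=\sum_k w_L(k)$. Writing $\mathbf{m}=\frac{M}{q}\mathbf{1}+\mathbf{v}$ with $\mathbf{1}^{\!\top}\mathbf{v}=0$ and using $\mathbf{D}\mathbf{1}=\lambda_0\mathbf{1}$, the cross term drops and $Q(\mathbf{m})=\frac{M^2}{q}\lambda_0+\mathbf{v}^{\!\top}\mathbf{D}\mathbf{v}$; thus it suffices to prove $\lambda_j\le 0$ for all $j\neq 0$. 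To see this, decompose $d_L(a,b)=\sum_{i=1}^{\lfloor q/2\rfloor}\mathbbm{1}\{d_L(a,b)\ge i\}$, so that $\mathbf{D}=\sum_{i=1}^{\lfloor q/2\rfloor}(\mathbf{J}-\mathbf{B}_i)$ where $\mathbf{J}$ is all-ones and $\mathbf{B}_i$ is the loop-inclusive adjacency matrix of $\mathrm{Circ}(q;\{0,\pm1,\dots,\pm(i-1)\})$. For $j\neq 0$ the $\mathbf{J}$-part contributes $0$ and $\mathbf{B}_i$ contributes the Dirichlet kernel $\sum_{|\delta|\le i-1}\cos(2\pi j\delta/q)=\sin((2i-1)\pi j/q)/\sin(\pi j/q)$; summing over $i$ and invoking the elementary identity $\sum_{i=1}^{N}\sin((2i-1)\theta)=\sin^2(N\theta)/\sin\theta$ with $N=\lfloor q/2\rfloor$, $\theta=\pi j/q$, gives $\lambda_j=-\sin^2(\lfloor q/2\rfloor\,\pi j/q)/\sin^2(\pi j/q)\le 0$, as required. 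Finally $\lambda_0=\sum_k w_L(k)$ equals $\frac{q^2-1}{4}$ for odd $q$ and $\frac{q^2}{4}$ for even $q$ by a one-line summation of Lee weights, and substituting $Q(\mathbf{m})\le\frac{M^2}{q}\lambda_0$ into $S=\sum_\ell S_\ell\le n\cdot\frac{M^2}{q}\lambda_0$ and dividing by $M(M-1)$ yields the two stated bounds.

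The one genuinely delicate step is the sign of $\lambda_j$ for $j\neq0$: the summands $\mathbf{J}-\mathbf{B}_i$ are individually indefinite because the Dirichlet kernel oscillates in sign, so negativity only emerges after the telescoping sum over $i$ --- and this is exactly where the two parities behave differently (for even $q$ the eigenvalues at even $j$ collapse to $0$, whereas for odd $q$ one really needs the closed form; an alternative is to evaluate $\lambda_j$ directly as a Fej\'er-type sum). Everything else is routine bookkeeping; I would also note that the argument never uses $q\mid M$, since $\mathbf{v}^{\!\top}\mathbf{D}\mathbf{v}\le0$ holds for every real $\mathbf{v}\perp\mathbf{1}$, and I would quickly sanity-check the small cases $q=2$ (which recovers the binary Plotkin bound) and $q=3$.
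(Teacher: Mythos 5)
Your proof is correct, and it is in fact more self-contained than the paper's treatment. The paper does not prove this lemma at all: it imports it from Chiang--Wolf, and its implicit derivation (see the remark preceding the Gilbert--Varshamov subsection) is exactly your opening averaging step, $\tfrac{M(M-1)}{2}d_L\le\sum_{i<j}d_L(\mathbf{p}_i,\mathbf{p}_j)$, combined with the Wyner--Graham inequality $\sum_{i<j}d_L(\mathbf{p}_i,\mathbf{p}_j)\le \tfrac{SM^2n}{2q}$ (Lemma~\ref{lem2}) and the evaluation of $S$ in Lemma~\ref{lem1}, both of which are themselves only cited. What you add is a genuine proof of that cited inequality: you reduce it to maximizing the columnwise quadratic form $Q(\mathbf{m})=\sum_{a,b}m_am_b\,d_L(a,b)$ over the simplex and show, via the circulant/DFT diagonalization and the identity $\sum_{i=1}^{N}\sin((2i-1)\theta)=\sin^2(N\theta)/\sin\theta$, that every non-principal eigenvalue of the Lee-distance matrix equals $-\sin^2(\lfloor q/2\rfloor\pi j/q)/\sin^2(\pi j/q)\le 0$, so the uniform column distribution is optimal and $Q(\mathbf{m})\le M^2\lambda_0/q$ with $\lambda_0=S$; summed over unordered pairs and over the $n$ coordinates this is precisely Lemma~\ref{lem2}, and your evaluations $\lambda_0=(q^2-1)/4$ (odd $q$) and $q^2/4$ (even $q$) match Lemma~\ref{lem1}. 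Your route buys a rigorous justification of exactly the step that, as the paper's own remark after Theorem~\ref{thm3} points out, was argued incorrectly in \cite{GA} (where a bimodal column distribution was wrongly claimed to maximize the column contribution); the paper's route buys brevity by citation. Your closing observations --- that $q\mid M$ is not needed because $\mathbf{v}^{\!\top}\mathbf{D}\mathbf{v}\le0$ holds for all $\mathbf{v}\perp\mathbf{1}$, and that the even-$j$ eigenvalues vanish for even $q$ --- are also correct.
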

The following lemmas are used in the derivation of Plotkin-like bound for irregular Lee-distance codes in Section \ref{General  Results on the Optimal Redundancy}.
\begin{lem}[\!\cite{GW}]
	\label{lem_W1}
	 
	Let $x_i \in \mathbb{Z}_q = \{ 0,1,\ldots, q-1\}$ and let the sum of Lee distances from a symbol $x_i$ to all other symbols in $\mathbb{Z}_q$ be $S$, then for each $x_i \in \mathbb{Z}_q$ \[ S = \sum_{x_j = 0}^{q-1} d_L(x_i, x_j) = \begin{cases}
		\begin{aligned}
			&\frac{q^2}{4},  \text{ if } q \text{ is even}, \\
			&\frac{(q^2-1)}{4}, \text{ if } q \text{ is odd}.
		\end{aligned}
	\end{cases} \]
\end{lem}
\begin{lem}[\!\cite{GW}]
	\label{lem_W2}
	 
	Let  $\left\{ \mathbf{p}_i \right\}_{i=1}^{M} \subseteq  \mathbb{Z}_q^r $ be a Lee code of length $r$ and cardinality $M$. Then, $ \sum_{i,j: i<j} d_L(\mathbf{p}_i, \mathbf{p}_j) \le \frac{SM^2r}{2q}, $ where $ S = \sum_{x_j = 0}^{q-1} d_L(x_i, x_j)$.
\end{lem}
\section{General  Results on the Optimal Redundancy}
\label{General  Results on the Optimal Redundancy}
In this section, we first review the known results on the optimal redundancy of FCLCs. The relationship between FCLCs and irregular-Lee distance codes were recently introduced in \cite{GA}. Many of the concepts and definitions in \cite{GA} follow a structure similar to that of Lenz et al. \cite{Lenz} in their foundational work on FCCs, adapted in \cite{GA} to the Lee metric setting with appropriate modifications. We then propose a Plotkin-like bound for irregular-Lee distance codes. 
\subsection{A Connection between FCLCs and Irregular-Lee-Distance Codes }

Let \( \mathbf{u} \in \mathbb{Z}_q^k \) be the message and let \( f : \mathbb{Z}_q^k \to Im(f) = \{ f(\mathbf{u}) \mid \mathbf{u} \in \mathbb{Z}_q^k \} \) be a function computed on \( \mathbf{u} \) with expressiveness \( E = |Im(f)| \leq q^k \). The message is encoded via the encoding function
\[
\mathrm{Enc} : \mathbb{Z}_q^k \to \mathbb{Z}_q^{k+r}, \quad \mathrm{Enc}(\mathbf{u}) = (\mathbf{u}, p(\mathbf{u})),
\]
where \( p(\mathbf{u}) \in \mathbb{Z}_q^r \) is the redundancy vector and \( r \) is the redundancy. The resulting codeword \( \mathrm{Enc}(\mathbf{u}) \) is transmitted over a Lee channel, resulting in \( y \in \mathbb{Z}_q^{k+r} \) with \( d_L(\mathrm{Enc}(\mathbf{u}), y) \leq t \). The formal definition of FCLC is stated below.
\begin{defn}[Function-Correcting Lee Codes]
 An encoding function \( \mathrm{Enc} : \mathbb{Z}_q^k \to \mathbb{Z}_q^{k+r} \),  
\( \mathrm{Enc}(\mathbf{u}) = (\mathbf{u}, p(\mathbf{u})) \) defines a systematic function-correcting Lee code (FCLC for short) for the function \( f : \mathbb{Z}_q^k \to Im(f) \)  if for all \( \mathbf{u}_1, \mathbf{u}_2 \in \mathbb{Z}_q^k \) with \( f(\mathbf{u}_1) \neq f(\mathbf{u}_2) \), we have  
	\[
	d_L(\mathrm{Enc}(\mathbf{u}_1), \mathrm{Enc}(\mathbf{u}_2)) \geq 2t + 1. 
	\]
\end{defn}
Given an information vector $\mathbf{u} \in \mathbb{Z}_q^k$, the encoder computes the parity vector $\mathbf{p}(\mathbf{u}) \in \mathbb{Z}_q^r$ according to the prescribed construction and outputs the systematic codeword $ \mathrm{Enc}(\mathbf{u}) = (\mathbf{u}, p(\mathbf{u})) \in \mathbb{Z}_q^{k+r}$. The parity vector is designed such that for any pair $(\mathbf{u_1}, \mathbf{u_2})$ with $f(\mathbf{u_1}) \neq f(\mathbf{u_2})$, the corresponding codewords satisfy $d_L(\mathrm{Enc}(\mathbf{u}_1), \mathrm{Enc}(\mathbf{u}_2)) \geq 2t + 1$. 

Let $\mathbf{y}$ be the received vector over a $q$-ary symmetric channel under the Lee metric. The decoder performs minimum Lee distance decoding, i.e., $\hat{\mathbf{u}} = \displaystyle \arg\min_{\mathbf{u} \in \mathbb{Z}_q^k}
d_L(\mathbf{y}, \mathrm{Enc}(\mathbf{u}))$. The decoder then outputs the estimated function value $f(\hat{\mathbf{u}})$. Since the code satisfies $d_L(\mathrm{Enc}(\mathbf{u}_1), \mathrm{Enc}(\mathbf{u}_2)) \geq 2t + 1$ whenever  $f(\mathbf{u_1}) \neq f(\mathbf{u_2})$, it follows that for any error vector of Lee weight at most $t$, the minimum Lee distance decoder uniquely determines the correct function value.  
\begin{rem}
	By this definition, given any received vector \( \mathbf{y} \) obtained by at most \( t \) errors from \( Enc(\mathbf{u}) \), the receiver can uniquely recover \( f(\mathbf{u}) \) provided it has knowledge of the function \( f \) and the encoding function \(Enc\).
\end{rem}
The optimal redundancy of an FCLC for the function \( f \), which is the main parameter of interest in this paper, is defined next.

\begin{defn}[Optimal Redundancy]
 The optimal redundancy \( r_{L}^f(q, k, t) \) is defined as the smallest integer \( r \) for which there exists an FCLC with an encoding function \( \mathrm{Enc} : \mathbb{Z}_q^k \to \mathbb{Z}_q^{k+r} \) that enables recovery of \( f(\mathbf{u}) \) under t Lee errors.
\end{defn}
In order to determine the optimal redundancy of FCLCs, a connection between FCLCs and irregular-Lee-distance codes was established in \cite{GA}. Towards this, the definition of Lee distance requirement  matrix associated with a function \(f\) follows.

\begin{defn}[Distance Requirement Matrix] 
	\label{Def_DRM}
 Let $\mathbf{u}_1, \dots, \mathbf{u}_M \in \mathbb{Z}_{q}^{k}$. The distance requirement matrix (DRM for short)	$\mathbf{D}_f(t, \mathbf{u}_1, \dots, \mathbf{u}_M)$ of a function $f$ for $t$- error correction is defined as the $M \times M$ matrix with entries
	\[
	[\!\mathbf{D}_f\!(t,\! \mathbf{u}_1, \dots, \mathbf{u}_M)\!]_{ij}=[\mathbf{D}]_{ij}=\!
	\begin{cases}
		\begin{aligned}
			&\![2t\! +\!1\!-\!d_{L}\!(\!\mathbf{u}_i,\! \mathbf{u}_j\!)\!]^+\!, \text{if }\! f(\mathbf{u}_i)\! \neq\! f(\mathbf{u}_j), \\
			&0, \quad \text{otherwise}.
		\end{aligned}
	\end{cases}
	\]
\end{defn}
Irregular-Lee distance codes are defined formally as follows.
\begin{defn}[$\mathbf{D}_L$-code]
	\label{D-code}
	 Let $\mathbf{D} \in \mathbb{N}_0^{M \times M}$ and $\mathcal{P} = \{ \mathbf{p}_1, \mathbf{p}_2, \ldots, \mathbf{p}_M \} \subseteq \mathbb{Z}_q^r$ be a code of length $r$ and cardinality $M$. Then, $\mathcal{P} = \{ \mathbf{p}_1, \mathbf{p}_2, \ldots, \mathbf{p}_M \}$ is a $\mathbf{D}$-irregular-Lee-distance code ($\mathbf{D}_L$-code for short), if there exists an ordering of the codewords of $\mathcal{P}$ such that $d_{L}(\mathbf{p}_i, \mathbf{p}_j) \geq [\mathbf{D}]_{ij}$ for all $i, j \in [M]$.
\end{defn}
The smallest integer $r$ such that there exists a $\mathbf{D}_{L}$-code of length $r$ is denoted by $N_{L}(\mathbf{D})$. If $[\mathbf{D}]_{ij} = D$ for all $i \neq j$, we simply write $N_{L}(M,D)$.
By definition, $\mathbf{D}_{L}$-code imposes individual Lee distance constraints between every pair of codewords.
For two distinct information vectors $\mathbf{u}_i$ and $\mathbf{u}_j$, the corresponding parity vectors $\mathbf{p}_i$ and $\mathbf{p}_j$ must satisfy $d_{L}(\mathbf{p}_i, \mathbf{p}_j) \geq [\mathbf{D}]_{ij}$, where the entry $[\mathbf{D}]_{ij}$ is determined by the Lee distance $d_{L}(\mathbf{u}_i, \mathbf{u}_j)$, and whether $f(\mathbf{u}_i) = f(\mathbf{u}_j)$ or $f(\mathbf{u}_i) \neq f(\mathbf{u}_j)$. The distance requirement between two distinct parity vectors is zero only if the corresponding information vectors evaluate to the same function value, or information vectors with different function values but separated by a Lee distance of at least $2t+1$. In all other cases, the DRM assigns a strictly positive distance requirement. Thus, DRM is not arbitrarily chosen. It is systematically derived from the function-separation requirement $d_{L}((\mathbf{u}_i, \mathbf{p}_i), (\mathbf{u}_j, \mathbf{p}_j)) \ge 2t+1$ whenever $f(\mathbf{u}_i) \neq f(\mathbf{u}_j)$. Therefore, the DRM specifies the Lee distance constraint between each pair of parity vectors to guarantee correction of up to $t$ errors in the function value. This differs fundamentally from classical(regular) minimum-distance codes. In a regular distance systematic code, every pair of distinct codewords must satisfy a uniform minimum distance constraint $d_{min}$. In contrast, for irregular distance codes arising in the function-correcting setting, each distinct pair of parity vectors must satisfy its own specific distance requirement as prescribed by the DRM. Accordingly, the systematic  code design problem differs in nature: For regular distance codes, the objective is to select $M$ parity vectors such that every codeword thus formed is separated by at least $d_{min}$. For function-correcting codes, the objective is to select $M$ parity vectors that satisfy the individual pairwise distance requirements specified by the DRM. Since only certain parts require large separation, the irregular structure often allows strictly smaller redundancy than a regular Lee code with the same minimum distance. An example for a distance requirement matrix is given below.
\begin{example}
	\label{Ex_DRM}
 Let $\mathbf{u}_i = (\mathbf{u}_{i1}, \mathbf{u}_{i2}) \in \mathbb{Z}_5^2, \;\forall i \in [5]$ and $f(\mathbf{u}_i) = \mathrm{w_L}(\mathbf{u}_{i2})$. The distance requirement matrix $\mathbf{D}_f(t=1, \mathbf{u}_1, \ldots, \mathbf{u}_5)$ for $f(\mathbf{u}_1)= 0, f(\mathbf{u}_2)= 1, f(\mathbf{u}_3)= 2, f(\mathbf{u}_4)= 2, f(\mathbf{u}_5)= 1$ with $ \mathbf{u}_1 = 00, \mathbf{u}_2 = 01, \mathbf{u}_3 = 02, \mathbf{u}_4 = 03, \mathbf{u}_5 = 04$, respectively and $t=1$  is a $5 \times 5$ matrix  given by, 
	
			$$\mathbf{D}_f(t=1, \mathbf{u}_1, \ldots, \mathbf{u}_5)=\begin{blockarray}{c@{\hskip 4pt}*{5}{c}}
				& \mathbf{u}_1 & \mathbf{u}_2 & \mathbf{u}_3 & \mathbf{u}_4 & \mathbf{u}_5 \\
				\begin{block}{c@{\hskip 4pt}[*{5}{c}]}
					\mathbf{u}_1 \text{ } & 0 & 2 & 1 & 1 & 2 \\
					\mathbf{u}_2 \text{ } & 2 & 0 & 2 & 1 & 0 \\
					\mathbf{u}_3 \text{ } & 1 & 2 & 0 & 0 & 1 \\
					\mathbf{u}_4 \text{ } & 1 & 1 & 0 & 0 & 2 \\
					\mathbf{u}_5 \text{ } & 2 & 0 & 1 & 2 & 0 \\
				\end{block}
			\end{blockarray}$$
	
\end{example}
This matrix is clearly non-uniform, some parity pairs require Lee distance $2$, some require Lee distance $1$, and others require no separation at all. This non-uniform structure clearly distinguishes irregular Lee distance codes from regular Lee codes, where every distinct pair must satisfy the same minimum distance requirement. For comparison, a regular Lee code with minimum distance $d_{min}=3$ would require $[\mathbf{D}]_{ij} \ge 3$, if $i \neq j$ and $[\mathbf{D}]_{ij}=0$, if $i = j$. Such a distance requirement matrix $\mathbf{D}(t=1, \mathbf{u}_1, \ldots, \mathbf{u}_5)$ is shown next, where every distinct pair of codewords $(\mathbf{c}_{i}, \mathbf{c}_{j})$, $\forall \mathbf{c}_{i}, \mathbf{c}_{j} \in \mathcal{C}$ must satisfy the same distance constraint.
$$\mathbf{D}(t=1, \mathbf{u}_1, \ldots, \mathbf{u}_5)=\begin{blockarray}{c@{\hskip 4pt}*{5}{c}}
			& \mathbf{c}_1 & \mathbf{c}_2 & \mathbf{c}_3 & \mathbf{c}_4 & \mathbf{c}_5 \\
			\begin{block}{c@{\hskip 4pt}[*{5}{c}]}
				\mathbf{c}_1 \text{ } & 0 & 3 & 3 & 3 & 3 \\
				\mathbf{c}_2 \text{ } & 3 & 0 & 3 & 3 & 3 \\
				\mathbf{c}_3 \text{ } & 3 & 3 & 0 & 3 & 3 \\
				\mathbf{c}_4 \text{ } & 3 & 3 & 3 & 0 & 3 \\
				\mathbf{c}_5 \text{ } & 3 & 3 & 3 & 3 & 0 \\
			\end{block}
		\end{blockarray}$$
 A $\mathbf{D}_{L}$-code for Example~\ref{Ex_DRM} is given next.
\begin{example}
	\label{Ex_D_code}
 Consider $\mathcal{P} = \{ p_1 = 0, \; p_2 = 2, \; p_3 = 4, \; p_4 = 1, \; p_5 =3 \}$.  It can be easily verified that, for this $\mathcal{P}$ taken in the same order, the condition  $d_{L}(p_i, p_j) \geq [\mathbf{D}]_{ij}$, $\forall i, j \in [5]$ is satisfied for the distance requirement matrix in Example~\ref{Ex_DRM}. Therefore, $\mathcal{P}$ is a $\mathbf{D}_{L}$-code of length $r = 1$.
\end{example}

Given these definitions, we proceed to establish a link between FCLCs and irregular-Lee-distance codes. 

\begin{thm}[\!\!\cite{GA}]
 For any function $f : \mathbb{Z}_q^k \rightarrow \text{Im}(f)$, 
	$
	r_{L}^f(q, k, t) = N_{L}\left(\mathbf{D}_f(t, \mathbf{u}_1, \dots, \mathbf{u}_{q^k})\right),
	$
	where $\{\mathbf{u}_1, \dots, \mathbf{u}_{q^k}\} = \mathbb{Z}_q^k$ denotes the set of all $q$-ary vectors of length $k$.
\end{thm}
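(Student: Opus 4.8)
The plan is to establish the claimed equality $r_L^f(q,k,t) = N_L(\mathbf{D}_f(t,\mathbf{u}_1,\dots,\mathbf{u}_{q^k}))$ by two inequalities, exhibiting a correspondence between FCLC encoders and $\mathbf{D}_L$-codes for the distance requirement matrix $\mathbf{D}_f$. Fix an enumeration $\mathbb{Z}_q^k = \{\mathbf{u}_1,\dots,\mathbf{u}_{q^k}\}$ and write $\mathbf{D} = \mathbf{D}_f(t,\mathbf{u}_1,\dots,\mathbf{u}_{q^k})$ for brevity.

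\emph{Step 1 (FCLC $\Rightarrow$ $\mathbf{D}_L$-code, giving $N_L(\mathbf{D}) \le r_L^f$).} Suppose $\mathrm{Enc}:\mathbb{Z}_q^k \to \mathbb{Z}_q^{k+r}$, $\mathrm{Enc}(\mathbf{u}) = (\mathbf{u}, p(\mathbf{u}))$, is an FCLC of redundancy $r = r_L^f(q,k,t)$. Set $\mathbf{p}_i = p(\mathbf{u}_i) \in \mathbb{Z}_q^r$. I claim $\mathcal{P} = \{\mathbf{p}_1,\dots,\mathbf{p}_{q^k}\}$ (in this order) is a $\mathbf{D}_L$-code. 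For any $i,j$: if $f(\mathbf{u}_i) = f(\mathbf{u}_j)$ then $[\mathbf{D}]_{ij} = 0$ and the constraint $d_L(\mathbf{p}_i,\mathbf{p}_j) \ge 0$ is vacuous. If $f(\mathbf{u}_i) \ne f(\mathbf{u}_j)$, the FCLC property gives $d_L((\mathbf{u}_i,\mathbf{p}_i),(\mathbf{u}_j,\mathbf{p}_j)) \ge 2t+1$; since Lee distance is additive over coordinates, $d_L((\mathbf{u}_i,\mathbf{p}_i),(\mathbf{u}_j,\mathbf{p}_j)) = d_L(\mathbf{u}_i,\mathbf{u}_j) + d_L(\mathbf{p}_i,\mathbf{p}_j)$, hence $d_L(\mathbf{p}_i,\mathbf{p}_j) \ge 2t+1 - d_L(\mathbf{u}_i,\mathbf{u}_j)$. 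Since Lee distance is nonnegative, $d_L(\mathbf{p}_i,\mathbf{p}_j) \ge [2t+1-d_L(\mathbf{u}_i,\mathbf{u}_j)]^+ = [\mathbf{D}]_{ij}$. Thus a $\mathbf{D}_L$-code of length $r_L^f$ exists, so $N_L(\mathbf{D}) \le r_L^f(q,k,t)$. One subtlety to note: the ordering of codewords in the $\mathbf{D}_L$-code must match the indexing of $\mathbf{D}$, which is automatic here because $\mathbf{p}_i$ is indexed by the same $\mathbf{u}_i$ used to define row/column $i$ of $\mathbf{D}_f$.

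\emph{Step 2 ($\mathbf{D}_L$-code $\Rightarrow$ FCLC, giving $r_L^f \le N_L(\mathbf{D})$).} Conversely, let $\mathcal{P} = \{\mathbf{p}_1,\dots,\mathbf{p}_{q^k}\} \subseteq \mathbb{Z}_q^r$ be a $\mathbf{D}_L$-code of length $r = N_L(\mathbf{D})$, with the ordering realizing $d_L(\mathbf{p}_i,\mathbf{p}_j) \ge [\mathbf{D}]_{ij}$. Define $\mathrm{Enc}(\mathbf{u}_i) = (\mathbf{u}_i, \mathbf{p}_i)$. For any $\mathbf{u}_i, \mathbf{u}_j$ with $f(\mathbf{u}_i) \ne f(\mathbf{u}_j)$, additivity of Lee distance gives $d_L(\mathrm{Enc}(\mathbf{u}_i),\mathrm{Enc}(\mathbf{u}_j)) = d_L(\mathbf{u}_i,\mathbf{u}_j) + d_L(\mathbf{p}_i,\mathbf{p}_j) \ge d_L(\mathbf{u}_i,\mathbf{u}_j) + [2t+1-d_L(\mathbf{u}_i,\mathbf{u}_j)]^+ \ge 2t+1$, where the last step is the elementary fact $a + [c-a]^+ \ge c$ for all reals $a,c$. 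Hence $\mathrm{Enc}$ is an FCLC of redundancy $N_L(\mathbf{D})$, so $r_L^f(q,k,t) \le N_L(\mathbf{D})$. Combining the two inequalities yields the theorem.

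The argument is essentially a bookkeeping exercise once additivity of the Lee distance over coordinate blocks and the algebraic identity $a + [c-a]^+ \ge c$ are in hand; the only point requiring mild care is matching the codeword ordering in Definition of the $\mathbf{D}_L$-code with the row/column indexing of $\mathbf{D}_f$, which I would make explicit. There is no substantive obstacle — this is the Lee-metric analogue of the Lenz et al. correspondence \cite{Lenz} between FCCs and irregular-distance codes, and the proof transfers verbatim with the Hamming distance replaced by the Lee distance.
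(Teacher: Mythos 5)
Your proof is correct and is essentially the standard argument: the paper itself states this theorem without proof (citing \cite{GA}), and the proof there, like the Hamming-metric original in \cite{Lenz}, is exactly your two-inequality correspondence between FCLC encoders and $\mathbf{D}_L$-codes via additivity of the Lee distance over the message and parity blocks. No gaps.
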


\subsection{Simplified Bounds on Optimal Redundancy}
In this subsection, we review existing bounds on optimal redundancy \( r_{L}^f(q, k, t) \) of FCLCs given in \cite{GA} and also propose a Plotkin-like bound for irregular-Lee-distance codes.
\begin{cor}[\!\!\cite{GA}]
 Let $\mathbf{u}_1, \ldots, \mathbf{u}_M \in \mathbb{Z}_q^k$ be arbitrary different vectors. Then, the redundancy of an FCLC is at least
	\[
	r_{L}^f(q, k, t) \geq N_{L}\left(\mathbf{D}_f(t, \mathbf{u}_1, \ldots, \mathbf{u}_M)\right).
	\]
	
	For any function $f$ with $|\text{Im}(f)| \geq 2$,
$
	r_{L}^f(q, k, t) \geq  N_{L}(2, 2t) = \left\lceil \frac{2t}{\lfloor \frac{q}{2} \rfloor} \right\rceil.
	$
\end{cor}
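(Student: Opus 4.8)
The plan is to establish $r_L^f(q,k,t)\ge N_L(2,2t)=\bigl\lceil 2t/\lfloor q/2\rfloor\bigr\rceil$ in two stages: first reduce to a $2\times 2$ distance requirement matrix whose off-diagonal entry is as large as possible, and then evaluate $N_L(2,2t)$ explicitly via matching bounds.

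For the first stage, I would use that $|\mathrm{Im}(f)|\ge 2$ means $f$ is non-constant, so there exist $\mathbf{a},\mathbf{b}\in\mathbb{Z}_q^k$ with $f(\mathbf{a})\ne f(\mathbf{b})$. The key observation is that $\mathbb{Z}_q^k$ is connected under single-coordinate $\pm 1$ steps, and any two vectors differing in this way are at Lee distance $1$; hence there is a path $\mathbf{a}=\mathbf{v}_0,\mathbf{v}_1,\dots,\mathbf{v}_m=\mathbf{b}$ with $d_L(\mathbf{v}_i,\mathbf{v}_{i+1})=1$ for every $i$. Since $f(\mathbf{v}_0)\ne f(\mathbf{v}_m)$, some consecutive pair has distinct function values; taking $\mathbf{u}_1,\mathbf{u}_2$ to be this pair, Definition~\ref{def6} gives that $\mathbf{D}_f(t,\mathbf{u}_1,\mathbf{u}_2)$ has off-diagonal entry $[2t+1-1]^+=2t$, i.e. it is exactly the matrix defining $N_L(2,2t)$. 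The first displayed inequality of the corollary, applied with $M=2$, then yields $r_L^f(q,k,t)\ge N_L\bigl(\mathbf{D}_f(t,\mathbf{u}_1,\mathbf{u}_2)\bigr)=N_L(2,2t)$.

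For the second stage I would prove $N_L(2,2t)=\bigl\lceil 2t/\lfloor q/2\rfloor\bigr\rceil$ in both directions. Lower bound: if $\{\mathbf{p}_1,\mathbf{p}_2\}\subseteq\mathbb{Z}_q^r$ is a $\mathbf{D}_L$-code for the all-$2t$ off-diagonal matrix, then the inequality $d_L\le\lfloor q/2\rfloor\,d_H$ from Section~\ref{preliminaries} gives $2t\le d_L(\mathbf{p}_1,\mathbf{p}_2)\le\lfloor q/2\rfloor\,d_H(\mathbf{p}_1,\mathbf{p}_2)\le r\lfloor q/2\rfloor$, so $r\ge\bigl\lceil 2t/\lfloor q/2\rfloor\bigr\rceil$. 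Upper bound: with $r=\bigl\lceil 2t/\lfloor q/2\rfloor\bigr\rceil$, take $\mathbf{p}_1=\mathbf{0}$ and $\mathbf{p}_2$ the length-$r$ vector all of whose entries equal $\lfloor q/2\rfloor$; then $d_L(\mathbf{p}_1,\mathbf{p}_2)=r\lfloor q/2\rfloor\ge 2t$, so by Definition~\ref{D-code} this is a valid $\mathbf{D}_L$-code of length $r$. Combining the two directions gives the claimed closed form.

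I expect the first stage to be the only non-routine point. The strength of the bound rests on extracting the largest off-diagonal entry attainable from a $2\times 2$ requirement matrix, which is $2t$ and is realized precisely when two messages with differing $f$-values lie at minimum Lee distance $1$; the substance of the argument is the connectivity remark guaranteeing that such a unit-step pair exists for every non-constant $f$. Everything after that is bookkeeping with Definitions~\ref{def6} and~\ref{D-code} and the Lee/Hamming distance inequality.
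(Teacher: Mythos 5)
The paper does not prove this corollary---it is quoted from \cite{GA}---so there is no in-paper argument to compare against; judged on its own terms, your proof of the second inequality is correct and is the natural argument. The connectivity step (walking from $\mathbf{a}$ to $\mathbf{b}$ by unit Lee steps and extracting a consecutive pair where $f$ changes) correctly produces $\mathbf{u}_1,\mathbf{u}_2$ with $d_L(\mathbf{u}_1,\mathbf{u}_2)=1$ and hence the $2\times 2$ requirement matrix with off-diagonal entry $[2t+1-1]^+=2t$, and your two-sided evaluation of $N_L(2,2t)$ is sound: the lower bound follows from $d_L\le\lfloor q/2\rfloor d_H\le r\lfloor q/2\rfloor$, and the upper bound from the pair $\mathbf{0}$ and $\{\lfloor q/2\rfloor\}^r$, whose per-coordinate Lee weight is indeed $\lfloor q/2\rfloor$.

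The one thing you have not actually proved is the first displayed inequality of the corollary, $r_L^f(q,k,t)\ge N_L\bigl(\mathbf{D}_f(t,\mathbf{u}_1,\ldots,\mathbf{u}_M)\bigr)$, which you invoke with $M=2$ as the bridge from your $2\times 2$ matrix to the redundancy. It is part of the statement, so it needs its own (short) justification: by Theorem~1, $r_L^f(q,k,t)=N_L\bigl(\mathbf{D}_f(t,\mathbf{u}_1,\ldots,\mathbf{u}_{q^k})\bigr)$, and the parity vectors of an optimal FCLC restricted to any $M$ chosen messages form a $\mathbf{D}_L$-code for the corresponding $M\times M$ submatrix, so $N_L$ of the submatrix is at most $r_L^f$. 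Add that sentence and the argument is complete.
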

A simplified upper bound on \( r_f(q, k, t) \) by considering a representative subset of information vectors corresponding to distinct function values was given in \cite{GA}. Toward this, the definition of the Lee distance between two function values follows.
\begin{defn}[Function Distance]
 The Lee distance between two function values \( f_1, f_2 \in \mathrm{Im}(f) \) is defined as the minimum Lee distance between any pair of information vectors that evaluate to \( f_1 \) and \( f_2 \), i.e.,
	\[
	d_{L}^f(f_1, f_2) \triangleq \! \min_{\mathbf{u}_1, \mathbf{u}_2 \in \mathbb{Z}_q^k} \! d_{L}(\mathbf{u}_1, \mathbf{u}_2) \; \text{s.t.} \; f(\mathbf{u}_1) = f_1, f(\mathbf{u}_2) = f_2.
	\]
\end{defn}
Note that \( d_{L}^f(f_1, f_1) = 0 \), \( \forall f_1 \in \mathrm{Im}(f) \). Based on this, the definition of the Lee function-distance matrix of \( f \) follows.
\begin{defn}[Function-Distance Matrix ] 
	\label{Def_FDM}
 The function-distance matrix of a function \( f \) is an \( E \times E \) matrix denoted by \( \mathbf{D}_f(t, f_1, \ldots, f_E)\) with entries \( \left[ \mathbf{D}_f(t, f_1, \ldots, f_E) \right]_{ij} = \left[2t + 1 - d_{L}^f(f_i, f_j) \right]^+, \; \text{if } i \neq j\)
	and \(\left[ \mathbf{D}_f(t, f_1, \ldots, f_E) \right]_{ii} = 0\), for $t$ error correction.
\end{defn}
An example for a function distance matrix is given next.
\begin{example}
	\label{Ex_FDM}
Let $\mathbf{u}_i = (\boldsymbol{u}_{i1}, \boldsymbol{u}_{i2}) \in \mathbb{Z}_5^2$ and $f(\mathbf{u}_i) = \mathbf{u}_{i1}$,  $ \forall i \in [M]$, where $ M = 25$. For this function,   $Im(f) = \{0, 1, 2, 3, 4\}$. The corresponding function distance matrix for $t=1$ is given by, 
	\[
	\mathbf{D}_f(t=1, 0, 1, 2, 3, 4) =
	\begin{bmatrix}
		0 & 2 & 1 & 1 & 2 \\
		2 & 0 & 2 & 1 & 1 \\
		1 & 2 & 0 & 2 & 1 \\
		1 & 1 & 2 & 0 & 2 \\
		2 & 1 & 1 & 2 & 0 \\
	\end{bmatrix}
	.\]
\end{example}

\begin{thm}[\!\!\cite{GA}]
	\label{Thm_UB}
 For any arbitrary function $f : \mathbb{Z}_q^k \rightarrow \mathrm{Im}(f)$,
	$r_{L}^f(q, k, t) \leq N_{L}\left(\mathbf{D}_f(t, f_1, \ldots, f_E)\right). \quad$
\end{thm}

In certain cases, the bound provided in Theorem \ref{Thm_UB} is tight. One such significant case is described in the following corollary.
\begin{cor}[\!\!\cite{GA}]
	\label{Cor_OP}
  If there exists a set of representative information vectors $\mathbf{u}_1, \ldots, \mathbf{u}_E$ such that \( \{ f(\mathbf{u}_1), \ldots, f(\mathbf{u}_E) \} = \mathrm{Im}(f)
	\; \text{and} \; \mathbf{D}_f(t, \mathbf{u}_1, \ldots, \mathbf{u}_E) = \mathbf{D}_f(t, f_1, \ldots, f_E), \) then $ r_L^f(q, k,t) = \\ N_{L}(\mathbf{D}_f(t, f \ldots, f_E)). $
\end{cor}
Although the bound in Theorem \ref{Thm_UB} is not always tight, it is often more practical to work with the function distance matrix $\mathbf{D}_f(t, f_1, \ldots, f_E)$ rather than the full distance requirement matrix $D_f(t, \boldsymbol{u}_1, \ldots, \boldsymbol{u}_{q^k})$, especially when the number of distinct function values $E$ is small. Computing $N_{L}(\mathbf{D}_f(t, f_1, \ldots, f_E))$ is typically much more tractable in such cases.
\subsection{A variant of Plotkin Bound on \( N_{L}(\mathbf{D}) \)} 
 The traditional Plotkin bound for error-correcting codes in the Lee metric is derived under the requirement that the Lee distance between any two codewords must be at least the minimum distance of the code. In contrast, for function-correcting Lee codes (FCLCs), the requirement is relatively more relaxed: the Lee distance between two codewords needs to be at least the minimum distance only when the corresponding messages produce different function values. Consequently, bounds derived under the stricter constraints of classical error-correcting codes, such as the traditional Plotkin bound, are not directly applicable to FCLCs. This highlights the need for bounds specifically tailored to the FCLC setting. To clarify this point, we illustrate the difference through the following example.
 \begin{example}
 	Let \( \mathbf{u}=(u_1 u_2) \in \mathbb{Z}_5^2 \) be the message. Consider any binary function \( f : \mathbb{Z}_5^2 \to \{0,1\} \) and consider the following encoding.  $ \mathrm{Enc}(\boldsymbol{u}) = (\boldsymbol{u}, \{p_{\boldsymbol{u}}\}^{3} )$, where
 	\begin{equation*}
 		p_{\boldsymbol{u}} = \begin{cases}
 			\begin{aligned}
 				&2,  \text{              } f(\boldsymbol{u}) = 0, \\
 				&0,  \text{              } f(\boldsymbol{u}) = 1,
 			\end{aligned}
 		\end{cases}
 	\end{equation*}	
 	and \(\{p_{\boldsymbol{u}}\}^{3}\) stands for the  \( 3 \)-fold repetition of the parity symbol \(p_{\boldsymbol{u}}\). This encoding  results in an FCLC for the considered binary function for $t=3$. To prove this, we have to show that for all \( \mathbf{u}, \mathbf{u'} \in \mathbb{Z}_5^2 \) with \( f(\mathbf{u}) \neq f(\mathbf{u'}) \), the above encoding ensures \(d_L(\mathrm{Enc}(\mathbf{u}), \mathrm{Enc}(\mathbf{u'})) \geq 7 \). Whenever \( f(\mathbf{u}) \neq f(\mathbf{u'}) \), WLOG assume $f(\mathbf{u})=0$ and $f(\mathbf{u'})=1$. In this case,  we have $d_{L}(\mathrm{Enc}(\boldsymbol{u}), \mathrm{Enc}(\boldsymbol{u'})) = d_{L}(\boldsymbol{u}, \boldsymbol{u'}) + d_{L}(\{{p}_{\boldsymbol{u}}\}^{3}, \{{p}_{\boldsymbol{u'}}\}^{3}) \ge 1 + 3 \times d_L(2,0) = 1 +3 \times 2 =7$. Therefore, the above encoding yields an FCLC for any binary function \( f : \mathbb{Z}_5^2 \to \{0,1\} \) and $t=3$, achieving a redundancy $r=3$. 
 	
 	Now consider the traditional Plotkin bound for Lee metric given in Lemma $3$. For odd $q$, we have $d_L \le \frac{(k+r)(q^{2}-1)M}{4q(M-1)}$. Equivalently, $r \ge \frac{4q(M-1)d_L}{q^{2}-1)M} -k$. For $q=5$, $k=2$ and $t=3$, we have $M=25$ and $d_L=7$. Substituting these values we obtain $r \ge 3.6$. That is, by the traditional Plotkin bound, we need at least a redundancy of $r=4$ to ensure $3$ Lee error correction. However, for the binary function, we obtained FCLC ensuring $3$ Lee error correction with a lesser redundancy of $3$.     
 \end{example} Therefore, we extend the classical Plotkin bound to the setting of irregular Lee-distance codes, as stated in the following theorem.
\begin{thm}
	\label{Thm_PB}
	For any distance matrix $\mathbf{D} \in \mathbb{N}_{0}^{M \times M}$, \begin{equation}
		N_L(\mathbf{D}) \ge \begin{cases}
			\begin{aligned}
				&\frac{8}{M^2q}\sum_{i,j: i<j} [\mathbf{D}]_{ij},  \text{ if } q \text{ is even,} \\
				&\frac{8q}{M^2(q^2-1)}\sum_{i,j: i<j} [\mathbf{D}]_{ij},  \text{ if } q \text{ is odd}.
			\end{aligned}
		\end{cases}
		\label{Eq_PB1}
	\end{equation} 
\end{thm}

\begin{IEEEproof}
	Let $N_L(\mathbf{D}) = r$. Let $\left\{ \mathbf{p}_i \right\}_{i=1}^{M}$ be codewords of a $\mathbf{D}_L$-code of length $r$. Since $\left\{ \mathbf{p}_i \right\}_{i=1}^{M}$ form a $\mathbf{D}_L$-code, by definition we have $[\mathbf{D}]_{ij} \le d_L(\mathbf{p}_i, \mathbf{p}_j)  \quad \forall i, j$. Therefore, \begin{equation}
		\sum_{i,j: i<j} [\mathbf{D}]_{ij} \le \sum_{i,j: i<j} d_L(\mathbf{p}_i, \mathbf{p}_j). 
		\label{Eq_PB2}
	\end{equation}
	Combining Lemma~\ref{lem_W1} and Lemma~\ref{lem_W2}, we obtain \begin{equation}
		\sum_{i,j: i<j} d_L(\mathbf{p}_i, \mathbf{p}_j) \le \frac{SM^2r}{2q} = \begin{cases}
			\begin{aligned}
				&\frac{qM^2r}{8},  \text{ if } q \text{ is even,} \\
				&\frac{(q^2-1)M^2r}{8q},  \text{ if } q \text{ is odd}.
			\end{aligned}
		\end{cases}
		\label{Eq_PB3}
	\end{equation} 
	From (\ref{Eq_PB2}) and (\ref{Eq_PB3}),  \[ \sum_{i,j: i<j} [\mathbf{D}]_{ij} \le \begin{cases}
		\begin{aligned}
			&\frac{qM^2r}{8},  \text{ if } q \text{ is even,} \\
			&\frac{(q^2-1)M^2r}{8q},  \text{ if } q \text{ is odd}.
		\end{aligned}
	\end{cases}  \]
	Rearranging we obtain, \[ N_L(\mathbf{D}) \ge \begin{cases}
		\begin{aligned}
			&\frac{8}{M^2q}\sum_{i,j: i<j} [\mathbf{D}]_{ij},  \text{ if } q \text{ is even,} \\
			&\frac{8q}{M^2(q^2-1)}\sum_{i,j: i<j} [\mathbf{D}]_{ij},  \text{ if } q \text{ is odd}.
		\end{aligned}
	\end{cases}\] 
\end{IEEEproof}
\begin{rem}
For $q=2$ in (\ref{Eq_PB1}), we obtain a bound for irregular-distance codes in the Hamming metric, i.e., $ N(\mathbf{D}) \ge \frac{4}{M^2}\sum_{i,j: i<j} [\mathbf{D}]_{ij}$. Comparing this with the Plotkin-like bound proposed in \cite{Lenz} for irregular-distance codes in the Hamming metric, we observe that both bounds are the same for even values of $M$. For odd values of $M$, the bound in \cite{Lenz} is tighter than ours by a factor of $\frac{M^2}{M^2-1}$. For large $M$, both bounds become asymptotically equal. 
\end{rem}

For regular-distance codes with minimum Lee distance \( d_L \), the total pairwise distance satisfies  
\(
 \frac{M(M-1)}{2} d_L \le \sum_{i < j} [\mathbf{D}]_{ij}.
\)
This  yields the Plotkin-like bound stated in Lemma~\ref{PLAD}. 

\section{Explicit Code Constructions for Lee Weight, Lee weight distribution, and Modular sum Functions}
\label{Lee Functions}
In this section, we study three important classes of functions, namely the Lee weight function, the Lee weight distribution function, and the modular sum function. 
\subsection{Lee weight function}
The Lee weight of a vector provides a measure of its total deviation from the zero vector under the Lee metric, providing a natural measure of error magnitude.
\begin{defn}[Lee weight function]
 A Lee weight function is defined as  \( f(\boldsymbol{u}) = \mathrm{w_L}(\boldsymbol{u}) \), where \( \boldsymbol{u} \in \mathbb{Z}_q^k \text{ and } k \in \mathbb{N} \). 
\end{defn}
The expressiveness of the Lee weight function is given by,
\( E = \left| \mathrm{Im}(\mathrm{w_L}) \right| = k \left\lfloor \frac{q}{2} \right\rfloor + 1 \).
 For Lee weight function, a set of representative information vectors can be identified for which the distance requirement matrix coincides with the function distance matrix, as shown in Lemma \ref{Lem_Cor2}. For simplicity, throughout this section, we denote the function distance matrix \( \mathbf{D}_{\mathrm{w_L}}(t, f_1, \ldots, f_E) \) as \( \mathbf{D}_{\mathrm{w_L}}(E,t) \). 

\begin{lem}[\!\!\cite{GA}]
	\label{Lem_Cor2}
 Let \( f(\mathbf{u}) = \mathrm{w}_L(\mathbf{u}) \) be the Lee weight function on \( \mathbf{u} \in \mathbb{Z}_q^k \). Consider the set of \( E = k \left\lfloor \frac{q}{2} \right\rfloor + 1 \) representative information vectors
	\(
	(\mathbf{u}_1, \mathbf{u}_2, \ldots, \mathbf{u}_E) = (0^k, 0^{k-1}1, \ldots, 0^{k-1}\left\lfloor \tfrac{q}{2} \right\rfloor, 0^{k-2}1 \\ \left\lfloor \tfrac{q}{2} \right\rfloor \ldots, 0^{k-2}\left\lfloor \tfrac{q}{2} \right\rfloor\left\lfloor \tfrac{q}{2} \right\rfloor, 0^{k-3}1{\left\lfloor \tfrac{q}{2} \right\rfloor}^2 \ldots,{(\left\lfloor \tfrac{q}{2} \right\rfloor - 1)}{\left\lfloor \tfrac{q}{2} \right\rfloor}^{k-1} ,{\left\lfloor \tfrac{q}{2} \right\rfloor}^k)
	\)
	such that \( f(\mathbf{u}_i) = i - 1 \in \mathrm{Im}(f) \) for all \( i \in [E] \). Then, for this set of vectors, the distance requirement matrix and the function distance matrix are identical. Consequently, by Corollary \ref{Cor_OP}, the optimal redundancy of FCLCs satisfies
	$
	r_L^\mathrm{w}(q, k,t) = N_{L}(\mathbf{D}_{\mathrm{w}_L}(E,t)),
	$
	where \( \mathbf{D}_{\mathrm{w}_L}(E,t) \) denotes the function distance matrix for \( t \)-error correction, whose $(i,j)$-th entry is given by
	$$[\mathbf{D}_{\mathrm{w}_L}(E,t))]_{ij}=\begin{cases}
		0 & \text{ if } i=j\\
		[2t+1-|i-j|]^+ & \text{ if } i\neq j.
	\end{cases}$$
\end{lem}
The following two examples illustrate Lemma~\ref{Lem_Cor2} by demonstrating that the distance requirement matrix coincides with the function distance matrix.
\begin{example}[Illustration of Lemma~\ref{Lem_Cor2} for \( q = 5, k = 2, t = 1 \)]
Consider the Lee weight function over \( \mathbb{Z}_5^2 \). The image of the function is \(
	\mathrm{Im}(\mathrm{w}_L) = \{0, 1, 2, 3, 4\},\) Therefore,  the no. of representative vectors required is $5$. We choose \(\mathbf{u}_1 = 00,\; \mathbf{u}_2 = 01, \; \mathbf{u}_3 = 02,\; \mathbf{u}_4 = 12,\;\mathbf{u}_5 = 22 \). The corresponding Lee weights of these vectors are \( 0,1,2,3,4 \),  respectively, i.e., \( f(\mathbf{u}_i) = i - 1 \). For the chosen representative vectors $\{\mathbf{u}_1, \mathbf{u}_2, \mathbf{u}_3, \mathbf{u}_4, \mathbf{u}_5\}$ and \( t = 1 \), the distance requirement matrix is given by,
	\[
	\mathbf{D}_f(t=1, \mathbf{u}_1, \ldots, \mathbf{u}_5) =
	\begin{bmatrix}
		0 & 2 & 1 & 0 & 0 \\
		2 & 0 & 2 & 1 & 0 \\
		1 & 2 & 0 & 2 & 1 \\
		0 & 1 & 2 & 0 & 2 \\
		0 & 0 & 1 & 2 & 0 \\
	\end{bmatrix}
	.\]
	For function values $\{f(\mathbf{u}_1) =f_1, f(\mathbf{u}_2)=f_2, f(\mathbf{u}_3)=f_3, f(\mathbf{u}_4)=f_4, f(\mathbf{u}_5)=f_5\}$ and \( t = 1 \), the function distance matrix is given by,
	\[
	\mathbf{D}_f(t=1, f_1, \ldots, f_5) =
	\begin{bmatrix}
		0 & 2 & 1 & 0 & 0 \\
		2 & 0 & 2 & 1 & 0 \\
		1 & 2 & 0 & 2 & 1 \\
		0 & 1 & 2 & 0 & 2 \\
		0 & 0 & 1 & 2 & 0 \\
	\end{bmatrix}
	.\]
	The two matrices are identical, thereby validating Lemma~\ref{Lem_Cor2}. Hence,
	\(
	r_L^\mathrm{w}(5,2,1) = N_L\left( \mathbf{D}_{\mathrm{w_L}}(\mathrm{5},1) \right).
	\)
\end{example}
\begin{example}[Illustration of Lemma~\ref{Lem_Cor2} for \( q = 5, k = 3, t = 1 \)]
	Consider the Lee weight function over \( \mathbb{Z}_5^3 \). The image of the function, \( \mathrm{Im}(\mathrm{w}_L) = \{0, 1, 2, 3, 4, 5, 6\}, \) Therefore, the no. of representative vectors required is  \(7 \). We choose
	\(
	\mathbf{u}_1 = 000,\;
	\mathbf{u}_2 = 001,\;
	\mathbf{u}_3 = 002,\;
	\mathbf{u}_4 = 012,\;
	\)
	\(
	\mathbf{u}_5 = 022,\;
	\mathbf{u}_6 = 122,\;
	\mathbf{u}_7 = 222.
	\)
	The corresponding Lee weights of these vectors are \( 0,1,2,3,4,5,6 \), i.e.,  \( f(\mathbf{u}_i) = i - 1 \).
	For the chosen representative vectors $\{\mathbf{u}_1, \mathbf{u}_2, \mathbf{u}_3, \mathbf{u}_4, \mathbf{u}_5, \mathbf{u}_6, \mathbf{u}_7\}$ and \( t = 1 \), the distance requirement matrix is given by,
	\[
	\mathbf{D}_f(t=1, \mathbf{u}_1, \ldots, \mathbf{u}_7) =
	\begin{bmatrix}
		0 & 2 & 1 & 0 & 0 & 0 & 0 \\
		2 & 0 & 2 & 1 & 0 & 0 & 0 \\
		1 & 2 & 0 & 2 & 1 & 0 & 0 \\
		0 & 1 & 2 & 0 & 2 & 1 & 0 \\
		0 & 0 & 1 & 2 & 0 & 2 & 1 \\
		0 & 0 & 0 & 1 & 2 & 0 & 2 \\
		0 & 0 & 0 & 0 & 1 & 2 & 0 \\
	\end{bmatrix}
	.\]
	For function values $\{f(\mathbf{u}_1)=f_1, \ldots, f(\mathbf{u}_7)=f_7\}$ and \( t = 1 \), the function distance matrix is given by,
	\[
	\mathbf{D}_f(t=1, f_1, \ldots, f_7) =
	\begin{bmatrix}
		0 & 2 & 1 & 0 & 0 & 0 & 0 \\
		2 & 0 & 2 & 1 & 0 & 0 & 0 \\
		1 & 2 & 0 & 2 & 1 & 0 & 0 \\
		0 & 1 & 2 & 0 & 2 & 1 & 0 \\
		0 & 0 & 1 & 2 & 0 & 2 & 1 \\
		0 & 0 & 0 & 1 & 2 & 0 & 2 \\
		0 & 0 & 0 & 0 & 1 & 2 & 0 \\
	\end{bmatrix}
	.\]
	The matrices match, validating Lemma~\ref{Lem_Cor2}  for this case as well. Therefore,
	$
	r_L^\mathrm{w}(5, 3,1) = N_L\left( \mathbf{D}_{\mathrm{w_L}}(\mathrm{7},1) \right).
$
\end{example}

We now present a construction for FCLCs designed for the Lee weight function and derive the corresponding redundancy. The construction is based on the idea of assigning the same parity symbol to all information vectors $\mathbf{u}$ that have the same $f(\boldsymbol{u}) \bmod q$.  Conversely, distinct parity symbols are assigned to sets of vectors with different $f(\boldsymbol{u}) \bmod q$.
\begin{cons}
	\label{Cons1}
 For \( q \ge 5, \boldsymbol{u} \in \mathbb{Z}_q^k \) and a function $f : \mathbb{Z}_q^k \rightarrow \mathrm{Im}(f)$. The encoding of $\boldsymbol{u}$ is given by
	$ \mathrm{Enc}(\boldsymbol{u}) = (\boldsymbol{u}, \{p_{\boldsymbol{u}}\}^{t})$ when $q$ is odd, and $ \mathrm{Enc}(\boldsymbol{u}) = (\boldsymbol{u}, \{p_{\boldsymbol{u}}\}^{t},p'_{\boldsymbol{u}} )$ when $q$ is even, where
	\begin{equation}
		\label{Eq_Cons1}
		p_{\boldsymbol{u}} =
		\left\{
		\begin{array}{ll}
			\displaystyle (2f(\boldsymbol{u})) \bmod q, & \text{if }  \text{ q is odd, }\\[6pt]
			\displaystyle (2f(\boldsymbol{u})) \bmod q, & \text{if } 0 \le f(\boldsymbol{u}) \bmod q \le  \left(  \frac{q}{2}  - 1 \right) \\& \text{and q is even, } \\[6pt]
			\displaystyle (2f(\boldsymbol{u}) + 1) \bmod q, & \text{if }  \frac{q}{2}  \le f(\boldsymbol{u}) \bmod q \le (q - 1) \\&\text{ and q is even, } 
		\end{array}
		\right.
	\end{equation}
	\begin{equation}
		\label{Eqb_Cons}
		p'_{\boldsymbol{u}} =
		\left\{
		\begin{array}{ll}
			\displaystyle \frac{q}{2}, & \text{if }  f(\boldsymbol{u}) \bmod q = (q-1),\\[6pt]
			\displaystyle 0, & \text{otherwise, } 
		\end{array}
		\right.
	\end{equation}
	and \(\{p_{\boldsymbol{u}}\}^{t}\) stands for the  \( t \)-fold repetition of the parity symbol \(p_{\boldsymbol{u}}\). 
\end{cons}
 \textit{Construction~\ref{Cons1}} can be used to design FCLCs for the Lee weight function as established in the following lemma.
\begin{lem}
	\label{Lem_Cons1_FCLC}
 Construction~\ref{Cons1} yields an FCLC for the Lee weight function, achieving a redundancy of $t$ when $q$ is odd and redundancy $t+1$ when $q$ is even, when $q \ge 5$ and $t \le \frac{q - 3}{2}$.
\end{lem}
Proof of Lemma \ref{Lem_Cons1_FCLC} is given in Appendix \ref{appendix:Lem_Cons1_FCLC}. 
Based on Lemma~\ref{Lem_Cor2}, we derive lower bounds on the optimal redundancy $r_L^\mathrm{w}(q, k,t)$, by applying the Plotkin-like bound from Theorem~\ref{Thm_PB}, as stated next in Corollary~\ref{Cor_PB_LW}.
\begin{cor}
	\label{Cor_PB_LW}
 For any $q \ge 5$ and \( t = \left\lfloor \frac{q-3}{2} \right\rfloor \), we have \\
\begin{equation*}
	\small
	r_L^\mathrm{w}(q, k,t) \ge \begin{cases}
		\begin{aligned}
			&\frac{8q}{E^2 (q^2-1) } t(2t+1)(E-\frac{2(t+1)}{3}),  \text{ if } q \text{ is odd,} \\
			&\frac{8}{E^2 q}  t(2t+1)(E-\frac{2(t+1)}{3}),  \text{ if } q \text{ is even},
		\end{aligned}
	\end{cases}
\end{equation*} 
	where \(E = k\left\lfloor \frac{q}{2} \right\rfloor +1 \).
\end{cor}

\begin{IEEEproof}
By Lemma~\ref{Lem_Cor2}, for the chosen representative information vectors, the distance requirement matrix and the function distance matrix are identical,  as shown in Figure~\ref{Fig_FDM_LW} for \( t = \frac{q-3}{2} \). 
\begin{figure*}[!htbp]
	\centering
	\setlength{\arraycolsep}{3pt}   
	\small 
	\[
	\begin{blockarray}{c@{\hskip 4pt} *{9}{c}}
		& 0 & 1 & \cdots & (2t-1) & 2t & (2t+1) & \cdots & k\!\lfloor \frac{q}{2}\!\rfloor -1 & k\!\lfloor \tfrac{q}{2} \rfloor \\
		\begin{block}{c@{\hskip 4pt}[*{9}{c}]}
			0   & 0 & (2t+1-1) & \cdots & 2 & 1 & 0 & \cdots & 0 & 0 \\[3pt]
			1   & (2t+1-1) & 0 & \cdots & 3 & 2 & 1 & \cdots & 0 & 0 \\[3pt]
			\vdots & \vdots & \vdots & \cdots & \ddots & \ddots & \ddots & \cdots & \vdots & \vdots \\[3pt]
			(2t-1) & 2 & 3 & \cdots & 0 & (2t+1-1) & (2t+1-2) & \cdots & 0 & 0 \\[3pt]
			2t & 1 & 2 & \cdots & (2t+1-1) & 0 & (2t+1-1) & \cdots & 1 & 0 \\[3pt]
			(2t+1) & 0 & 1 & \cdots & (2t+1-2) & (2t+1-1) & 0 & \cdots & 2 & 1 \\[3pt]
			\vdots & \vdots & \vdots & \cdots & \ddots & \ddots & \ddots & \cdots & \vdots & \vdots \\[3pt]
			k\!\lfloor \tfrac{q}{2} \rfloor -1 & 0 & 0 & \cdots & 0 & 1 & 2 & \cdots & 0 & (2t+1-1) \\[3pt]
			k\!\lfloor \tfrac{q}{2} \rfloor & 0 & 0 & \cdots & 0 & 0 & 1 & \cdots & (2t+1-1) & 0 \\
		\end{block}
	\end{blockarray}
	\]
	\caption{Function distance matrix $\mathbf{D}_{\mathrm{w}_L}(E,t)$ for $t = \frac{q-3}{2}$.}
	\label{Fig_FDM_LW}
\end{figure*}
	
From the matrix $\mathbf{D}_{\mathrm{w}_L}(E,t)$ in Figure~\ref{Fig_FDM_LW}, we observe that above the main diagonal, there are $(E-1)$ entries of $(2t+1-1)$, $(E-2)$ entries of $(2t+1-2)$, and so on, down to $(E-2t)$ entries of $1$. The sum of the entries above the main diagonal of matrix $\mathbf{D}_{\mathrm{w}_L}(E,t)$ is given by,   $\sum_{i,j: i<j} [\mathbf{D}]_{ij} = (E-1)(2t+1-1) + (E-2)(2t+1-2) + \ldots + (E-2t) (1)$. This expression simplifies to $\sum_{i,j: i<j} [\mathbf{D}]_{ij} = t(2t+1)(E-\frac{2(t+1)}{3})$. Substituting this result into (\ref{Eq_PB1}) gives the Plotkin-like bound for $q \ge 5$ and \( t = \left\lfloor \frac{q-3}{2} \right\rfloor \).
\end{IEEEproof}

The following examples illustrate Lemma \ref{Lem_Cons1_FCLC} for odd and even values of $q$.
\begin{example}
	\label{Ex6}
	 Consider \( f(\boldsymbol{u}) = \mathrm{w_L}(\boldsymbol{u}) \), where \( \boldsymbol{u} \in \mathbb{Z}_5^3 \). The expressiveness, \( E = k \left\lfloor \frac{q}{2} \right\rfloor + 1 = 7 \), i.e., \( f(\boldsymbol{u}) \in \{0,1,2,3,4,5,6\} \).
	By \textit{Construction~\ref{Cons1}}, the assigned parity symbols $p(\boldsymbol{u}) = p_{\boldsymbol{u}}$ corresponding to the function values given in the same order above for $t=1$ are \( 0, 2, 4, 1, 3, 0, 2\), respectively. The corresponding parity symbols are listed in Table~\ref{Tab_Ex_FCLC_O}. 
\end{example}

\begin{table*}[!htbp]
	\centering
\caption{FCLC for the Lee Weight function in Example~\ref{Ex6}.}
	\vspace{1mm}
	\setlength{\tabcolsep}{5.4pt}
	\renewcommand{\arraystretch}{1.48}
	\setlength{\arrayrulewidth}{0.2pt} 
	\begin{minipage}{0.5\textwidth}
		\centering
		\begin{tabular}{|c|c|c||c|c|c||c|c|c|}
			\hline
			$\mathbf{u}$ & $f(\mathbf{u})$ & $p(\boldsymbol{u})$  & $\mathbf{u}$ &
			$f(\mathbf{u})$ & $p(\boldsymbol{u})$  & $\mathbf{u}$ & $f(\mathbf{u})$  & $p(\boldsymbol{u})$ \\ \hline
			$000$ & 0 & 0 & $041$ & 2 & 4 & $132$ & 5 & 0 \\ \hline
			$001$ & 1 & 2 & $042$ & 3 & 1 & $133$ & 5 & 0 \\ \hline
			$002$ & 2 & 4 & $043$ & 3 & 1 & $134$ & 4 & 3 \\ \hline
			$003$ & 2 & 4 & $044$ & 2 & 4 & $140$ & 2 & 4 \\ \hline
			$004$ & 1 & 2 & $100$ & 1 & 2 & $141$ & 3 & 1 \\ \hline
			$010$ & 1 & 2 & $101$ & 2 & 4 & $142$ & 4 & 3 \\ \hline
			$011$ & 2 & 4 & $102$ & 3 & 1 & $143$ & 4 & 3 \\ \hline
			$012$ & 3 & 1 & $103$ & 3 & 1 & $144$ & 3 & 1 \\ \hline
			$013$ & 3 & 1 & $104$ & 2 & 4 & $200$ & 2 & 4 \\ \hline
			$014$ & 2 & 4 & $110$ & 2 & 4 & $201$ & 3 & 1 \\ \hline
			$020$ & 2 & 4 & $111$ & 3 & 1 & $202$ & 4 & 3 \\ \hline
			$021$ & 3 & 1 & $112$ & 4 & 3 & $203$ & 4 & 3 \\ \hline
			$022$ & 4 & 3 & $113$ & 4 & 3 & $204$ & 3 & 1 \\ \hline
			$023$ & 4 & 3 & $114$ & 3 & 1 & $210$ & 3 & 1 \\ \hline
			$024$ & 3 & 1 & $120$ & 3 & 1 & $211$ & 4 & 3 \\ \hline
			$030$ & 2 & 4 & $121$ & 4 & 3 & $212$ & 5 & 0 \\ \hline
			$031$ & 3 & 1 & $122$ & 5 & 0 & $213$ & 5 & 0 \\ \hline
			$032$ & 4 & 3 & $123$ & 5 & 0 & $214$ & 4 & 3 \\ \hline
			$033$ & 4 & 3 & $124$ & 4 & 3 & $220$ & 4 & 3 \\ \hline
			$034$ & 3 & 1 & $130$ & 3 & 1 & $221$ & 5 & 0 \\ \hline
			$040$ & 1 & 2 & $131$ & 4 & 3 & $222$ & 6 & 2 \\ \hline
		\end{tabular}
	\end{minipage}%
	\hfill
	\begin{minipage}{0.5\textwidth}
		\centering
		\begin{tabular}{|c|c|c||c|c|c||c|c|c|}
			\hline
			$\mathbf{u}$ & $f(\mathbf{u})$ & $p(\boldsymbol{u})$  & $\mathbf{u}$ &
			$f(\mathbf{u})$ & $p(\boldsymbol{u})$  & $\mathbf{u}$ & $f(\mathbf{u})$  & $p(\boldsymbol{u})$ \\ \hline
			$223$ & 6 & 2 & $314$ & 4 & 3 & $410$ & 2 & 4 \\ \hline
			$224$ & 5 & 0 & $320$ & 4 & 3 & $411$ & 3 & 1 \\ \hline
			$230$ & 4 & 3 & $321$ & 5 & 0 & $412$ & 4 & 3 \\ \hline
			$231$ & 5 & 0 & $322$ & 6 & 2 & $413$ & 4 & 3 \\ \hline
			$232$ & 6 & 2 & $323$ & 6 & 2 & $414$ & 3 & 1 \\ \hline
			$233$ & 6 & 2 & $324$ & 5 & 0 & $420$ & 3 & 1 \\ \hline
			$234$ & 5 & 0 & $330$ & 4 & 3 & $421$ & 4 & 3 \\ \hline
			$240$ & 3 & 1 & $331$ & 5 & 0 & $422$ & 5 & 0 \\ \hline
			$241$ & 4 & 3 & $332$ & 6 & 2 & $423$ & 5 & 0 \\ \hline
			$242$ & 5 & 0 & $333$ & 6 & 2 & $424$ & 4 & 3 \\ \hline
			$243$ & 5 & 0 & $334$ & 5 & 0 & $430$ & 3 & 1 \\ \hline
			$244$ & 4 & 3 & $340$ & 3 & 1 & $431$ & 4 & 3 \\ \hline
			$300$ & 2 & 4 & $341$ & 4 & 3 & $432$ & 5 & 0 \\ \hline
			$301$ & 3 & 1 & $342$ & 5 & 0 & $433$ & 5 & 0 \\ \hline
			$302$ & 4 & 3 & $343$ & 5 & 0 & $434$ & 4 & 3 \\ \hline
			$303$ & 4 & 3 & $344$ & 4 & 3 & $440$ & 2 & 4 \\ \hline
			$304$ & 3 & 1 & $400$ & 1 & 2 & $441$ & 3 & 1 \\ \hline
			$310$ & 3 & 1 & $401$ & 2 & 4 & $442$ & 4 & 3 \\ \hline
			$311$ & 4 & 3 & $402$ & 3 & 1 & $443$ & 4 & 3 \\ \hline
			$312$ & 5 & 0 & $403$ & 3 & 1 & $444$ & 3 & 1 \\ \hline
			$313$ & 5 & 0 & $404$ & 2 & 4 &  &  &  \\ \hline
		\end{tabular}
	\end{minipage}
\label{Tab_Ex_FCLC_O}
\end{table*}

\begin{example} 
	\label{Ex7} Consider \( f(\boldsymbol{u}) = \mathrm{w_L}(\boldsymbol{u}) \), where \( \boldsymbol{u} \in \mathbb{Z}_6^3 \). The expressiveness, \( E = k \left\lfloor \frac{q}{2} \right\rfloor + 1 = 10 \), i.e., \( f(\boldsymbol{u}) \in \{0,1,2,3,4,5,6,7,8,9\} \).
	By \textit{Construction~\ref{Cons1}}, the assigned parity vectors $p(\boldsymbol{u}) = (p_{\boldsymbol{u}}, p_{\boldsymbol{u'}})$ corresponding to the function values given in the same order above for $t=1$ are \( 00,  20, 40, 10, 30, 53, 00,  20, 40, 10\), respectively. The corresponding parity vectors are listed in Table~\ref{Tab_Ex_FCLC_LW_E}.	
\end{example}
\begin{table*}[!htbp]
	\centering
	\caption{FCLC for the Lee Weight function in Example~\ref{Ex7}.}
	\vspace{1mm}
	\setlength{\tabcolsep}{5.5pt}
	\renewcommand{\arraystretch}{1.48}
	\setlength{\arrayrulewidth}{0.2pt} 
	\begin{minipage}{0.5\textwidth}
		\centering
		\begin{tabular}{|c|c|c||c|c|c||c|c|c|}
			\hline
			$\mathbf{u}$ & $f(\mathbf{u})$ & $p(\boldsymbol{u})$  & $\mathbf{u}$ &
			$f(\mathbf{u})$ & $p(\boldsymbol{u})$  & $\mathbf{u}$ & $f(\mathbf{u})$  & $p(\boldsymbol{u})$ \\ \hline
			000 & 0 & 00 & 100 & 1 & 20 & 200 & 2 & 40 \\ \hline
			001 & 1 & 20 & 101 & 2 & 40 & 201 & 3 & 10 \\ \hline
			002 & 2 & 40 & 102 & 3 & 10 & 202 & 4 & 30 \\ \hline
			003 & 3 & 10 & 103 & 4 & 30 & 203 & 5 & 53 \\ \hline
			004 & 2 & 40 & 104 & 3 & 10 & 204 & 4 & 30 \\ \hline
			005 & 1 & 20 & 105 & 2 & 40 & 205 & 3 & 10 \\ \hline
			010 & 1 & 20 & 110 & 2 & 40 & 210 & 3 & 10 \\ \hline
			011 & 2 & 40 & 111 & 3 & 10 & 211 & 4 & 30 \\ \hline
			012 & 3 & 10 & 112 & 4 & 30 & 212 & 5 & 53 \\ \hline
			013 & 4 & 30 & 113 & 5 & 53 & 213 & 6 & 00 \\ \hline
			014 & 3 & 10 & 114 & 4 & 30 & 214 & 5 & 53 \\ \hline
			015 & 2 & 40 & 115 & 3 & 10 & 215 & 4 & 30 \\ \hline
			020 & 2 & 40 & 120 & 3 & 10 & 220 & 4 & 30 \\ \hline
			021 & 3 & 10 & 121 & 4 & 30 & 221 & 5 & 53 \\ \hline
			022 & 4 & 30 & 122 & 5 & 53 & 222 & 6 & 00 \\ \hline
			023 & 5 & 53 & 123 & 6 & 00 & 223 & 7 & 20 \\ \hline
			024 & 4 & 30 & 124 & 5 & 53 & 224 & 6 & 00 \\ \hline
			025 & 3 & 10 & 125 & 4 & 30 & 225 & 5 & 53 \\ \hline
			030 & 3 & 10 & 130 & 4 & 30 & 230 & 5 & 53 \\ \hline
			031 & 4 & 30 & 131 & 5 & 53 & 231 & 6 & 00 \\ \hline
			032 & 5 & 53 & 132 & 6 & 00 & 232 & 7 & 20 \\ \hline
			033 & 6 & 00 & 133 & 7 & 20 & 233 & 8 & 40 \\ \hline
			034 & 5 & 53 & 134 & 6 & 00 & 234 & 7 & 20 \\ \hline
			035 & 4 & 30 & 135 & 5 & 53 & 235 & 6 & 00 \\ \hline
			040 & 2 & 40 & 140 & 3 & 10 & 240 & 4 & 30 \\ \hline
			041 & 3 & 10 & 141 & 4 & 30 & 241 & 5 & 53 \\ \hline
			042 & 4 & 30 & 142 & 5 & 53 & 242 & 6 & 00 \\ \hline
			043 & 5 & 53 & 143 & 6 & 00 & 243 & 7 & 20 \\ \hline
			044 & 4 & 30 & 144 & 5 & 53 & 244 & 6 & 00 \\ \hline
			045 & 3 & 10 & 145 & 4 & 30 & 245 & 5 & 53 \\ \hline
			050 & 1 & 20 & 150 & 2 & 40 & 250 & 3 & 10 \\ \hline
			051 & 2 & 40 & 151 & 3 & 10 & 251 & 4 & 30 \\ \hline
			052 & 3 & 10 & 152 & 4 & 30 & 252 & 5 & 53 \\ \hline
			053 & 4 & 30 & 153 & 5 & 53 & 253 & 6 & 00 \\ \hline
			054 & 3 & 10 & 154 & 4 & 30 & 254 & 5 & 53 \\ \hline
			055 & 2 & 40 & 155 & 3 & 10 & 255 & 4 & 30 \\ \hline
		\end{tabular}
	\end{minipage}%
	\hfill
	\begin{minipage}{0.5\textwidth}
		\centering
		\begin{tabular}{|c|c|c||c|c|c||c|c|c|}
			\hline
			$\mathbf{u}$ & $f(\mathbf{u})$ & $p(\boldsymbol{u})$  & $\mathbf{u}$ &
			$f(\mathbf{u})$ & $p(\boldsymbol{u})$  & $\mathbf{u}$ & $f(\mathbf{u})$  & $p(\boldsymbol{u})$ \\ \hline
			300 & 3 & 10 & 400 & 2 & 40 & 500 & 1 & 20 \\ \hline
			301 & 4 & 30 & 401 & 3 & 10 & 501 & 2 & 40 \\ \hline
			302 & 5 & 53 & 402 & 4 & 30 & 502 & 3 & 10 \\ \hline
			303 & 6 & 00 & 403 & 5 & 53 & 503 & 4 & 30 \\ \hline
			304 & 5 & 53 & 404 & 4 & 30 & 504 & 3 & 10 \\ \hline
			305 & 4 & 30 & 405 & 3 & 10 & 505 & 2 & 40 \\ \hline
			310 & 4 & 30 & 410 & 3 & 10 & 510 & 2 & 40 \\ \hline
			311 & 5 & 53 & 411 & 4 & 30 & 511 & 3 & 10 \\ \hline
			312 & 6 & 00 & 412 & 5 & 53 & 512 & 4 & 30 \\ \hline
			313 & 7 & 20 & 413 & 6 & 00 & 513 & 5 & 53 \\ \hline
			314 & 6 & 00 & 414 & 5 & 53 & 514 & 4 & 30 \\ \hline
			315 & 5 & 53 & 415 & 4 & 30 & 515 & 3 & 10 \\ \hline
			320 & 5 & 53 & 420 & 4 & 30 & 520 & 3 & 10 \\ \hline
			321 & 6 & 00 & 421 & 5 & 53 & 521 & 4 & 30 \\ \hline
			322 & 7 & 20 & 422 & 6 & 00 & 522 & 5 & 53 \\ \hline
			323 & 8 & 40 & 423 & 7 & 20 & 523 & 6 & 00 \\ \hline
			324 & 7 & 20 & 424 & 6 & 00 & 524 & 5 & 53 \\ \hline
			325 & 6 & 00 & 425 & 5 & 53 & 525 & 4 & 30 \\ \hline
			330 & 6 & 00 & 430 & 5 & 53 & 530 & 4 & 30 \\ \hline
			331 & 7 & 20 & 431 & 6 & 00 & 531 & 5 & 53 \\ \hline
			332 & 8 & 40 & 432 & 7 & 20 & 532 & 6 & 00 \\ \hline
			333 & 9 & 10 & 433 & 8 & 40 & 533 & 7 & 20 \\ \hline
			334 & 8 & 40 & 434 & 7 & 20 & 534 & 6 & 00 \\ \hline
			335 & 7 & 20 & 435 & 6 & 00 & 535 & 5 & 53 \\ \hline
			340 & 5 & 53 & 440 & 4 & 30 & 540 & 3 & 10 \\ \hline
			341 & 6 & 00 & 441 & 5 & 53 & 541 & 4 & 30 \\ \hline
			342 & 7 & 20 & 442 & 6 & 00 & 542 & 5 & 53 \\ \hline
			343 & 8 & 40 & 443 & 7 & 20 & 543 & 6 & 00 \\ \hline
			344 & 7 & 20 & 444 & 6 & 00 & 544 & 5 & 53 \\ \hline
			345 & 6 & 00 & 445 & 5 & 53 & 545 & 4 & 30 \\ \hline
			350 & 4 & 30 & 450 & 3 & 10 & 550 & 2 & 40 \\ \hline
			351 & 5 & 53 & 451 & 4 & 30 & 551 & 3 & 10 \\ \hline
			352 & 6 & 00 & 452 & 5 & 53 & 552 & 4 & 30 \\ \hline
			353 & 7 & 20 & 453 & 6 & 00 & 553 & 5 & 53 \\ \hline
			354 & 6 & 00 & 454 & 5 & 53 & 554 & 4 & 30 \\ \hline
			355 & 5 & 53 & 455 & 4 & 30 & 555 & 3 & 10 \\ \hline
		\end{tabular}
	\end{minipage}
	\label{Tab_Ex_FCLC_LW_E}
\end{table*}

The following example shows that \textit{Construction~\ref{Cons1}} achieves optimal redundancy for the Lee weight function in the cases \( (q, k, t) = (5, 2, 1) \) and \( (q, k, t) = (7, 2, 2) \).
\begin{example}
\textit{Construction~\ref{Cons1}} achieves optimal redundancy for the Lee weight function when \( (q, k, t) = (5, 2, 1) \) and \( (q, k, t) = (7, 2, 2) \). For the given sets of parameters, \textit{Construction~\ref{Cons1}} yields redundancy values of \(  1 \) and \(  2 \), respectively. According to Corollary~\ref{Cor_PB_LW}, the corresponding lower bounds on redundancy for these parameters are \( 0.73 \) and \( 1.19 \), respectively. Since the redundancy achieved by Construction~\ref{Cons1} matches the smallest integer greater than or equal to the theoretical lower bound in both cases, Construction~\ref{Cons1} is optimal for these parameter settings.
\end{example} 

\subsection{Lee Weight Distribution Function}
We now introduce the Lee weight distribution function, an important function that helps characterize the weight distribution of codewords under the Lee metric.
\begin{defn}[Lee Weight Distribution Function]
 A Lee weight distribution function is defined as  \(f(\boldsymbol{u}) = \Delta_T(\boldsymbol{u}) \triangleq \left\lfloor \frac{\mathrm{w_L}(\boldsymbol{u})}{T} \right\rfloor\), where \( \boldsymbol{u} \in \mathbb{Z}_q^k \) and $T, k \in \mathbb{N}$. 
\end{defn}

For simplicity, we assume that \( T \) divides \( k \left\lfloor \frac{q}{2} \right\rfloor + 1 \). Under this condition, the number of distinct function values is given by \( E = \frac{k\left\lfloor \frac{q}{2} \right\rfloor+1}{T} \). This function defines a step threshold function, based on the Lee weight of \( \boldsymbol{u} \), with \( E - 1 \) uniform steps. The function value increments by one at every multiple of $T$. An upper bound on redundancy of FCLC for this function is obtained by giving an explicit FCLC construction. For a binary function, a set of representative information vectors can be explicitly identified for which the distance requirement matrix and the function distance matrix are the same. In this case, a lower bound on redundancy is obtained. \textit{Construction~\ref{Cons1}} can be used to design FCLCs for the Lee weight distribution function also, as shown in the following lemma.
\begin{lem}
	\label{Lem_Cons1_LWD}
	 \textit{Construction~\ref{Cons1}} yields an FCLC for the Lee weight distribution function for $T \ge 2$, under the following conditions.
	\begin{itemize}
		\item \textbf{Odd $q$:} An FCLC with a redundancy of $t$, provided that $q \ge 5$ and $t \le \frac{(q-3)T}{2}-1$.
		\item  \textbf{Even $q$:} An FCLC with a redundancy of $t+1$, provided that $q \ge 6$ and $t \le min(\frac{(q-4)T}{2}-1, \frac{q}{2})$.
	\end{itemize}
\end{lem}
Proof of Lemma \ref{Lem_Cons1_LWD} is given in Appendix \ref{appendix:Lem_Cons1_LWD}. The following example illustrates Lemma \ref{Lem_Cons1_LWD} for the Lee weight distribution function.
\begin{example}
	 Consider \( f(\boldsymbol{u}) = \Delta_T(\boldsymbol{u}) \), where \( \boldsymbol{u} \in \mathbb{Z}_5^7 \) and $T=3$. The expressiveness $E = \frac{k\left\lfloor \frac{q}{2} \right\rfloor+1}{T} = 5$, i.e., \( f(\boldsymbol{u}) \in \{0,1,2,3,4\} \).
	By \textit{Construction~\ref{Cons1}}, the assigned parity symbols corresponding to the function values given in the same order above for $t=1$ are
	\( 0,  2, 4, 1, 3\), respectively.
\end{example}
The expressiveness of the Lee weight distribution function is given by \( E = \frac{k\left\lfloor \frac{q}{2} \right\rfloor+1}{T} \). For a binary Lee weight distribution function, $E$ reduces to $2$, i.e., $ \Delta_T(\boldsymbol{u}) \in \{0,1\}$.  Under this condition, we derive the following lower bound on the redundancy of an FCLC for the Lee weight distribution function as stated in the subsequent corollary.

\begin{cor}
	\label{Cor_PB_LWD}
    For a binary Lee weight distribution function, we have
    \begin{equation*}
    	r_L^\Delta(q, k,t) \ge
    	\left\{
    	\begin{array}{ll}
    		\displaystyle \frac{4qt}{(q^2-1)}, & \text{odd q,} \\[6pt]
    		\displaystyle \frac{4t}{q}, & \text{even q.} \\[6pt]
    	\end{array}
    	\right.
    \end{equation*}
\end{cor}
\begin{IEEEproof}
 A set of representative information vectors exists for a binary Lee weight distribution function. The representative vectors are any two vectors $\boldsymbol{u}_1$ and $\boldsymbol{u}_2 \in \mathbb{Z}_{q}^k$ with Lee weights $T-1$ and $T$, respectively. This implies that \( d_L^f(f_1, f_2) = d_L(\mathbf{u}_1, \mathbf{u}_2) =1 \), thus the function distance matrix and the distance requirement matrix have identical entries for the chosen representatives $\boldsymbol{u}_1$ and $\boldsymbol{u}_2$. That is \(
 \mathbf{D}_f(t, \mathbf{u}_1, \mathbf{u}_2) = \mathbf{D}_f(t, f_1, f_2) =
 \begin{bmatrix}
 	0 & 2t  \\
 	2t & 0  \\
 \end{bmatrix}
 \text{ for any } t\) and $\sum_{i,j: i<j} [\mathbf{D}]_{ij} = 2t$. Therefore, the redundancy is exactly given by $ r_L^\Delta(q, k,t) = N_{L}\left(\mathbf{D}_f(t, f_1, f_2)\right)$ by Corollary \ref{Cor_OP}. The lower bounds on optimal redundancy are then obtained by  the direct simplification of the Plotkin-like bound of Theorem~\ref{Thm_PB} for odd and even values of $q \ge 6$.
	
\end{IEEEproof}
The next example shows one such case of Lee weight distribution function with $E = 2$, where $ r_L^\Delta(q, k,t) = N_{L}\left(\mathbf{D}_f(t, f_1, f_2)\right)$ for $t=1$.
\begin{example}
	For the Lee weight distribution function, $E = 2$ for $(q, k, T) = \{6, 3, 5\} \text{ and } (q, k, T) = \{7, 3, 5\}$. Let $\mathbf{u}_1 = 220$ and $\mathbf{u}_2 = 221$ be the two representative information vectors with function values $4$ and $5$, respectively. Since  \( d_L^f(f_1, f_2) = d_L(\mathbf{u}_1, \mathbf{u}_2) =1 \) for the chosen vectors, the function distance matrix and the distance requirement matrix are identical, and is given by, 
	\[
	\mathbf{D}_f(1, \mathbf{u}_1, \mathbf{u}_2) = \mathbf{D}_f(1, f_1, f_2) =
	\begin{bmatrix}
		0 & 2  \\
		2 & 0  \\
	\end{bmatrix}
	\text{ for } t=1.\]
\end{example}

The following example shows cases where \textit{Construction~\ref{Cons1}} achieves optimal redundancy for the Lee weight distribution function, specifically when $ q=7$ and  $t =1,2$.
\begin{example}
 \textit{Construction~\ref{Cons1}} achieves optimal redundancy for the Lee weight distribution function when $ q=7, k=3, T=5$ for  $t=1,2$. These are cases of Lee weight distribution function with $E = 2$. The redundancy values obtained from Construction~\ref{Cons1} for $ q=7, t=1,2$ are  \( r = 1,2 \), respectively. According to Corollary~\ref{Cor_PB_LWD}, the corresponding theoretical lower bounds are  \( r_L^\mathrm{w}(q, k,t) = 0.58 \text{ and } 1.16 \) for $t=1$ and $2$, respectively. In each case, the redundancy achieved by Construction~\ref{Cons1} matches the smallest integer greater than or equal to the lower bound, hence it is optimal. 
\end{example}
\subsection{Modular Sum Function}
We next define the modular sum function, which computes the modulo- $q$ sum of all components of a given vector. Modular sum functions arise naturally in communication systems where symbols belong to cyclic alphabets. For example, in M-ary Differential Phase Shift Keying (M-DPSK) \cite{JP}, the phase evolution of transmitted symbols is determined by successive phase increments relative to the previous phase, which can be modeled through a modular accumulation process. When the phase values are quantized into $q$ discrete levels over $[0,2\pi)$, they can be represented as elements of the cyclic alphabet $\mathbb{Z}_q$, and the phase evolution can be expressed as a modular sum of the phase increments. Such settings naturally lead to modular sum functions defined over $\mathbb{Z}_q$. Since channel noise typically introduces small cyclic perturbations in phase values, the Lee metric provides a natural error model for such systems \cite{KN}, thereby motivating the study of function-correcting Lee codes for modular sum functions.
\begin{defn}[Modular Sum Function]
	 A Modular sum function is defined as $f(\boldsymbol{u}) = S_{m}(\boldsymbol{u})= (\sum_{i=1}^{k} u_{i})\bmod q$, $\forall$ $\mathbf{u} = (u_{1}, u_{2}, \dots, u_{k}) \in\mathbb{Z}_{q}^{k}$, and  therefore \( E = |Im(f)| = q \), independent of the length $k$ of the information vectors.
\end{defn}
The optimal redundancy $r_L^{\mathrm{S_{m}}}(q,t)$ of a modular sum function is given by Corollary~\ref{Cor_OP} as shown in the next lemma.
\begin{lem}
	\label{Lem_MS_Cor2}
	 Let \( f : \mathbb{Z}_q^k \to \mathbb{Z}_q \) be the modular sum function defined by \( f(\mathbf{u}) = (\sum_{i=1}^k u_i) \mod q.\) Consider the set of \( q \) representative information vectors \( \mathbf{u}_1, \ldots, \mathbf{u}_q \in \mathbb{Z}_q^k \) defined by
	\( \mathbf{u}_i = (0^{k-1}, i-1), \quad \forall i \in [q]. \)
	Then, for this set of representative vectors, the distance requirement matrix and the function distance matrix are identical and \( r_L^{\mathrm{S_{m}}}(q,t) = N_L\left( \mathbf{D}_{\mathrm{S_{m}}}(E,t) \right),\) where \( \mathbf{D}_{\mathrm{S_{m}}}(E,t) \) denotes the function distance matrix for the modular sum function for $t$ error correction.
\end{lem}

\begin{IEEEproof}
	For each representative vector \( \mathbf{u}_i = (0^{k-1}, i-1) \), we have \( f(\mathbf{u}_i) = i-1  \pmod {q}.\)
	Hence, the function values of the representative vectors are \( f(\mathbf{u}_i) = i-1 \), for \( i \in [q] \). Thus, the total number of distinct function values is \( E = q \), and we have \( M = q \) representative information vectors.
	From Definition~\ref{Def_DRM}, the distance requirement matrix for \( t \)-error correction with entries
	\[
	[\!\mathbf{D}_f\!(t,\! \mathbf{u}_1, \dots, \mathbf{u}_M)\!]_{ij}\!=\!
	\begin{cases}
		\begin{aligned}
			&\![2t\! +\!1\!-\!d_{L}\!(\!\mathbf{u}_i,\! \mathbf{u}_j\!)\!]^+\!, \text{if }\! f(\mathbf{u}_i)\! \neq\! f(\mathbf{u}_j), \\
			&0, \quad \text{otherwise}.
		\end{aligned}
	\end{cases}
	\]
	Each representative vector differs only in the last coordinate. Thus, the Lee distance between \( \mathbf{u}_i \) and \( \mathbf{u}_j \) is
	\(
	d_L(\mathbf{u}_i, \mathbf{u}_j) = d_L((0,\ldots,0,i-1), (0,\ldots,0,j-1)) = d_L(i-1, j-1),
	\)
	Therefore, \(\left[\mathbf{D}_f(t, \mathbf{u}_1, \ldots, \mathbf{u}_q)\right]_{ij} = \left[ 2t + 1 - d_L(i-1, j-1) \right]^+.\)
	As per Definition~\ref{Def_FDM}, the function distance matrix for $t$ error correction with entries
	\[
	\left[\mathbf{D}_f(t, f_1, \ldots, f_q)\right]_{ij} =
	\begin{cases}
		\left[ 2t + 1 - d_L^f(f_i, f_j) \right]^+ & \text{if } f_i \ne f_j, \\
		0 & \text{otherwise},
	\end{cases}
	\]
	where \( d_L^f(f_i, f_j) = \min\{ d_L(\mathbf{u}, \mathbf{v}) : f(\mathbf{u}) = f_i,\, f(\mathbf{v}) = f_j \} \). It can be noted that, each function value \( f_i = i-1 \) corresponds to a unique vector \( \mathbf{u}_i \) among the set of representative vectors  with last coordinate \( i-1 \) and zeros elsewhere. Therefore, the pair \( (\mathbf{u}_i, \mathbf{u}_j) \) achieves the minimum possible distance between any pair of vectors whose modular sum values are \( f_i \) and \( f_j \), respectively. Hence, \(	d_L^f(f_i, f_j) = d_L(\mathbf{u}_i, \mathbf{u}_j) = d_L(i-1, j-1).\) Substituting, we get, \( \left[\mathbf{D}_f(t, f_1, \ldots, f_q)\right]_{ij} = \left[ 2t + 1 - d_L(i-1, j-1) \right]^+. \) Since both the distance requirement matrix and the function distance matrix have entries equal to \( \left[ 2t + 1 - d_L(i-1, j-1) \right]^+ \) for \( i \ne j \), and both matrices are of size \( q \times q \), they are identical. Thus, \( \mathbf{D}_f(t, \mathbf{u}_1, \ldots, \mathbf{u}_q) = \mathbf{D}_f(t, f_1, \ldots, f_q).\)
	By Corollary~\ref{Cor_OP}, the optimal redundancy of FCLCs for the modular sum function satisfies
	\( r_L^{\mathrm{S_{m}}}(q, t) = N_L\left( \mathbf{D}_{\mathrm{S_{m}}}(E,t) \right). \)
\end{IEEEproof}
The following example illustrates Lemma~\ref{Lem_MS_Cor2}.
\begin{example} 
	 Consider 
	\(
	\mathbf{u}_1 = 000, \;
	\mathbf{u}_2 = 001, \;
	\mathbf{u}_3 = 002, \;
	\mathbf{u}_4 = 003, \;
	\mathbf{u}_5 = 004, \;
	\mathbf{u}_6 = 005
	\) over \( \mathbb{Z}_6 \).
	The function values \( f(\mathbf{u}_i) = i-1 \),  $\forall$ \( i \in [6]\) and \(Im(f) = \{0, 1, 2, 3 ,4 ,5\}\). The Lee distance between any pair
	\(
	d_L(\mathbf{u}_i, \mathbf{u}_j) = d_L(i-1, j-1).
	\) For the chosen representative vectors $\{\mathbf{u}_1, \mathbf{u}_2, \mathbf{u}_3, \mathbf{u}_4, \mathbf{u}_5, \mathbf{u}_6\}$ and \( t = 1 \), the distance requirement matrix is given by,
	\[
	\mathbf{D}_f(t=1, \mathbf{u}_1, \ldots, \mathbf{u}_6) =
	\begin{bmatrix}
		0 & 2 & 1 & 0 & 1 & 2 \\
		2 & 0 & 2 & 1 & 0 & 1 \\
		1 & 2 & 0 & 2 & 1 & 0 \\
		0 & 1 & 2 & 0 & 2 & 1 \\
		1 & 0 & 1 & 2 & 0 & 2 \\
		2 & 1 & 0 & 1 & 2 & 0 \\
	\end{bmatrix}
	\]
	The function distance matrix for $\{f(\mathbf{u}_1) =f_1, \ldots, f(\mathbf{u}_6)=f_6\}$ and $t=1$ is given by, 
	\[
	\mathbf{D}_f(t=1, f_1, \ldots, f_6) =
	\begin{bmatrix}
		0 & 2 & 1 & 0 & 1 & 2 \\
		2 & 0 & 2 & 1 & 0 & 1 \\
		1 & 2 & 0 & 2 & 1 & 0 \\
		0 & 1 & 2 & 0 & 2 & 1 \\
		1 & 0 & 1 & 2 & 0 & 2 \\
		2 & 1 & 0 & 1 & 2 & 0 \\
	\end{bmatrix}
	\]
	The two matrices are identical, thereby validating Lemma~\ref{Lem_MS_Cor2}. Hence,
	\(
	r_L^{\mathrm{S_{m}}}(6, 1) = N_L\left( \mathbf{D}_{\mathrm{S_{m}}}(\mathrm{6},1) \right).
	\)
\end{example}

A construction is presented below for designing FCLCs for the modular sum function. 
\begin{cons}
	\label{Cons2_MS}
	 For \( q \ge 5, \boldsymbol{u} \in \mathbb{Z}_q^k \) and a function $f : \mathbb{Z}_q^k \rightarrow \mathbb{Z}_q$. The encoding of $\boldsymbol{u}$ is given by
	$ \mathrm{Enc}(\boldsymbol{u}) = (\boldsymbol{u}, \{p_{\boldsymbol{u}}\}^{t})$ when $q$ is odd, and $ \mathrm{Enc}(\boldsymbol{u}) = (\boldsymbol{u}, \{p_{\boldsymbol{u}}\}^{t},p'_{\boldsymbol{u}} )$ when $q$ is even, where
	\begin{equation}
		\label{eq6}
		p_{\boldsymbol{u}} =
		\left\{
		\begin{array}{ll}
			\displaystyle (2f(\boldsymbol{u})) \bmod q, & \text{if }  \text{ q is odd, }\\[6pt]
			\displaystyle (2f(\boldsymbol{u})) \bmod q, & \text{if } 0 \le f(\boldsymbol{u})  \le  \left(  \frac{q}{2}  - 1 \right) \\& \text{and q is even, } \\[6pt]
			\displaystyle (2f(\boldsymbol{u}) + 1) \bmod q, & \text{if }  \frac{q}{2}  \le f(\boldsymbol{u})  \le (q - 1) \\&\text{ and q is even, } 
		\end{array}
		\right.
	\end{equation}
	\begin{equation}
		\label{eq6b}
		p'_{\boldsymbol{u}} =
		\left\{
		\begin{array}{ll}
			\displaystyle \frac{q}{2}, & \text{if }  f(\boldsymbol{u})  = (q-1),\\[6pt]
			\displaystyle 0, & \text{otherwise, } 
		\end{array}
		\right.
	\end{equation}
	and \(\{p_{\boldsymbol{u}}\}^{t}\) stands for the \( t \)-fold repetition of the parity symbol \(p_{\boldsymbol{u}}\). 
\end{cons}
\textit{Construction~\ref{Cons2_MS}}  can be used to design FCLCs for the modular sum function as demonstrated in the next lemma.
\begin{lem}
	\label{Lem_Cons2_FCLC}
	 \textit{Construction~\ref{Cons2_MS}} yields an FCLC for the modular sum function, achieving a redundancy of $t$ when $q$ is odd  and redundancy $t+1$ when $q$ is even, under the conditions $q \ge 5$ and $t \le \frac{q - 3}{2}$.
\end{lem}
Proof of Lemma \ref{Lem_Cons2_FCLC} is given in Appendix \ref{appendix:Lem_Cons2_FCLC}. The following two examples illustrate Lemma \ref{Lem_Cons2_FCLC} for odd and even values of $q$.
\begin{example}
	\label{Ex9}
	 Consider \( f(\boldsymbol{u}) = S_{m}(\mathbf{u}) = (\sum_{i=1}^{k} \boldsymbol{u}_{i}) \bmod q\), where \( \mathbf{u} \in \mathbb{Z}_5^2 \). The function takes values in the range  \( f(\mathbf{u}) \in \{ 0,1,2,3,4 \} \), and hence the number of distinct function values, \( E = q = 5 \). To construct an FCLC using \textit{Construction~\ref{Cons2_MS}} for \( t = 1 \), we assign distinct parity symbols $p(\boldsymbol{u}) = p_{\boldsymbol{u}}$ = $0, 2, 4, 1, 3$ to each function value \( i \in \{0, 1, 2, 3, 4\} \), respectively. The corresponding parity symbols are listed in Table~\ref{Tab_Ex_FCLC_S_{m}_O} with redundancy $1$. 
	
\begin{table}[!htbp]
	\centering
	\tiny
	\caption{FCLC for the Modular Sum function in Example~\ref{Ex9} for $t=1$.}
	\vspace{1mm}
	\setlength{\tabcolsep}{4pt}
	\renewcommand{\arraystretch}{1.2}
	\setlength{\arrayrulewidth}{0.2pt} 
	\resizebox{0.37\textwidth}{!}{%
		\begin{tabular}{|c|c|c||c|c|c|}
			\hline
			$\boldsymbol{u}$ & $f(\boldsymbol{u})$ & $p(\boldsymbol{u})$ &
			$\boldsymbol{u}$ & $f(\boldsymbol{u})$ & $p(\boldsymbol{u})$ \\
			\hline
			00 & 0 & 0 & 23 & 0 & 0 \\ \hline
			01 & 1 & 2 & 24 & 1 & 2 \\ \hline
			02 & 2 & 4 & 30 & 3 & 1 \\ \hline
			03 & 3 & 1 & 31 & 4 & 3 \\ \hline
			04 & 4 & 3 & 32 & 0 & 0 \\ \hline
			10 & 1 & 2 & 33 & 1 & 2 \\ \hline
			11 & 2 & 4 & 34 & 2 & 4 \\ \hline
			12 & 3 & 1 & 40 & 4 & 3 \\ \hline
			13 & 4 & 3 & 41 & 0 & 0 \\ \hline
			14 & 0 & 0 & 42 & 1 & 2 \\ \hline
			20 & 2 & 4 & 43 & 2 & 4 \\ \hline
			21 & 3 & 1 & 44 & 3 & 1 \\ \hline
			22 & 4 & 3 &        &   &   \\
			\hline
		\end{tabular}
	}
	\label{Tab_Ex_FCLC_S_{m}_O}
\end{table}

\end{example}
\begin{example}
	\label{Ex10}
	Consider \( f(\boldsymbol{u}) = S_{m}(\mathbf{u}) =(\sum_{i=1}^{k} \boldsymbol{u}_{i}) \) mod $q$, where \( \mathbf{u} \in \mathbb{Z}_6^2 \). The function takes values in the range  \( f(\mathbf{u}) \in \{ 0,1,2,3,4,5 \} \), and hence the number of distinct function values, \( E = q = 6 \). FCLC can be designed using \textit{Construction~\ref{Cons2_MS}} for $t=1$ with the following  parity vectors $p(\boldsymbol{u}) = (p_{\boldsymbol{u}}, p_{\boldsymbol{u'}})$ = \(00, 20, 40, 10, 30, 53  \) assigned to each function values \( i \in \{0, 1, 2, 3, 4, 5\} \), respectively. The corresponding parity vectors are listed  in Table~\ref{Tab_Ex_MS_FCLC_E} with redundancy $2$.
\begin{table}[!htbp]
	\centering
	\tiny
	\caption{FCLC for the Modular Sum function in Example~\ref{Ex10} for $t = 1$.}
	\vspace{1mm}
	\setlength{\tabcolsep}{4pt}
	\renewcommand{\arraystretch}{1.2}
	\setlength{\arrayrulewidth}{0.2pt} 
	\resizebox{0.37\textwidth}{!}{%
		\begin{tabular}{|c|c|c||c|c|c|}
			\hline
			$\boldsymbol{u}$ & $f(\boldsymbol{u})$ & $p(\boldsymbol{u})$ &
			$\boldsymbol{u}$ & $f(\boldsymbol{u})$ & $p(\boldsymbol{u})$ \\
			\hline
			00 & 0 & 00 & 30 & 3 & 10 \\ \hline
			01 & 1 & 20 & 31 & 4 & 30 \\ \hline
			02 & 2 & 40 & 32 & 5 & 53 \\ \hline
			03 & 3 & 10 & 33 & 0 & 00 \\ \hline
			04 & 4 & 30 & 34 & 1 & 20 \\ \hline
			05 & 5 & 53 & 35 & 2 & 40 \\ \hline
			10 & 1 & 20 & 40 & 4 & 30 \\ \hline
			11 & 2 & 40 & 41 & 5 & 53 \\ \hline
			12 & 3 & 10 & 42 & 0 & 00 \\ \hline
			13 & 4 & 30 & 43 & 1 & 20 \\ \hline
			14 & 5 & 53 & 44 & 2 & 40 \\ \hline
			15 & 0 & 00 & 45 & 3 & 10 \\ \hline
			20 & 2 & 40 & 50 & 5 & 53 \\ \hline
			21 & 3 & 10 & 51 & 0 & 00 \\ \hline
			22 & 4 & 30 & 52 & 1 & 20 \\ \hline
			23 & 5 & 53 & 53 & 2 & 40 \\ \hline
			24 & 0 & 00 & 54 & 3 & 10 \\ \hline
			25 & 1 & 20 & 55 & 4 & 30 \\
			\hline
		\end{tabular}
	}
	\label{Tab_Ex_MS_FCLC_E}
\end{table}

\end{example}
Based on Lemma~\ref{Lem_MS_Cor2}, we derive lower bounds on the
optimal redundancy $r_L^{S_{m}}(q, t)$, by applying the Plotkin-like bound from  Theorem~\ref{Thm_PB} as stated next in Corollary~\ref{Cor_PB_MS_O} and Corollary~\ref{Cor_PB_MS_E}.
\begin{cor}
	\label{Cor_PB_MS_O}
	 For odd $q \ge 5$ and \( t \ge\frac{\left\lfloor \frac{q}{2} \right\rfloor-1}{2} \), we have
	\( r_L^{S_{m}}(q, t) \ge \frac{2}{q(q+1)} (4t+1-\left\lfloor \frac{q}{2} \right\rfloor)(2q-2\left\lfloor \frac{q}{2} \right\rfloor-1).\) 
\end{cor}

\begin{IEEEproof}
	By Lemma~\ref{Lem_MS_Cor2},  for the modular sum function, we have \( r_L^{\mathrm{S_{m}}}(q, t) = N_L\left( \mathbf{D}_{\mathrm{S_{m}}}(E,t) \right)\). The corresponding function distance matrix $\mathbf{D}_{\mathrm{S_{m}}}(E,t)$  for \( t \ge\frac{\left\lfloor \frac{q}{2} \right\rfloor-1}{2} \) is shown in Figure~\ref{Fig_FDM_MS_O}. In this case, the sum of the entries above the main diagonal is given by, $\sum_{i,j: i<j} [\mathbf{D}]_{ij} = (q-1)(2t+1-1) + (q-2)(2t+1-2) + \ldots + (q- \left\lfloor \frac{q}{2} \right\rfloor ) (2t+1 - \left\lfloor \frac{q}{2} \right\rfloor) + (q- (\left\lfloor \frac{q}{2} \right\rfloor+1))(2t+1 - \left\lfloor \frac{q}{2} \right\rfloor)+ \ldots (1)(2t+1-1)$. This expression simplifies to $\sum_{i,j: i<j} [\mathbf{D}]_{ij} = \frac{\left\lfloor \frac{q}{2} \right\rfloor}{2}(4t+1-\left\lfloor \frac{q}{2} \right\rfloor)(2q-2\left\lfloor \frac{q}{2} \right\rfloor-1)$. Substituting this expression into (\ref{Eq_PB1}) yields the Plotkin-like bound for odd $q \ge 5$ and \( t \ge\frac{\left\lfloor \frac{q}{2} \right\rfloor-1}{2} \). 
\end{IEEEproof}
\begin{figure*}[!htbp]
	\centering
	\scriptsize 
	\resizebox{1.0\textwidth}{!}{
		$\begin{blockarray}{c@{\hskip 4pt} *{8}{c}}
			& 0 & 1 & \cdots & \lfloor \frac{q}{2}\!\rfloor & (\lfloor \frac{q}{2}\!\rfloor +1) & \cdots & (q-2) & (q-1)  \\
			\begin{block}{c@{\hskip 4pt}[*{8}{c}]}
				0 & 0 & (2t+1-1)  & \cdots & (2t+1-\left\lfloor \frac{q}{2} \right\rfloor) & (2t+1-\left\lfloor \frac{q}{2} \right\rfloor) &  \cdots & (2t+1-2) & (2t+1-1) \\ [3pt]
				1 & (2t+1-1) & 0  & \cdots & (2t+1-(\left\lfloor \frac{q}{2} \right\rfloor-1))   &(2t+1-\left\lfloor \frac{q}{2} \right\rfloor) & \cdots &(2t+1-3) & (2t+1-2) \\ [3pt]
				\vdots & \vdots & \vdots  & \ddots & \ddots &\ddots & \cdots  &\vdots & \vdots \\ [3pt]
				(\lfloor \frac{q}{2}\!\rfloor) & (2t+1-\left\lfloor \frac{q}{2} \right\rfloor) & (2t+1-(\left\lfloor \frac{q}{2} \right\rfloor-1))  & \ddots & \ddots &\ddots & \cdots  & (2t+1-(\left\lfloor \frac{q}{2} \right\rfloor-1)) & (2t+1-\left\lfloor \frac{q}{2} \right\rfloor) \\ [3pt]
				(\lfloor \frac{q}{2}\!\rfloor +1) & (2t+1-\left\lfloor \frac{q}{2} \right\rfloor) & (2t+1-\left\lfloor \frac{q}{2} \right\rfloor)  & \ddots & \ddots &\ddots & \cdots & (2t+1-(\left\lfloor \frac{q}{2} \right\rfloor-2)) & (2t+1-(\left\lfloor \frac{q}{2} \right\rfloor-1)) \\ [3pt]
				\vdots & \vdots & \vdots  & \ddots & \ddots &\ddots & \cdots & \vdots & \vdots \\ [3pt]
				(q-2) & (2t+1-2) & (2t+1-3)  &\cdots & (2t+1-(\left\lfloor \frac{q}{2} \right\rfloor-1)) & (2t+1-(\left\lfloor \frac{q}{2} \right\rfloor-2)) & \cdots &0 & (2t+1-1) \\ [3pt]
				(q-1) & (2t+1-1) & (2t+1-2)   & \cdots & (2t+1-\left\lfloor \frac{q}{2} \right\rfloor)  &  (2t+1-(\left\lfloor \frac{q}{2} \right\rfloor-1)) & \cdots  & (2t+1-1) & 0 \\
			\end{block}
		\end{blockarray}$%
	}
	\caption{Function distance matrix $\mathbf{D}_{\mathrm{S_{m}}}(E,t)$ for $t \ge\frac{\left\lfloor \frac{q}{2} \right\rfloor-1}{2}$ and odd $q \ge 5$.}
	\label{Fig_FDM_MS_O}
\end{figure*}

The following example shows cases where \textit{Construction~\ref{Cons2_MS}} achieves optimal redundancy for the modular sum function, specifically when \( q = 5,\ t = 1,\ \forall k \quad \text{and} \quad q = 7,\ t = 2,\ \forall k. \)
\begin{example}
\textit{Construction~\ref{Cons2_MS}} achieves optimal redundancy for the modular sum function in the following cases:
	\(q = 5,\ t = 1,\ \forall k \quad \text{and} \quad q = 7,\ t = 2,\ \forall k. \) For the case \( q = 5 \), \( t = 1 \), \textit{Construction~\ref{Cons2_MS}} yields a redundancy of \( r = 1 \) \( \forall k \). Similarly, for \( q = 7 \), \( t = 2 \), Construction~\ref{Cons2_MS} yields \( r = 2 \) \( \forall k \). According to Corollary~\ref{Cor_PB_MS_O}, the lower bounds on redundancy for these cases are  \( 1 \) and \( 1.5 \). Since the actual redundancies obtained using \textit{Construction~\ref{Cons2_MS}} match the integer ceilings of these lower bounds, the construction is optimal in both cases. Thus, \textit{Construction~\ref{Cons2_MS}} achieves the minimum possible redundancy for the specified values of \( q \), \( t \) and \( k \).
\end{example}

\begin{cor}
	\label{Cor_PB_MS_E}
	 For even $q \ge 6$ and \( t \ge \frac{\frac{q}{2}-1}{2} \), we have
	$ r_L^{S_{m}}(q, t) \ge \frac{8}{{q}^3} (A+B),$
	where \(A = \sum_{a=1}^{q/2}(q-a)(2t+1-a) \) and 
	\( B = \sum_{b=1}^{(q/2)-1}(q-(q/2+b))(2t+1-(q/2-b). \)
\end{cor}
\begin{IEEEproof}
By Lemma~\ref{Lem_MS_Cor2},  for the modular sum function, we have \( r_L^{\mathrm{S_{m}}}(q, t) = N_L\left( \mathbf{D}_{\mathrm{S_{m}}}(E,t) \right)\). The corresponding function distance matrix $\mathbf{D}_{\mathrm{S_{m}}}(E,t)$  for \( t \ge\frac{ \frac{q}{2} -1}{2} \) is shown in Figure~\ref{Fig_FDM_MS_E}. In this case, the sum of the entries above the main diagonal is given by, $\sum_{i,j: i<j} [\mathbf{D}]_{ij} = (q-1)(2t+1-1) + (q-2)(2t+1-2) + \ldots + (q- (\frac{q}{2}-1)) (2t+1 - ( \frac{q}{2} -1)) + (q-  \frac{q}{2} )(2t+1 -  \frac{q}{2}) + (q- ( \frac{q}{2} +1))(2t+1 - (\frac{q}{2} -1))+ \ldots + (1)(2t+1-1)$. Substituting this expression into (\ref{Eq_PB1}) and simplifying yields the Plotkin-like bound for even $q \ge 6$ and \( t \ge\frac{ \frac{q}{2} -1}{2} \). 
\end{IEEEproof}
\begin{figure*}[!htbp]
	\centering
	\scriptsize 
	\resizebox{1.0\textwidth}{!}{
		$\begin{blockarray}{c@{\hskip 4pt} *{9}{c}}
			& 0 & 1 & \cdots & (\tfrac{q}{2} -1) & \tfrac{q}{2} & (\tfrac{q}{2} +1) & \cdots & (q-2) & (q-1)  \\
			\begin{block}{c@{\hskip 4pt}[*{9}{c}]}
				0 & 0 & (2t+1-1) & \cdots & (2t+1-(\tfrac{q}{2} -1))  & (2t+1-\tfrac{q}{2}) & (2t+1-(\tfrac{q}{2} -1)) & \cdots & (2t+1-2) & (2t+1-1) \\[3pt]
				1 & (2t+1-1) & 0 & \cdots & (2t+1-(\tfrac{q}{2} -2))  & (2t+1-(\tfrac{q}{2} -1)) & (2t+1-\tfrac{q}{2}) & \cdots & (2t+1-3)  & (2t+1-2) \\[3pt]
				\vdots & \vdots & \vdots  & \ddots & \ddots & \ddots & \cdots & \vdots & \vdots & \vdots \\[3pt]
				(\tfrac{q}{2} -1) & (2t+1-(\tfrac{q}{2} -1)) & \vdots  & \ddots & \ddots & \ddots & \ddots & \cdots  & (2t+1-(\tfrac{q}{2} -1)) & (2t+1-\tfrac{q}{2}) \\[3pt]
				\tfrac{q}{2} & (2t+1-\tfrac{q}{2}) & \vdots  & \ddots & \ddots & \ddots & \ddots & \cdots  & (2t+1-(\tfrac{q}{2} -2)) & (2t+1-(\tfrac{q}{2} -1)) \\[3pt]
				(\tfrac{q}{2} +1) & (2t+1-(\tfrac{q}{2} -1)) & \vdots  & \ddots & \ddots & \ddots & \ddots & \cdots  &(2t+1-1) & (2t+1-(\tfrac{q}{2} -2)) \\[3pt]
				\vdots & \vdots & \vdots  & \ddots & \ddots & \ddots & \cdots & \vdots & \vdots  & \vdots \\[3pt]
				(q-2) & (2t+1-2) & (2t+1-3)  & \cdots & (2t+1-(\tfrac{q}{2} -1)) & (2t+1-(\tfrac{q}{2} -2)) & (2t+1-1) & \cdots & 0 & (2t+1-1) \\[3pt]
				(q-1) & (2t+1-1) & (2t+1-2)  & \cdots & (2t+1-\tfrac{q}{2}) & (2t+1-(\tfrac{q}{2} -1)) & (2t+1-(\tfrac{q}{2} -2)) & \cdots & (2t+1-1) & 0 \\
			\end{block}
		\end{blockarray}$%
	}
	\caption{Function distance matrix $\mathbf{D}_{\mathrm{S_{m}}}(E,t)$ for $t \ge \tfrac{\tfrac{q}{2} -1}{2}$ and even $q \ge 6$.}
	\label{Fig_FDM_MS_E}
\end{figure*}
When $q$ is sufficiently large relative to $t$, the distance requirements can be satisfied using a single parity symbol, making it possible to construct FCLCs with optimal redundancy equal to one. The following conjecture formalizes this observation.\\

\begin{conj}
	For all integers $t \ge 1$ and all $q \ge 2t(t+1)+1$, there exist a $p \in [2t:\lfloor \frac{q}{2} \rfloor]$, such that the encoding $ \mathrm{Enc}(\boldsymbol{u}) = (\boldsymbol{u}, (pf(\boldsymbol{u})) \bmod q)$ yields an FCLC for the Lee weight function, Lee weight distribution function and the modular sum function achieving an optimal redundancy $1$. For $q = 2t(t+1)+1$, we have $p=2t+1$.
\end{conj}
\subsection{Locally Bounded Function}
A class of functions that assume only a limited number of values within a given Hamming ball, known as locally-bounded functions was introduced in \cite{CR}. Recently, these functions were extended to the Lee metric and studied in \cite{GA}. We propose the construction of FCLCs for the locally-bounded functions, which attain optimal redundancy in some cases.   
\begin{defn}[Function Ball \cite{GA}]
	 The function ball of a function $f : \mathbb{Z}_q^k \to Im(f)$ with radius $\rho$ around $u$ is defined as $B_L^f(\mathbf{u}, \rho) = \{ f(\mathbf{v})\;|\; \mathbf{v} \in \mathbb{Z}_q^k \text{ and } d_L(\mathbf{u}, \mathbf{v}) \le \rho \}.$
\end{defn}
\begin{defn}[Locally bounded function\cite{GA}]
	 A function $f:\mathbb{Z}_q^k\to Im(f)$ is said to be a locally $(\rho,\lambda)_L$-bounded function if $\mid B_L^f(\mathbf{u},\rho)\mid\leq \lambda$, $\forall \mathbf{u}\in \mathbb{Z}_q^k$.
\end{defn}

The following lemma, which extends Lemma~1 from \cite{CR} to the Lee metric setting, is used in the construction of FCLCs for locally-bounded functions. 
\begin{lem}[\cite{GA}]
	\label{Cont_LB}
	 Let $f:\mathbb{Z}_q^k\to Im(f)$ be a locally $(\rho,\lambda)_L$-bounded function. If there is total order on the image set $Im(f)$ such that  $B_L^f(\mathbf{u}, \rho)$ form a contiguous block for all $\mathbf{u}\in \mathbb{Z}_q^k$. Then there exists a mapping $ Col_f: \mathbb{Z}_q^k \to [\lambda] $ such that $ Col_f(\mathbf{u}) \neq Col_f(\mathbf{v})$ for any $ \mathbf{u}, \mathbf{v} \in \mathbb{Z}_q^k $ with $f(\mathbf{u}) \neq f(\mathbf{v})$ and $d_L(\mathbf{u}, \mathbf{v}) \le \rho$.
\end{lem} 

We now present a construction to design FCLCs corresponding to locally $(2t, \lambda)_L$-bounded functions that satisfy the contiguous block condition in Lemma~\ref{Cont_LB}.
\begin{cons}
	\label{Cons4} 
	 Let $f\!\!:\mathbb{Z}_{q}^k\to \!\! Im(f) = \{f_1, f_2, \ldots, f_\lambda, \ldots, f_\mu\}$ be a locally $(2t, \lambda)_L$-bounded function that satisfy the contiguous block condition in Lemma~\ref{Cont_LB} such that $\lambda \le \frac{q}{2}$ and $t$ be the no. of errors in the channel, where 
	$\mu \ge \lambda$ and $t \in \mathbb{N} $.  Define an encoding for $\mathbf{u}$ as $\mathrm{Enc}(\boldsymbol{u}) = (\boldsymbol{u}, \{\mathbf{p}_{Col_f(\boldsymbol{u})}\}^r)$, where 
	\begin{equation*}
		\footnotesize
		\mathbf{p}_{Col_f(\boldsymbol{u})} =
		\left\{
		\begin{array}{ll}
			\displaystyle 0 & \text{if }  Col_f(\boldsymbol{u}) = 1, \\[6pt]
			\displaystyle 2 \left \lfloor \frac{q}{2\lambda} \right \rfloor & \text{if }  Col_f(\boldsymbol{u}) = 2, \\[6pt]
			\displaystyle . & \ldots, \\[6pt]
			\displaystyle 2(i-1)\left \lfloor \frac{q}{2\lambda} \right \rfloor & \text{if } Col_f(\boldsymbol{u}) = i, \\[6pt]
			\displaystyle . & \ldots, \\[6pt]
			\displaystyle 2(\lambda-1)\left \lfloor \frac{q}{2\lambda} \right \rfloor & \text{if } Col_f(\boldsymbol{u}) = \lambda,
		\end{array}
		\right.
	\end{equation*}
	$i \in [\lambda]$, $r = {\left \lceil \frac{t}{\left \lfloor \frac{q}{2\lambda} \right \rfloor} \right \rceil}$ and \(\{\mathbf{p}_{Col_f(\boldsymbol{u})}\}^r\) means \( r \)-fold repetition of the parity symbol $\mathbf{p}_{Col_f(\boldsymbol{u})}$.
\end{cons}
\textit{Construction~\ref{Cons4}} can be used to design FCLCs for the  locally $(2t, \lambda)_L$-bounded function that satisfy the contiguous block condition in Lemma~\ref{Cont_LB} with a redundancy of $r = \left \lceil \frac{t}{\left \lfloor \frac{q}{2\lambda} \right \rfloor} \right \rceil $ as shown in the next lemma.
\begin{lem}
	 Construction~\ref{Cons4} gives an FCLC for the  locally $(2t, \lambda)_L$-bounded function that satisfy the contiguous block condition in Lemma~\ref{Cont_LB} with a redundancy of $r = \left \lceil \frac{t}{\left \lfloor \frac{q}{2\lambda} \right \rfloor} \right \rceil $ for any $\lambda \le \frac{q}{2}$.
\end{lem}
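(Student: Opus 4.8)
The plan is to check directly that the encoding in Construction~\ref{Cons4} meets the defining inequality of an FCLC, i.e. $d_L(\mathrm{Enc}(\boldsymbol u),\mathrm{Enc}(\boldsymbol u'))\ge 2t+1$ whenever $f(\boldsymbol u)\neq f(\boldsymbol u')$, and then to read off that the appended parity block has length $r=\bigl\lceil t/\lfloor q/(2\lambda)\rfloor\bigr\rceil$. First observe that the hypothesis $\lambda\le q/2$ gives $\lfloor q/(2\lambda)\rfloor\ge 1$, so $r$ is well defined and the parity symbols $2(i-1)\lfloor q/(2\lambda)\rfloor$, $i\in[\lambda]$, lie in $\{0,1,\dots,q-1\}$ (since $2(\lambda-1)\lfloor q/(2\lambda)\rfloor<q$), hence need no reduction mod $q$. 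Fix $\boldsymbol u,\boldsymbol u'$ with $f(\boldsymbol u)\neq f(\boldsymbol u')$; in particular $\boldsymbol u\neq\boldsymbol u'$, so $d_L(\boldsymbol u,\boldsymbol u')\ge 1$. Because the parity block is a constant vector of length $r$, we have $d_L(\mathrm{Enc}(\boldsymbol u),\mathrm{Enc}(\boldsymbol u'))=d_L(\boldsymbol u,\boldsymbol u')+r\cdot d_L(\mathbf p_{Col_f(\boldsymbol u)},\mathbf p_{Col_f(\boldsymbol u')})$, and I would split on the value of $d_L(\boldsymbol u,\boldsymbol u')$.

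\textbf{Case 1:} $d_L(\boldsymbol u,\boldsymbol u')\ge 2t+1$. Then the information part alone already forces $d_L(\mathrm{Enc}(\boldsymbol u),\mathrm{Enc}(\boldsymbol u'))\ge 2t+1$, and nothing further is needed. \textbf{Case 2:} $d_L(\boldsymbol u,\boldsymbol u')\le 2t$. Here I invoke the coloring lemma (the Lee-metric extension of Lemma~1 of \cite{CR}) with $\rho=2t$: since $f(\boldsymbol u)\neq f(\boldsymbol u')$ and $d_L(\boldsymbol u,\boldsymbol u')\le 2t$, we get $Col_f(\boldsymbol u)\neq Col_f(\boldsymbol u')$, say these are $i\neq j$ in $[\lambda]$. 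The crux is a uniform lower bound on $d_L(\mathbf p_i,\mathbf p_j)$ where $\mathbf p_i=2(i-1)m$, $\mathbf p_j=2(j-1)m$ with $m=\lfloor q/(2\lambda)\rfloor$. Taking (without loss of generality) $i>j$, the integer difference is $2(i-j)m$ with $1\le i-j\le\lambda-1$, so $2m\le 2(i-j)m\le 2(\lambda-1)m$; using $2\lambda m\le q$ (because $m\le q/(2\lambda)$) gives $2(\lambda-1)m=2\lambda m-2m\le q-2m$, hence $q-2(i-j)m\ge 2m$ as well. Thus both the forward and the wrap-around differences are at least $2m$, so $d_L(\mathbf p_i,\mathbf p_j)\ge 2m=2\lfloor q/(2\lambda)\rfloor$ (and the same bounds show $0<2(i-j)m<q$, so the $\lambda$ parity symbols are genuinely distinct mod $q$). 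Therefore the $r$-fold repetition contributes at least $2rm$, and since $r=\lceil t/m\rceil\ge t/m$ we have $rm\ge t$, i.e. $2rm\ge 2t$. Combined with $d_L(\boldsymbol u,\boldsymbol u')\ge 1$ this yields $d_L(\mathrm{Enc}(\boldsymbol u),\mathrm{Enc}(\boldsymbol u'))\ge 2t+1$.

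Both cases establish that $\mathrm{Enc}$ defines an FCLC for $f$, and its redundancy is exactly the length $r=\bigl\lceil t/\lfloor q/(2\lambda)\rfloor\bigr\rceil$ of the repeated parity symbol, as claimed. The main (and really the only nontrivial) step is the parity-distance estimate $d_L(\mathbf p_i,\mathbf p_j)\ge 2\lfloor q/(2\lambda)\rfloor$ for all $i\neq j$: everything hinges on the inequality $2\lambda\lfloor q/(2\lambda)\rfloor\le q$, which simultaneously spreads the $\lambda$ parity symbols out by a uniform minimum gap of $2\lfloor q/(2\lambda)\rfloor$ and tames the modular wrap-around, and this is precisely where the assumption $\lambda\le q/2$ enters.
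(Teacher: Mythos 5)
Your proof is correct and follows essentially the same route as the paper: the same two-case split on whether $d_L(\boldsymbol u,\boldsymbol u')\ge 2t+1$ or $\le 2t$, the same appeal to the coloring lemma in the second case, and the same accounting $r\cdot d_L(\mathbf p_i,\mathbf p_j)\ge 2t$ plus $d_L(\boldsymbol u,\boldsymbol u')\ge 1$. Your only departure is that you explicitly verify the parity-separation bound $d_L(\mathbf p_i,\mathbf p_j)\ge 2\lfloor q/(2\lambda)\rfloor$ (including the modular wrap-around via $2\lambda\lfloor q/(2\lambda)\rfloor\le q$), a step the paper simply asserts ``from Construction~4''; this is a welcome addition, not a divergence.
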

\begin{proof}
	Let $\mathbf{u_i}, \mathbf{u_j} \in \mathbb{Z}_q^k$ be such that $f(\mathbf{u_i}) \neq f(\mathbf{u_j})$, $\forall$ $i, j \in[\lambda]$ and $i \neq j$ for any $\lambda \le \frac{q}{2}$. We have $d_{L}(\mathrm{Enc}(\boldsymbol{u_i}), \mathrm{Enc}(\boldsymbol{u_j})) = d_{L}(\boldsymbol{u_i}, \boldsymbol{u_j}) + d_{L}(\{\mathbf{p}_{Col_f(\boldsymbol{u_i})}\}^r, \{\mathbf{p}_{Col_f(\boldsymbol{u_j})}\}^r)$, where $\mathbf{p}_{Col_f(\boldsymbol{u_i})}$ and $\mathbf{p}_{Col_f(\boldsymbol{u_j})}$ are the parity symbols corresponding to $Col_f(\boldsymbol{u}) = i$ and $Col_f(\boldsymbol{u}) = j$. We consider two cases, Case 1: If $d_L(\mathbf{u_i},\mathbf{u_j})\geq 2t+1$, then  $d_{L}(\mathrm{Enc}(\boldsymbol{u_i}), \mathrm{Enc}(\boldsymbol{u_j})) \ge d_L(\mathbf{u_i},\mathbf{u_j}) \ge 2t+1$. Case 2: If  $d_L(\mathbf{u_i},\mathbf{u_j}) \le 2t$, then $f(\mathbf{u_j})\in B_L^f(\mathbf{u_i},2t)$, and  $f(\mathbf{u_i}) \neq f(\mathbf{u_j}) $. Therefore, $d_{L}(\mathbf{p}_{Col_f(\boldsymbol{u_i})}, \mathbf{p}_{Col_f(\boldsymbol{u_j})}) \ge 2\left \lfloor \frac{q}{2\lambda} \right \rfloor$ from \textit{Construction~\ref{Cons4}} and $d_{L}(\{\mathbf{p}_{Col_f(\boldsymbol{u_i})}\}^r, \{\mathbf{p}_{Col_f(\boldsymbol{u_j})}\}^r) \ge 2 \left \lfloor \frac{q}{2\lambda} \right \rfloor r \ge  2t$. Since $\mathbf{u_i} \neq \mathbf{u_j}$, we have $d_L(\mathbf{u_i},\mathbf{u_j})\geq 1$ and  $d_{L}(\mathrm{Enc}(\boldsymbol{u_i}), \mathrm{Enc}(\boldsymbol{u_j})) = d_{L}(\boldsymbol{u_i}, \boldsymbol{u_j}) + d_{L}(\{\mathbf{p}_{Col_f(\boldsymbol{u_i})}\}^r, \{\mathbf{p}_{Col_f(\boldsymbol{u_j})}\}^r) \ge 2t+1$.
\end{proof}
The next example gives an illustration of \textit{Construction~\ref{Cons4}} for a locally $(2t, \lambda)_L$-bounded function.
\begin{example}
	 Let \( f: \mathbb{Z}_6^k \to \mathrm{Im}(f) \) be a locally \( (2, 3)_L \)-bounded function that satisfy the contiguous block condition in Lemma~\ref{Cont_LB} with parameters \( t = 1 \) and \( \lambda = 3 \), where \( |\mathrm{Im}(f)| \ge 3 \). Consider vectors \( \boldsymbol{u}_1, \boldsymbol{u}_2, \boldsymbol{u}_3 \in \mathbb{Z}_6^k \) such that \( f(\boldsymbol{u}_i) \neq f(\boldsymbol{u}_j) \) and \( d_L(\boldsymbol{u}_i, \boldsymbol{u}_j) \le 2 \) for all distinct \( i, j \in [3] \). The three parity symbols used in the construction are \( \mathbf{p}_1 = 0 \), \( \mathbf{p}_2 = 2 \), and \( \mathbf{p}_3 = 4 \), corresponding to \( \mathrm{Col}_f(\boldsymbol{u}) = 1, 2, 3 \), respectively. Since \( \lambda = \frac{q}{2} \), Construction~\ref{Cons4} guarantees that \(d_L(\mathrm{Enc}(\boldsymbol{u}_i), \mathrm{Enc}(\boldsymbol{u}_j)) = d_L(\boldsymbol{u}_i, \boldsymbol{u}_j) + d_L(\{ \mathbf{p}_{\mathrm{Col}_f(\boldsymbol{u}_i)} \}, \{ \mathbf{p}_{\mathrm{Col}_f(\boldsymbol{u}_j)} \}) \ge 2t + 1=3,\) $\forall$ \( i \ne j \in [3] \), as $d_{L}(\boldsymbol{u_i}, \boldsymbol{u_j}) = 1$ at least.
\end{example}
We now present a lemma that proves the optimality of \textit{Construction~\ref{Cons4}} for certain locally$(2t, 3)_L$-bounded functions.
\begin{lem}
	 Construction~\ref{Cons4} achieves optimal redundancy of $r_L^l(q, k,t) =t$  for a locally $(2t, 3)_L$-bounded function $f$ that satisfy the contiguous block condition in Lemma~\ref{Cont_LB} with $|Im(f)| \ge 3$, if there exist $\mathbf{u}_1, \mathbf{u}_2, \mathbf{u}_3 \in \mathbb{Z}_6^k$ with $f(\mathbf{u}_i) \neq f(\mathbf{u}_j)$ for $i, j \in [3], i \neq j$, such that $d_L(\mathbf{u}_1, \mathbf{u}_2) = 1$, $d_L(\mathbf{u}_1, \mathbf{u}_3) = 1$ and $d_L(\mathbf{u}_2, \mathbf{u}_3) = 2$.
\end{lem}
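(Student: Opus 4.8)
The plan is to prove the two matching bounds $r_L^l(q,k,t)\le t$ and $r_L^l(q,k,t)\ge t$, with $q=6$ and $f$ locally $(2t,3)_L$-bounded; we may assume $t\ge 1$ since $t=0$ is trivial. The upper bound is immediate from the preceding lemma: Construction~\ref{Cons4} with $\lambda=3$ and $q=6$ yields a valid FCLC of redundancy $r=\lceil t/\lfloor q/(2\lambda)\rfloor\rceil=\lceil t/\lfloor 6/6\rfloor\rceil=t$, so $r_L^l(q,k,t)\le t$. Hence all the work lies in the matching lower bound, where I would exploit the three vectors $\mathbf{u}_1,\mathbf{u}_2,\mathbf{u}_3$ furnished by the hypothesis.

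First I would restrict attention to $\mathbf{u}_1,\mathbf{u}_2,\mathbf{u}_3$: by the lower bound $r_L^f(q,k,t)\ge N_L(\mathbf{D}_f(t,\mathbf{u}_1,\ldots,\mathbf{u}_M))$ recalled from \cite{GA} (any set of distinct information vectors yields a lower bound), applied to our function, $r_L^l(q,k,t)\ge N_L(\mathbf{D}_f(t,\mathbf{u}_1,\mathbf{u}_2,\mathbf{u}_3))$. Since the three function values are pairwise distinct, Definition~\ref{def6} gives the $3\times 3$ distance requirement matrix with off-diagonal entries $[\mathbf{D}]_{12}=[\mathbf{D}]_{13}=[2t+1-1]^+=2t$ and $[\mathbf{D}]_{23}=[2t+1-2]^+=2t-1$, both positive for $t\ge 1$. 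It therefore suffices to show $N_L$ of this matrix is at least $t$.

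To that end, let $\{\mathbf{p}_1,\mathbf{p}_2,\mathbf{p}_3\}\subseteq\mathbb{Z}_6^r$ realise this matrix as a $\mathbf{D}_L$-code. Summing the three pairwise Lee-distance requirements gives $\sum_{i<j}d_L(\mathbf{p}_i,\mathbf{p}_j)\ge 2t+2t+(2t-1)=6t-1$. Conversely, working coordinate by coordinate, the key elementary fact is that $d_L(a,b)+d_L(a,c)+d_L(b,c)\le 6$ for all $a,b,c\in\mathbb{Z}_6$: writing $g_1,g_2,g_3\ge 0$ for the lengths of the three arcs into which $a,b,c$ split the cycle $\mathbb{Z}_6$ (so $g_1+g_2+g_3=6$), the three pairwise Lee distances equal $\min(g_i,6-g_i)$ for $i=1,2,3$, whose sum never exceeds $6$ (a short check on the cases $g_i=3$, $g_i<3$, $g_i>3$). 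Summing over the $r$ coordinates yields $\sum_{i<j}d_L(\mathbf{p}_i,\mathbf{p}_j)\le 6r$, so $6r\ge 6t-1$, i.e.\ $r\ge t-\tfrac16$; as $r$ is a nonnegative integer this forces $r\ge t$. Thus $N_L(\mathbf{D}_f(t,\mathbf{u}_1,\mathbf{u}_2,\mathbf{u}_3))\ge t$, giving $r_L^l(q,k,t)\ge t$, and combined with the upper bound $r_L^l(q,k,t)=t$, so Construction~\ref{Cons4} is optimal here.

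I expect the main obstacle to be this sharp per-coordinate bound and, more importantly, recognising that it cannot be replaced by Theorem~\ref{thm3}: the Plotkin-like bound applied directly to the $3\times 3$ matrix gives only $N_L(\mathbf{D})\ge\frac{8}{9\cdot 6}(6t-1)=\frac{4(6t-1)}{27}$, which behaves like $\tfrac{8}{9}t$ and is strictly below $t$ for large $t$ (for instance at $t=8$ it gives $N_L(\mathbf{D})\ge\tfrac{188}{27}$, i.e.\ only $N_L(\mathbf{D})\ge 7$, whereas the truth is $8$), so the coordinate-wise argument is essential. A minor preliminary point is that the hypothesis is nonvacuous: the triple $\mathbf{u}_1=(0,0,\dots,0)$, $\mathbf{u}_2=(1,0,\dots,0)$, $\mathbf{u}_3=(5,0,\dots,0)$ in $\mathbb{Z}_6^k$ has the stated pairwise Lee distances $1,1,2$, and one may take $f$ to realise three distinct values on them.
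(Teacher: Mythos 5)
Your proof is correct, and it follows the same skeleton as the paper's (upper bound from Construction~\ref{Cons4} with $\lambda=q/2=3$; lower bound from the $3\times3$ distance requirement matrix with off-diagonal entries $2t,2t,2t-1$ and total $6t-1$, yielding $N_L\ge t-\tfrac16$ and hence $\ge t$). The difference lies in how that last inequality is justified, and here your version is actually the sound one. The paper cites Theorem~\ref{thm3} and writes the coefficient as $\frac{4}{3(3^2-1)}=\frac16$, but Theorem~\ref{thm3} for even $q=6$ and $M=3$ only supplies $\frac{8}{M^2q}=\frac{4}{27}$, giving $N_L\ge\frac{4(6t-1)}{27}$; as you observe, this is strictly weaker than $t-\tfrac16$ and fails to force $N_L\ge t$ once $t\ge 8$ (e.g.\ $t=8$ gives only $N_L\ge 7$). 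Your per-coordinate bound $d_L(a,b)+d_L(a,c)+d_L(b,c)\le 6$ for $a,b,c\in\mathbb{Z}_6$ (via the three arc lengths summing to $6$) legitimately establishes $\sum_{i<j}d_L(\mathbf{p}_i,\mathbf{p}_j)\le 6r$, which is exactly the sharp coefficient $\tfrac16$ the paper uses without justification. In effect you have identified and repaired a gap in the paper's own argument: the claimed Plotkin-type constant for $M=3$, $q=6$ is true, but it does not follow from Theorem~\ref{thm3} and needs the elementary three-point computation you supply. The remaining ingredients (the reduction $r_L^l\ge N_L(\mathbf{D}_f(t,\mathbf{u}_1,\mathbf{u}_2,\mathbf{u}_3))$ via the subset lower bound, and the upper bound $r=\lceil t/\lfloor 6/6\rfloor\rceil=t$ from the preceding lemma) match the paper exactly.
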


\begin{proof}
	For a locally $(2t, 3)_L$-bounded function $f: \mathbb{Z}_6^k \to Im(f)$ with $|Im(f)| \ge 3$, Construction~\ref{Cons4} gives a redundancy of $r = \left \lceil \frac{t}{\left \lfloor \frac{q}{2\lambda} \right \rfloor} \right \rceil = t $ since $\lambda = \frac{q}{2} = 3$. Now if there exist  $\mathbf{u}_1, \mathbf{u}_2, \mathbf{u}_3 \in \mathbb{Z}_6^k$ with the given pairwise distances and conditions, we have the distance requirement matrix 
	\[
	\mathbf{D}_f(t, \mathbf{u}_1,\mathbf{u}_2, \mathbf{u}_3) =
	\begin{bmatrix}
		0  &  2t  &  2t  \\
		2t & 0 & 2t-1  \\
		2t & 2t-1 & 0  \\
	\end{bmatrix}
	.\]
	From the generalized Plotkin bound stated in Theorem $3$, we obtain \(N_L(D, t, \mathbf{u}_1, \mathbf{u}_2, \mathbf{u}_3) \ge \frac{4}{3(3^2 - 1)} (2t + 2t + 2t - 1) \ge t - \frac{1}{6}.
	\) Since \( N_L(D, t, \mathbf{u}_1, \mathbf{u}_2, \mathbf{u}_3) \) is an integer by definition, it follows that \( N_L(D, t, \mathbf{u}_1, \mathbf{u}_2, \mathbf{u}_3) \ge t.\) The redundancy \( r \) provided by \textit{Construction~\ref{Cons4}} attains this lower bound, and hence, \textit{Construction~\ref{Cons4}} achieves the optimal redundancy of \( r_L^l(k, t) = t \) for any $t$ and $k$.
\end{proof}
Any function can be expressed as a locally-bounded function for some values of $\rho$ and $\lambda$. In \cite{GA}, the authors have shown that the Lee weight and the Lee weight distribution functions satisfy the contiguous block condition in Lemma~\ref{Cont_LB}. Since explicit constructions are given for Lee weight and Lee weight distribution functions, it is worth comparing these constructions with \textit{Construction~\ref{Cons4}} for corresponding locally-bounded functions. These comparisons are given in the following remarks.
\begin{rem}
	\label{R3}
 Lee weight function can be expressed as a locally $(2t, 4t+1)$-bounded function\cite{GA}. For this function, we can obtain a redundancy of ${\left \lceil \frac{t}{\left \lfloor \frac{q}{8t+2} \right \rfloor} \right \rceil}$ using Construction~\ref{Cons4} for any $q \ge 8t+2$. This redundancy will be less than or equal to the redundancy obtained using Construction~\ref{Cons1}. However,  Construction~\ref{Cons1} gives FCLCs for $2t+3 \le q \le 8t+2$, which is not obtainable from Construction~\ref{Cons4}. Therefore, the redundancy for the FCLCs of the Lee weight function is
		\begin{equation}
			r =
			\begin{cases}
				t, & \text{if } 2t+3 \le q \le 8t+2 \text{ and } q \text{ is odd},\\
				t+1, & \text{if } 2t+3 \le q \le 8t+2 \text{ and } q \text{ is even},\\
				\left\lceil \frac{t}{\left\lfloor \frac{q}{8t+2} \right\rfloor} \right\rceil, & \text{if } q \ge 8t+2 .
			\end{cases}
	\end{equation}
\end{rem}
\begin{rem}
	\label{R4}
  Lee weight distribution function can be expressed as a locally $(2t, \left \lfloor \frac{4t}{T} \right \rfloor +2)$-bounded function \cite{GA}. For this function, we can obtain a redundancy of ${\left \lceil \frac{t}{\left \lfloor \frac{q}{\frac{8t}{T}+4} \right \rfloor} \right \rceil}$ using Construction~\ref{Cons4} for any $q \ge 2 (\left \lfloor \frac{4t}{T} \right \rfloor +2) $. This redundancy will be less than or equal to the redundancy obtained using  Construction~\ref{Cons1}. However,   Construction~\ref{Cons1} gives FCLCs for $5\le q \le 2(\left \lfloor \frac{4t}{T} \right \rfloor +2)$, which is not obtainable from Construction~\ref{Cons4}. Therefore, the redundancy for the FCLCs of the Lee weight distribution function is
		\begin{equation}
			r =
			\begin{cases}
				t, & \text{if } \frac{2(t+1)}{T}+3 \le q \le  2(\left \lfloor \frac{4t}{T} \right \rfloor +2) \text{ and } q \text{ is odd},\\
				t+1, & \text{if } max(2t, \frac{2(t+1)}{T}+4) \le q \le  2(\left \lfloor \frac{4t}{T} \right \rfloor +2) \text{ and } q \text{ is even},\\
				\left \lceil \frac{t}{\left \lfloor \frac{q}{\frac{8t}{T}+4} \right \rfloor} \right \rceil, & \text{if } q \ge  2(\left \lfloor \frac{4t}{T} \right \rfloor +2).
			\end{cases}
			\end{equation}
\end{rem}
 While the modular sum function can be expressed as a locally $(2t, q)$-bounded function,  Construction~\ref{Cons4} is not applicable since $\lambda > \frac{q}{2}$.
\section{Redundancy Comparisons}
\label{Redundancy Comparisons}
At the beginning, we claimed that FCLCs can significantly reduce redundancy when the message length is large and the image of the target function is relatively small. In this section, we first substantiate this claim by demonstrating that for functions belonging to any of the three previously discussed classes, the redundancy is indeed lower compared to both classical Lee error-correcting codes (ECC on data) and error-correcting codes applied directly to function values (ECC on function values), as illustrated in Table~\ref{Tab_Red_Comp}. 

\begin{table*}[!htbp]
	\centering
	\footnotesize
	\resizebox{\textwidth}{!}{
	\begin{tabular}{llccc}
		\toprule
		\textbf{ Function} & \textbf{Parameters} & 
		\shortstack{\textbf{\makecell{ECC on \\ Data \\ (Lower Bound)}}} & 
		\shortstack{\textbf{\makecell{ECC on \\Function Values \\ (Lower Bound)}}} & 
		\shortstack{\textbf{ FCLCs (Exact Redundancy)}} \\
		\midrule
		\makecell{Lee weight \\ $\mathrm{w_L}(u)$ }& \makecell{$ E = k \left\lfloor \frac{q}{2} \right\rfloor + 1,$ \\ }& $\log_q V_{t}^{(n)}$ & $\log_q{[[k \left\lfloor \frac{q}{2} \right\rfloor + 1] V_{t}^{(n)}]}$ & {\scriptsize $\begin{cases}
			\begin{aligned}
				t,\hspace{1cm} & \makecell{ 2t+3 \le q \le 8t+2  \text{ and odd } q,}\\
				t+1, \hspace{1cm} & \makecell{ 2t+3 \le q \le 8t+2  \text{ and } \text{ even } q,}\\
				{\tiny \left\lceil \frac{t}{\left\lfloor \frac{q}{8t+2} \right\rfloor} \right\rceil},\hspace{0.5cm} &  q \ge 8t+2.
			\end{aligned}
		\end{cases}$ } \\
		\makecell{Lee weight \\  distribution $\Delta_T(u)$} & \makecell{$E = \frac{k \left\lfloor \frac{q}{2} \right\rfloor + 1}{T},$}
		& $\log_q V_{t}^{(n)}$ & $\log_q{[[\frac{k \left\lfloor \frac{q}{2} \right\rfloor + 1}{T}] V_{t}^{(n)}]}$ & {\scriptsize $\begin{cases}
		\begin{aligned}
			t,\hspace{1cm} & \makecell{  \frac{2(t+1)}{T}+3 \le q  \le  2(\left \lfloor \frac{4t}{T} \right \rfloor +2) \text{ and odd } q,}\\
			t+1, \hspace{1cm} & \makecell{ max(2t, \frac{2(t+1)}{T}+4) \le q  \le  2(\left \lfloor \frac{4t}{T} \right \rfloor +2) \\  \text{ and even } q,}\\
			{\tiny{\left \lceil \frac{t}{\left \lfloor \frac{q}{\frac{8t}{T}+4} \right \rfloor} \right \rceil},}\hspace{0.5cm} &  q \ge 2(\left \lfloor \frac{4t}{T} \right \rfloor +2).
		\end{aligned}
		\end{cases}$ }\\
		\makecell{Modular sum \\ $S_{m}(\boldsymbol{u})$} & \makecell{$ E = q,$ \\
		$q \ge 5$ and $t \le \frac{q-3}{2}.$}
		& $\log_q V_{t}^{(n)}$ & $\log_q{[q V_{t}^{(n)}]}$ & $\begin{cases}
			\begin{aligned}
				&t,  \text{ if } q \text{ is odd}, \\
				&t+1,  \text{ if } q \text{ is even}.
			\end{aligned}
		\end{cases}$ \\
		\multicolumn{5}{l}{\makecell{Where $V_{1}^{(n)} = 1+2n$ for any $q \ge 3$, $V_{2}^{(n)} = 1+2n+2n^2$ for any $q \ge 5$, $V_{t}^{(n)} = \sum_{i=0}^{\min(n,t)} \binom{n}{i} 2^i \binom{t}{i} \text{ for } t \le \frac{q-1}{2}$, \\ and $n=k+r$ for ECC on data and $n=r$ for ECC on function values.}}\\
		\midrule
	\end{tabular}}
	\caption{Redundancy comparison for different functions over $\mathbb{Z}_q$.}
	\label{Tab_Red_Comp}
\end{table*}

First, we evaluate the redundancy of systematic classical Lee error-correcting codes labeled as the column
``ECC on Data” in Table~\ref{Tab_Red_Comp}, given by $ r_{ECC} =n - k$, where $n, k$ denote the codeword length and information vector length, respectively. Using the sphere-packing bound\cite{CW}, we have ${ q^{n} \ge q^{k}V_{t}^{(n)}}$, which leads to the lower bound, $ r_{ECC} =n - k \ge \log_q V_{t}^{(n)}$,  where $V_{t}^{(n)}$ denotes the volume of a Lee ball of radius $t$. $V_{1}^{(n)} = 1+2n$ for any $q \ge 3$ and $V_{2}^{(n)} = 1+2n+2n^2$ for any $q \ge 5$ \cite{CW}. For $t \le \frac{q-1}{2}$, the volume can be approximated as  $V_{t}^{(n)} = \sum_{i=0}^{min(n,t)} \binom{n}{i} 2^i \binom{t}{i}$ \cite{RR}. Therefore, the redundancy satisfies $r_{ECC} =n - k \ge \log_q [\sum_{i=0}^{min(n,t)} \binom{n}{i} 2^i \binom{t}{i}]$ for classical ECC on data. Importantly, this redundancy bound is independent of the specific function being corrected and applies uniformly across all functions.

Next, we consider a direct approach of encoding only the function values, referred to as ``ECC on function values" in Table~\ref{Tab_Red_Comp}. To determine the redundancy in this case, we analyze the minimum codelength $n$ required for a Lee code with a specified number of codewords $E$ (equal to the number of distinct function values) and minimum distance $2t+1$. The resulting codeword $\boldsymbol{c}$ is then appended to $\boldsymbol{u}$, to ensure systematic encoding, forming the transmitted codeword ($\boldsymbol{u}$, $\boldsymbol{c}$). Applying the sphere-packing bound\cite{CW} again, we obtain ${ q^{n} \ge EV_{t}^{(n)}}$, which implies \( n \ge \log_q{[E V_{t}^{(n)}]}\). This results in a lower bound on the redundancy for ECC on function values, i.e., \( r_{f} \ge \log_q{[E V_{t}^{(n)}]}\), which depends explicitly on the function through the parameter $E$.
The following example illustrates the minimal redundancy required for FCLCs compared to the lower bound on redundancy for ECC on data and ECC on function values. 
\begin{example}
 Consider the FCLC for the Lee weight function for $q=5, k=2$ and $t=1$. The required redundancy in this case is $t=1$, as noted from Table~\ref{Tab_Red_Comp}. For the same function with the same parameters, the lower bound on redundancy for ECC on data is $2$ and that of ECC on function values is $3$. 
\end{example}
To have clear illustrations of the redundancy advantage of FCLCs, we next generate plots showing the redundancy required $r$  versus the number of errors corrected $t$ for the Lee weight (Fig.~\ref{Red_LW}), Lee weight distribution (Fig.~\ref{Red_LWD}), and modular sum functions (Fig.~\ref{Red_MS}). These plots are shown below separately for odd and even values of $q$.
\begin{figure}[htbp]
	\centering
	\begin{subfigure}{0.48\textwidth}
		\centering
		\includegraphics[width=\linewidth]{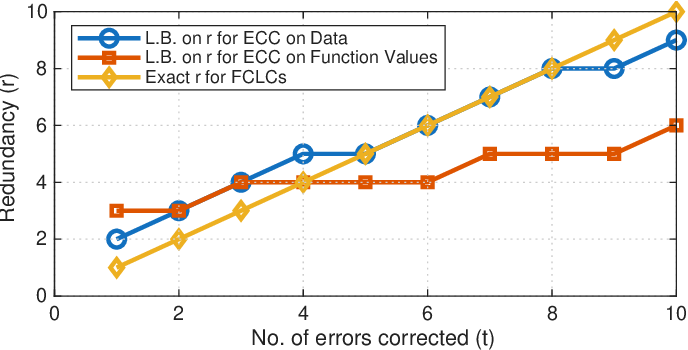}
		\caption{$q=25$, $k=25$, $t=10$}
	\end{subfigure}
	\hfill
	\begin{subfigure}{0.48\textwidth}
		\centering
		\includegraphics[width=\linewidth]{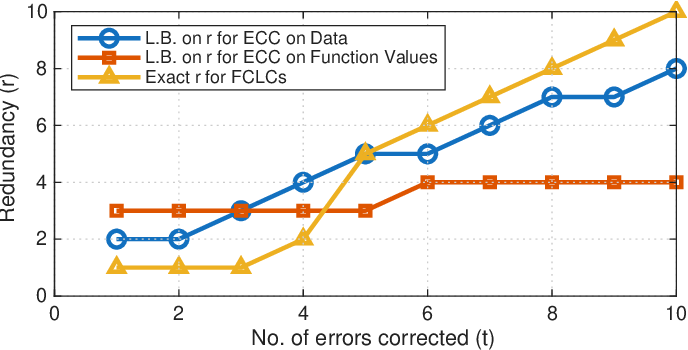}
		\caption{$q=82$, $k=50$, $t=10$}
	\end{subfigure}
	
	\caption{Redundancy vs. No. of errors corrected for Lee weight function}
	\label{Red_LW}
\end{figure}

\begin{figure}[htbp]
	\centering
	\begin{subfigure}{0.48\textwidth}
		\centering
		\includegraphics[width=\linewidth]{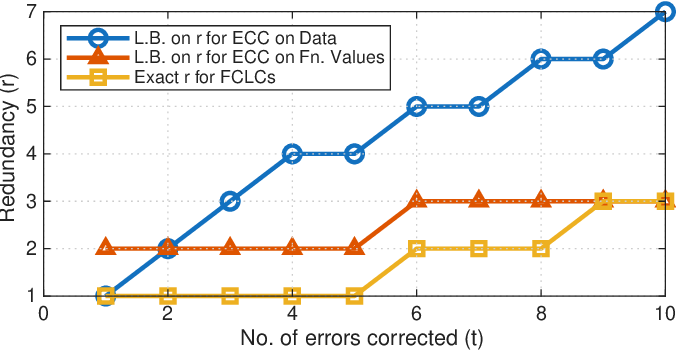}
		\caption{$q=35$, $k=15$, $t=10$, $T=18$}
	\end{subfigure}
	\hfill
	\begin{subfigure}{0.48\textwidth}
		\centering
		\includegraphics[width=\linewidth]{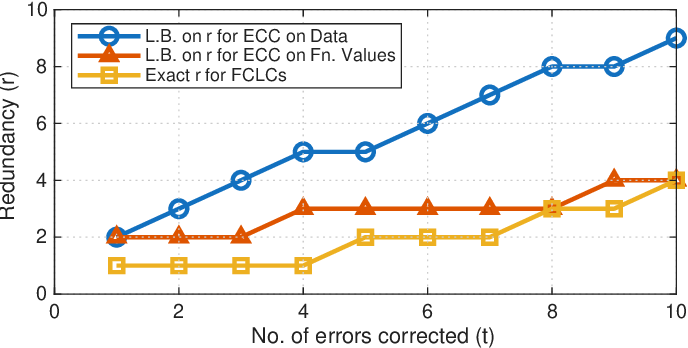}
		\caption{$q=30$, $k=30$, $t=10$, $T=16$}
	\end{subfigure}
	
	\caption{Redundancy vs. No. of errors corrected for Lee weight distribution function}
	\label{Red_LWD}
\end{figure}

\begin{figure}[htbp]
	\centering
	\begin{subfigure}{0.48\textwidth}
		\centering
		\includegraphics[width=\linewidth]{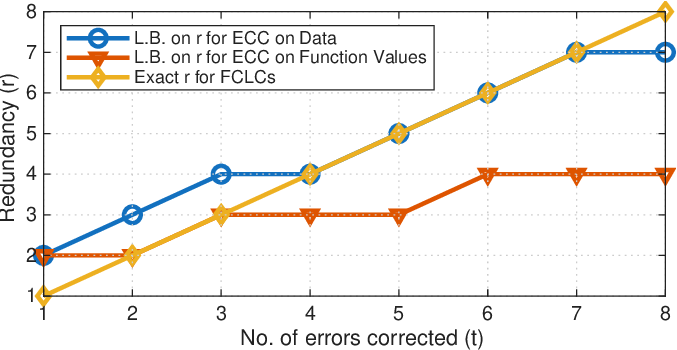}
		\caption{$q=19$, $k=15$, $t=8$}
	\end{subfigure}
	\hfill
	\begin{subfigure}{0.48\textwidth}
		\centering
		\includegraphics[width=\linewidth]{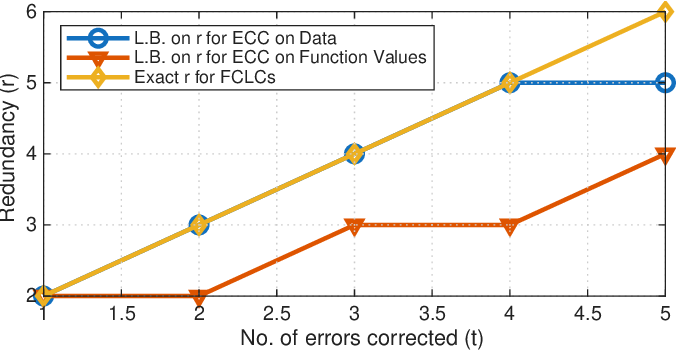}
		\caption{$q=14$, $k=12$, $t=5$}
	\end{subfigure}
	
	\caption{Redundancy vs. No. of errors corrected for modular sum function}
	\label{Red_MS}
\end{figure}
From the figures, we observe that for smaller values of $t$, the exact redundancy values achieved by our FCLC constructions are smaller than both the lower bound on the redundancy for ECC on data and the lower bound on the redundancy for ECC on function values. This clearly illustrates the redundancy advantage of our FCLC constructions in this regime. When $t$ is large, the lower bounds on the redundancy for ECC on function values and ECC on data become equal to or smaller than the exact redundancy achieved by the FCLC constructions. It should be noted that the redundancy for ECC on data and the redundancy for ECC on function values are only lower bounds. In general, there are no known Lee codes that achieve the lower bound on the redundancy for ECC on data, and the known constructions typically require larger redundancies. This is illustrated with the following examples.

In \cite{NCB}, Berlekamp provides constructions of negacyclic codes for the Lee metric. As given in \cite{NCB}, for $q=17$ and $t=6$, there exists a negacyclic Lee code with parameters $k=14$ and redundancy $r=10$. For the same parameters $q=17$, $k=14$, and $t=6$, the lower bound on redundancy for ECC on data is $6$. However, the Berlekamp construction requires redundancy $10$, whereas the redundancy obtained from the FCLC construction is $6$. Similarly, for $q=17$ and $t=7$,  there exists a negacyclic Lee code with parameters $k=12$ and $r=12$. For $q=17$, $k=12$ and $t=7$, although the lower bound on redundancy for ECC on data is $6$, redundancy given by Berlekamp construction is $12$,   whereas the redundancy obtained from the FCLC construction is only $7$. 

 Next, we compare the upper bounds on optimal redundancy for FCLCs of Lee weight and Lee weight distribution functions. Since the optimal redundancy is defined as the minimum redundancy over all feasible FCLCs, the redundancy achieved by an explicit construction directly provides an upper bound on the optimal redundancy for the considered function. Therefore, by Remark \ref{R3}, we obtain an  upper bound on the optimal redundancy of FCLCs for the Lee weight function as follows.  

\begin{equation*}
	r_L^\mathrm{w}(q, k,t) \le \begin{cases}
		t, & \text{if } 2t+3 \le q \le 8t+2 \text{ and } q \text{ is odd},\\
		t+1, & \text{if } 2t+3 \le q \le 8t+2 \text{ and } q \text{ is even},\\
		\left\lceil \frac{t}{\left\lfloor \frac{q}{8t+2} \right\rfloor} \right\rceil, & \text{if } q \ge 8t+2.
	\end{cases}
\end{equation*} 
An upper bound on the optimal redundancy of FCLC for the Lee weight function is given in \cite{GA}, as
$r_L^\mathrm{w}(q, k,t) \le (4t+1) \left\lceil \frac{t}{\lfloor \frac{q}{2} \rfloor} \right\rceil $. For all $q \ge 2$ and $t \ge 1$, we have $\left\lceil \frac{t}{\lfloor \frac{q}{2} \rfloor} \right\rceil \ge 1$. Therefore, $ (4t+1) \left\lceil \frac{t}{\lfloor \frac{q}{2} \rfloor} \right\rceil \ge (4t+1) > t+1$. That is, the upper bound on the optimal redundancy of FCLCs for the Lee weight function obtained from our explicit construction is tighter than the upper bound given in \cite{GA}. 

By Remark~\ref{R4}, we obtain an  upper bound on the optimal redundancy of FCLCs for the Lee weight distribution function as follows.

\begin{equation*}
	r_L^\Delta(q, k,t) \le \begin{cases}
		t, & \text{if } \frac{2(t+1)}{T}+3 \le q \le  2(\left \lfloor \frac{4t}{T} \right \rfloor +2) \text{ and } q \text{ is odd},\\
		t+1, & \text{if } max(2t, \frac{2(t+1)}{T}+4) \le q \le  2(\left \lfloor \frac{4t}{T} \right \rfloor +2) \text{ and } q \text{ is even},\\
		\left \lceil \frac{t}{\left \lfloor \frac{q}{\frac{8t}{T}+4} \right \rfloor} \right \rceil, & \text{if } q \ge  2(\left \lfloor \frac{4t}{T} \right \rfloor +2).
	\end{cases}
\end{equation*} 

An upper bound on the optimal redundancy of FCLC for the Lee weight distribution function for $T < k \lfloor \frac{q}{2} \rfloor$ is given in \cite{GA}, as $ 	r_L^\Delta(q, k,t) \le \left( \left\lfloor \frac{4t}{T} \right\rfloor +2 \right) \left\lceil \frac{t}{\lfloor \frac{q}{2} \rfloor} \right\rceil$. This upper bound and the upper bound obtained by Lemma \ref{Lem_Cons1_LWD} are simultaneously valid when $t \le \frac{(q-3)T}{2}-1$ and $T < k \lfloor \frac{q}{2} \rfloor$, which holds for a large range of parameters. Since  $\left\lceil \frac{t}{\lfloor \frac{q}{2} \rfloor} \right\rceil \ge  \frac{t}{\lfloor \frac{q}{2} \rfloor} $ and $\left\lfloor \frac{4t}{T} \right\rfloor +2  \ge  \frac{4t}{T}  +1 $, we have $\left( \left\lfloor \frac{4t}{T} \right\rfloor +2 \right) \left\lceil \frac{t}{\lfloor \frac{q}{2} \rfloor} \right\rceil \ge \left(\frac{4t}{T}  +1\right) \frac{t}{\lfloor \frac{q}{2} \rfloor}$. And, we have $t \le \left(\frac{4t}{T}  +1\right) \frac{t}{\lfloor \frac{q}{2} \rfloor}$ when $t  \ge \frac{T}{4} \left(\lfloor \frac{q}{2} \rfloor -1\right)$. Therefore, in the range $\frac{T}{4} \left(\lfloor \frac{q}{2} \rfloor -1\right) \le t \le \frac{(q-3)T}{2}-1$, the upper bound on the optimal redundancy of FCLCs for the Lee weight distribution function obtained from our explicit construction is tighter than the upper bound given in \cite{GA}.
\section{Conclusion}
\label{Conclusion}
This work presented explicit constructions of FCLCs for the Lee weight function, the Lee weight distribution function, the modular sum function and the locally-bounded function. We demonstrated that the proposed constructions achieve optimal redundancy for certain parameters. We also proposed a Plotkin-like bound for irregular Lee-distance codes. A comparative analysis with classical Lee error-correcting codes (ECC on data) and codes that correct errors directly in function values (ECC on function values) showed that FCLCs offer significant reductions in redundancy while preserving function correctness under error-prone conditions. Further research directions include the study of FCLCs for new classes of functions in the Lee metric as well as the investigation of tighter lower and upper bounds on the optimal redundancy.

\section*{Acknowledgement}
This work was supported partly by the Science and Engineering Research Board (SERB) of Department of Science and Technology (DST), Government of India, through J.C Bose National Fellowship to B. Sundar Rajan.

\begin{appendices}
	\section{Proof of Lemma \ref{Lem_Cons1_FCLC}}\label{appendix:Lem_Cons1_FCLC}

	To prove that \textit{Construction~\ref{Cons1}}  results in an FCLC for the Lee weight function, we have to show that for all \( \mathbf{u}, \mathbf{u'} \in \mathbb{Z}_q^k \) with \( f(\mathbf{u}) \neq f(\mathbf{u'}) \), \textit{Construction~\ref{Cons1}} ensures \(d_L(\mathrm{Enc}(\mathbf{u}), \mathrm{Enc}(\mathbf{u'})) \geq 2t + 1 \). Since \( f(\mathbf{u}) \neq f(\mathbf{u'}) \), WLOG assume \( f(\mathbf{u}) > f(\mathbf{u'}) \).  The proof is done for odd and even $q$ separately.  We consider the following cases. \\
	
	\noindent $1.$ Proof for odd values of $q$. \\
	\noindent $\bullet$ Case $1$: $d_{L}({p}_{\boldsymbol{u}}, {p}_{\boldsymbol{u'}}) =0$.  By $(\ref{Eq_Cons1})$, $d_{L}({p}_{\boldsymbol{u}}, {p}_{\boldsymbol{u'}}) =0$ implies \((2f(\boldsymbol{u})) \bmod q = (2f(\boldsymbol{u'})) \bmod q\). That is, $f(\mathbf{u}) \equiv f(\mathbf{u'}) \pmod q$. Thus, $f(\mathbf{u}) - f(\mathbf{u'})  \in \{ 0,  q,  2q, \ldots \}$. Since $f(\mathbf{u}) \neq f(\mathbf{u'})$ and \( f(\mathbf{u}) > f(\mathbf{u'}) \), we have  $f(\mathbf{u}) - f(\mathbf{u'}) \ge q $. Lee distance satisfies triangle inequality \cite{RR}. Therefore, $d_L(\mathbf{u},0) \le d_L(\mathbf{u}, \mathbf{u'}) + d_L(\mathbf{u'},0)$, which implies $d_L(\mathbf{u}, \mathbf{u'})  \ge d_L(\mathbf{u},0) - d_L(\mathbf{u'},0) = \mathrm{w_L}(\boldsymbol{u}) - \mathrm{w_L}(\boldsymbol{u'})=f(\mathbf{u})-f(\mathbf{u'})$.  Therefore, $f(\mathbf{u}) - f(\mathbf{u'}) \ge q$ implies $d_{L}(\boldsymbol{u}, \boldsymbol{u'}) \ge q $. By  the condition in Lemma \ref{Lem_Cons1_FCLC}, $q \ge 2t+3$.  Therefore, $d_{L}(\mathrm{Enc}(\boldsymbol{u}), \mathrm{Enc}(\boldsymbol{u'})) = d_{L}(\boldsymbol{u}, \boldsymbol{u'}) + d_{L}(\{{p}_{\boldsymbol{u}}\}^{t}, \{{p}_{\boldsymbol{u'}}\}^{t}) \ge (2t+3) + 0 > 2t+1$. 
	
	\noindent $\bullet$ Case $2$: $d_{L}({p}_{\boldsymbol{u}}, {p}_{\boldsymbol{u'}}) =1$. By definition of the Lee distance, $d_{L}({p}_{\boldsymbol{u}}, {p}_{\boldsymbol{u'}}) =1$ implies $min(|{p}_{\boldsymbol{u}} - {p}_{\boldsymbol{u'}}|, q - |{p}_{\boldsymbol{u}} - {p}_{\boldsymbol{u'}}|) = 1$. That is, $|{p}_{\boldsymbol{u}} - {p}_{\boldsymbol{u'}}|$ is either $1$ or $q-1$. Therefore, for odd $q$ by $(\ref{Eq_Cons1})$ we have, $|(2f(\boldsymbol{u})) \bmod q - (2f(\boldsymbol{u'})) \bmod q|$ is either $1$ or $q-1$.  Thus, $(2f(\boldsymbol{u})) \bmod q - (2f(\boldsymbol{u'})) \bmod q \in \{\pm 1, \pm (q-1)\}$. This implies, $2 (f(\boldsymbol{u}) -  f(\boldsymbol{u'}) ) \equiv \pm 1 \pmod q$. That is $f(\mathbf{u}) - f(\mathbf{u'}) \equiv \pm 2^{-1} \pmod q $, where $2^{-1}$ denotes the multiplicative inverse of $2$ in $\mathbb{Z}_q$. Since $q$ is odd, $2^{-1}$  is $\frac{q+1}{2} \pmod q$ and $-(\frac{q+1}{2})=\frac{q-1}{2} \pmod q$. Therefore, we have $f(\mathbf{u}) - f(\mathbf{u'}) \ge \frac{q-1}{2} $. Thus, $ d_{L}(\boldsymbol{u}, \boldsymbol{u'}) \ge f(\mathbf{u}) - f(\mathbf{u'}) \ge \frac{q-1}{2}$. Therefore, $d_{L}(\mathrm{Enc}(\boldsymbol{u}), \mathrm{Enc}(\boldsymbol{u'})) = d_{L}(\boldsymbol{u}, \boldsymbol{u'}) + d_{L}(\{{p}_{\boldsymbol{u}}\}^{t}, \{{p}_{\boldsymbol{u'}}\}^{t}) \ge \frac{q-1}{2} + t \ge \frac{2t+3-1}{2} +t = 2t+1$. 
	
	\noindent $\bullet$ Case $3$:  $d_{L}({p}_{\boldsymbol{u}}, {p}_{\boldsymbol{u'}}) \ge 2$. For any $\mathbf{u},\mathbf{u'}$ with \( f(\mathbf{u}) \neq f(\mathbf{u'}) \), \(d_{L}(\boldsymbol{u}, \boldsymbol{u'}) \ge 1 \). Therefore, 
	$d_{L}(\mathrm{Enc}(\boldsymbol{u}), \mathrm{Enc}(\boldsymbol{u'})) = d_{L}(\boldsymbol{u}, \boldsymbol{u'}) + d_{L}(\{{p}_{\boldsymbol{u}}\}^{t}, \{{p}_{\boldsymbol{u'}}\}^{t}) \ge 1 + 2t = 2t+1$. \\
	
	\noindent $2.$ Proof for even values of $q$. \\
	\noindent $\bullet$ Case $1$: $d_{L}({p}_{\boldsymbol{u}}, {p}_{\boldsymbol{u'}}) =0$.  In this case, we use the fact that, if $a \equiv b \pmod q$ and $q$ is even, then $a$ and $b$ are either both even or both odd. By $(\ref{Eq_Cons1})$, $d_{L}({p}_{\boldsymbol{u}}, {p}_{\boldsymbol{u'}}) =0$ implies either \((2f(\boldsymbol{u})) \bmod q = (2f(\boldsymbol{u'})) \bmod q\), or \((2f(\boldsymbol{u})+1) \bmod q = (2f(\boldsymbol{u'})+1) \bmod q\), or \((2f(\boldsymbol{u})) \bmod q = (2f(\boldsymbol{u'})+1) \bmod q\). The case \((2f(\boldsymbol{u})) \bmod q = (2f(\boldsymbol{u'})+1) \bmod q\) is not possible since left side is even and right side is odd. Therefore, either \((2f(\boldsymbol{u})) \bmod q = (2f(\boldsymbol{u'})) \bmod q\), or \((2f(\boldsymbol{u})+1) \bmod q = (2f(\boldsymbol{u'})+1) \bmod q\). That is, in both cases $f(\mathbf{u}) \equiv f(\mathbf{u'}) \pmod {q/2}$. Therefore, $f(\mathbf{u}) - f(\mathbf{u'}) \in \{ 0, \frac{q}{2},  q, \ldots\}$. Since $f(\mathbf{u}) \neq f(\mathbf{u'})$, $f(\mathbf{u}) - f(\mathbf{u'}) \neq 0$. Suppose $f(\mathbf{u}) = f(\mathbf{u'}) + \frac{q}{2}$. If $f(\mathbf{u'}) \bmod q \in \{0,1,...,(\frac{q}{2}-1) \}$, then $f(\mathbf{u}) \bmod q \in \{\frac{q}{2},(\frac{q}{2}+1),...,(q-1)\}$. In this case, by $(\ref{Eq_Cons1})$, \((2f(\boldsymbol{u})) \bmod q = (2f(\boldsymbol{u'})+1) \bmod q\), which is not possible when $d_{L}({p}_{\boldsymbol{u}}, {p}_{\boldsymbol{u'}}) =0$. Therefore, we have $f(\mathbf{u}) - f(\mathbf{u'}) \ge q$. This implies $d_{L}(\boldsymbol{u}, \boldsymbol{u'}) \ge q $. By  the condition in Lemma \ref{Lem_Cons1_FCLC}, $q \ge 2t+3$.  Therefore, $d_{L}(\mathrm{Enc}(\boldsymbol{u}), \mathrm{Enc}(\boldsymbol{u'})) = d_{L}(\boldsymbol{u}, \boldsymbol{u'}) + d_{L}(\{{p}_{\boldsymbol{u}}\}^{t}, \{{p}_{\boldsymbol{u'}}\}^{t}) + d_{L}(\{{p'}_{\boldsymbol{u}}\}, \{{p'}_{\boldsymbol{u'}}\}) \ge d_{L}(\boldsymbol{u}, \boldsymbol{u'}) + d_{L}(\{{p}_{\boldsymbol{u}}\}^{t}, \{{p}_{\boldsymbol{u'}}\}^{t}) \ge (2t+3) + 0 > 2t+1$. 
	
	\noindent $\bullet$ Case $2$: $d_{L}({p}_{\boldsymbol{u}}, {p}_{\boldsymbol{u'}}) =1$. From  definition of the Lee distance, we have $d_{L}({p}_{\boldsymbol{u}}, {p}_{\boldsymbol{u'}}) =1$ implies $|{p}_{\boldsymbol{u}} - {p}_{\boldsymbol{u'}}|$ is either $1$ or $q-1$. By $(\ref{Eq_Cons1})$, the condition $d_{L}({p}_{\boldsymbol{u}}, {p}_{\boldsymbol{u'}}) =1$ is possible only if one of $f(\boldsymbol{u}) \bmod q$ and $f(\boldsymbol{u'}) \bmod q$ lies in $\{0,1,...,(\frac{q}{2}-1)\}$, while the other lies in $\{\frac{q}{2},(\frac{q}{2}+1),...,(q-1)\}$. \\ Subcase 1: $f(\boldsymbol{u}) \bmod q \in \{0,1,...,(\frac{q}{2}-1)\}$ and $f(\boldsymbol{u'}) \bmod q \in \{\frac{q}{2},(\frac{q}{2}+1),...,(q-1)\}$. In this case by $(\ref{Eq_Cons1})$ we have, $|(2f(\boldsymbol{u})) \bmod q - (2f(\boldsymbol{u'})+1) \bmod q|$ is either $1$ or $q-1$. Thus, $(2f(\boldsymbol{u})) \bmod q - (2f(\boldsymbol{u'})) \bmod q -1 \in \{\pm 1, \pm (q-1)\}$. This implies, $2 (f(\boldsymbol{u}) -  f(\boldsymbol{u'}) ) \bmod q \in  \{0,2,q,2-q\}$. That is either $f(\mathbf{u}) - f(\mathbf{u'}) \equiv 0 \pmod {\frac{q}{2}} $ or $f(\mathbf{u}) - f(\mathbf{u'}) \equiv 1 \pmod {\frac{q}{2}} $. Equivalently, either $f(\boldsymbol{u}) -f(\boldsymbol{u'}) \in \{ 0, \frac{q}{2}, q, \frac{3q}{2}, \ldots \}$ or $f(\boldsymbol{u}) -f(\boldsymbol{u'}) \in \{ 1, 1+\frac{q}{2}, 1+q, 1+\frac{3q}{2}, \ldots \}$. The case of $f(\boldsymbol{u}) -f(\boldsymbol{u'}) = 1 $ is exceptional and will be addressed separately. Excluding this case, and since  $f(\mathbf{u}) \ne f(\mathbf{u'})$, we have $f(\mathbf{u}) - f(\mathbf{u'}) \ge \frac{q}{2}$. Thus, $ d_{L}(\boldsymbol{u}, \boldsymbol{u'}) \ge f(\mathbf{u}) - f(\mathbf{u'}) \ge \frac{q}{2}$. \\  
	Subcase 2: $f(\boldsymbol{u}) \bmod q \in \{\frac{q}{2},(\frac{q}{2}+1),...,(q-1)\} $ and $f(\boldsymbol{u'}) \bmod q \in \{0,1,...,(\frac{q}{2}-1)\}$. In this case by $(\ref{Eq_Cons1})$ we have, $|(2f(\boldsymbol{u})+1) \bmod q - (2f(\boldsymbol{u'})) \bmod q|$ is either $1$ or $q-1$. From this we obtain, either $f(\mathbf{u}) - f(\mathbf{u'}) \equiv 0 \pmod {\frac{q}{2}} $ or $f(\mathbf{u}) - f(\mathbf{u'}) \equiv \frac{q}{2}-1 \pmod {\frac{q}{2}} $. Therefore, $f(\mathbf{u}) - f(\mathbf{u'}) \ge \frac{q}{2}-1$. Thus, $ d_{L}(\boldsymbol{u}, \boldsymbol{u'}) \ge f(\mathbf{u}) - f(\mathbf{u'}) \ge \frac{q}{2}-1$.  
	
	Thus, when $d_{L}({p}_{\boldsymbol{u}}, {p}_{\boldsymbol{u'}}) =1$, we have $ d_{L}(\boldsymbol{u}, \boldsymbol{u'}) \ge \frac{q}{2}-1$. Therefore, $d_{L}(\mathrm{Enc}(\boldsymbol{u}), \mathrm{Enc}(\boldsymbol{u'})) = d_{L}(\boldsymbol{u}, \boldsymbol{u'}) + d_{L}(\{{p}_{\boldsymbol{u}}\}^{t}, \{{p}_{\boldsymbol{u'}}\}^{t}) + d_{L}(\{{p'}_{\boldsymbol{u}}\}, \{{p'}_{\boldsymbol{u'}}\}) \ge d_{L}(\boldsymbol{u}, \boldsymbol{u'}) + d_{L}(\{{p}_{\boldsymbol{u}}\}^{t}, \{{p}_{\boldsymbol{u'}}\}^{t}) \ge \frac{q}{2}-1 + t \ge \frac{2t+3}{2}-1+t = 2t+\frac{1}{2}$. Thus, $d_{L}(\mathrm{Enc}(\boldsymbol{u}), \mathrm{Enc}(\boldsymbol{u'})) \ge 2t+1$. 
	
	Next, consider the case when  $f(\boldsymbol{u}) -f(\boldsymbol{u'}) = 1 $. This is possible only when $f(\boldsymbol{u}) \bmod q = 0$ and $f(\boldsymbol{u'}) \bmod q = q-1$. Now, consider the encoding given in Construction \ref{Cons1}. We have, $d_{L}(\mathrm{Enc}(\boldsymbol{u}), \mathrm{Enc}(\boldsymbol{u'})) = d_{L}(\boldsymbol{u}, \boldsymbol{u'}) + d_{L}(\{{p}_{\boldsymbol{u}}\}^{t}, \{{p}_{\boldsymbol{u'}}\}^{t}) + d_{L}(\{{p'}_{\boldsymbol{u}}\}, \{{p'}_{\boldsymbol{u'}}\}) \ge 1 + t + d_{L}(0,\frac{q}{2}) = 1 + t + \frac{q}{2} \ge 1+t + t + \frac{3}{2} > 2t+1$. 
	
	\noindent $\bullet$ Case $3$:  $d_{L}({p}_{\boldsymbol{u}}, {p}_{\boldsymbol{u'}}) \ge 2$. For any $\mathbf{u},\mathbf{u'}$ with \( f(\mathbf{u}) \neq f(\mathbf{u'}) \), \(d_{L}(\boldsymbol{u}, \boldsymbol{u'}) \ge 1 \). Therefore, 
	$d_{L}(\mathrm{Enc}(\boldsymbol{u}), \mathrm{Enc}(\boldsymbol{u'})) = d_{L}(\boldsymbol{u}, \boldsymbol{u'}) + d_{L}(\{{p}_{\boldsymbol{u}}\}^{t}, \{{p}_{\boldsymbol{u'}}\}^{t}) + d_{L}(\{{p'}_{\boldsymbol{u}}\}, \{{p'}_{f\boldsymbol{u'}}\}) \ge d_{L}(\boldsymbol{u}, \boldsymbol{u'}) + d_{L}(\{{p}_{\boldsymbol{u}}\}^{t}, \{{p}_{\boldsymbol{u'}}\}^{t}) \ge 1 + 2t = 2t+1$. \hfill $\blacksquare$.
	
	\section{Proof of Lemma \ref{Lem_Cons1_LWD}}\label{appendix:Lem_Cons1_LWD}

		The proof structure follows similar to the proof of Lemma \ref{Lem_Cons1_FCLC}. Therefore, we omit the detailed proof and highlight only the differences arising due to the new function. \\
		
		\noindent $1.$ Proof for odd values of $q$. \\
		\noindent $\bullet$ Case $1$: $d_{L}({p}_{\boldsymbol{u}}, {p}_{\boldsymbol{u'}}) =0$.  As shown in the proof of Lemma \ref{Lem_Cons1_FCLC}, $d_{L}({p}_{\boldsymbol{u}}, {p}_{\boldsymbol{u'}}) =0$ implies  $f(\mathbf{u}) - f(\mathbf{u'}) \ge q $. Therefore, we have $ \frac{\mathrm{w_L}(\boldsymbol{u}) -\mathrm{w_L}(\boldsymbol{u'})}{T}   \geq \left \lfloor \frac{\mathrm{w_L}(\boldsymbol{u})}{T} \right \rfloor - \left \lfloor \frac{\mathrm{w_L}(\boldsymbol{u'})}{T} \right \rfloor -1 =f(\mathbf{u}) - f(\mathbf{u'})  -1  \geq q-1$. Since,  $ d_{L}(\boldsymbol{u}, \boldsymbol{u'}) \ge \mathrm{w_L}(\boldsymbol{u}) -\mathrm{w_L}(\boldsymbol{u'})$, we have $ d_{L}(\boldsymbol{u}, \boldsymbol{u'}) \ge (q-1) T$. Since $ (q-3) T \ge 2(t+1)$ by condition for odd $q$ in Lemma \ref{Lem_Cons1_LWD}, we have $ d_{L}(\boldsymbol{u}, \boldsymbol{u'}) \ge (q-1) T > (q-3) T \ge 2(t+1) > 2t+1$. Therefore, $d_{L}(\mathrm{Enc}(\boldsymbol{u}), \mathrm{Enc}(\boldsymbol{u'})) =  d_{L}(\boldsymbol{u}, \boldsymbol{u'}) + d_{L}(\{{p}_{\boldsymbol{u}}\}^{t}, \{{p}_{\boldsymbol{u'}}\}^{t}) > (2t+1) + 0 = 2t+1$.  
		
		\noindent $\bullet$ Case $2$: $d_{L}({p}_{\boldsymbol{u}}, {p}_{\boldsymbol{u'}}) =1$. As shown in the proof of Lemma \ref{Lem_Cons1_FCLC}, $d_{L}({p}_{\boldsymbol{u}}, {p}_{\boldsymbol{u'}}) =1$ implies $f(\mathbf{u}) - f(\mathbf{u'}) \ge \frac{q-1}{2} $. Therefore, $ d_{L}(\boldsymbol{u}, \boldsymbol{u'}) \ge \mathrm{w_L}(\boldsymbol{u}) -\mathrm{w_L}(\boldsymbol{u'}) \ge (f(\mathbf{u}) - f(\mathbf{u'})  -1) T \ge \left(\frac{q-1}{2}-1\right)T= \left(\frac{q-3}{2}\right)T$. Since $ (q-3) T \ge 2(t+1)$, we have  $ d_{L}(\boldsymbol{u}, \boldsymbol{u'}) \ge \left(\frac{q-3}{2}\right)T \ge t+1$. Therefore, $d_{L}(\mathrm{Enc}(\boldsymbol{u}), \mathrm{Enc}(\boldsymbol{u'})) = d_{L}(\boldsymbol{u}, \boldsymbol{u'}) + d_{L}(\{{p}_{\boldsymbol{u}}\}^{t}, \{{p}_{\boldsymbol{u'}}\}^{t}) \ge (t+1) + t = 2t+1$. 
		
		\noindent $\bullet$ Case $3$:  $d_{L}({p}_{\boldsymbol{u}}, {p}_{\boldsymbol{u'}}) \ge 2$. The argument is same as that of Case $3$ for odd values of $q$ in the proof of Lemma \ref{Lem_Cons1_FCLC}. \\
		
		\noindent $2.$ Proof for even values of $q$. \\
		\noindent $\bullet$ Case $1$: $d_{L}({p}_{\boldsymbol{u}}, {p}_{\boldsymbol{u'}}) =0$.  As shown in the proof of Lemma \ref{Lem_Cons1_FCLC}, for even $q$, $d_{L}({p}_{\boldsymbol{u}}, {p}_{\boldsymbol{u'}}) =0$ implies  $f(\mathbf{u}) - f(\mathbf{u'}) \ge q $. Therefore, as shown in Case $1$ for odd $q$, we have $ d_{L}(\boldsymbol{u}, \boldsymbol{u'}) \ge (q-1) T$. Since $ (q-4) T \ge 2(t+1)$ by condition for even $q$ in Lemma \ref{Lem_Cons1_LWD}, we have $ d_{L}(\boldsymbol{u}, \boldsymbol{u'}) \ge (q-1) T > (q-4) T \ge 2(t+1) > 2t+1$.  Therefore, $d_{L}(\mathrm{Enc}(\boldsymbol{u}), \mathrm{Enc}(\boldsymbol{u'})) = d_{L}(\boldsymbol{u}, \boldsymbol{u'}) + d_{L}(\{{p}_{\boldsymbol{u}}\}^{t}, \{{p}_{\boldsymbol{u'}}\}^{t}) + d_{L}(\{{p'}_{\boldsymbol{u}}\}, \{{p'}_{\boldsymbol{u'}}\}) \ge d_{L}(\boldsymbol{u}, \boldsymbol{u'}) + d_{L}(\{{p}_{\boldsymbol{u}}\}^{t}, \{{p}_{\boldsymbol{u'}}\}^{t}) \ge 4(t+1) + 0 > 2t+1$. 
		
		\noindent $\bullet$ Case $2$: $d_{L}({p}_{\boldsymbol{u}}, {p}_{\boldsymbol{u'}}) =1$. As shown in the proof of Lemma \ref{Lem_Cons1_FCLC}, for even $q$, $d_{L}({p}_{\boldsymbol{u}}, {p}_{\boldsymbol{u'}}) =1$ implies $f(\mathbf{u}) - f(\mathbf{u'}) \ge \frac{q}{2}-1 $, in all cases except one special case mentioned in the proof of Lemma \ref{Lem_Cons1_FCLC}. Therefore, $ d_{L}(\boldsymbol{u}, \boldsymbol{u'}) \ge \mathrm{w_L}(\boldsymbol{u}) -\mathrm{w_L}(\boldsymbol{u'}) \ge (f(\mathbf{u}) - f(\mathbf{u'})  -1) T \ge \left(\frac{q}{2}-1-1\right)T=\left(\frac{q-4}{2}\right)T$. Since $ (q-4) T \ge 2(t+1)$ we have  $ d_{L}(\boldsymbol{u}, \boldsymbol{u'}) \ge \left(\frac{q-4}{2}\right)T \ge t+1$. Therefore, $d_{L}(\mathrm{Enc}(\boldsymbol{u}), \mathrm{Enc}(\boldsymbol{u'})) = d_{L}(\boldsymbol{u}, \boldsymbol{u'}) + d_{L}(\{{p}_{\boldsymbol{u}}\}^{t}, \{{p}_{\boldsymbol{u'}}\}^{t}) + d_{L}(\{{p'}_{\boldsymbol{u}}\}, \{{p'}_{\boldsymbol{u'}}\}) \ge d_{L}(\boldsymbol{u}, \boldsymbol{u'}) + d_{L}(\{{p}_{\boldsymbol{u}}\}^{t}, \{{p}_{\boldsymbol{u'}}\}^{t}) \ge (t+1) + t = 2t+1$.
		
		Now consider the exceptional case, i.e., when $d_{L}({p}_{\boldsymbol{u}}, {p}_{\boldsymbol{u'}}) =1$, $f(\boldsymbol{u}) \bmod q \in \{0,1,...,(\frac{q}{2}-1)\}$ and $f(\boldsymbol{u'}) \bmod q \in \{\frac{q}{2},(\frac{q}{2}+1),...,(q-1)\}$, and $f(\boldsymbol{u}) -f(\boldsymbol{u'}) = 1 $. For any $\mathbf{u},\mathbf{u'}$ with \( f(\mathbf{u}) \neq f(\mathbf{u'}) \), \(d_{L}(\boldsymbol{u}, \boldsymbol{u'}) \ge 1 \). Therefore, by the encoding given in Construction \ref{Cons1}, we have, $d_{L}(\mathrm{Enc}(\boldsymbol{u}), \mathrm{Enc}(\boldsymbol{u'})) = d_{L}(\boldsymbol{u}, \boldsymbol{u'}) + d_{L}(\{{p}_{\boldsymbol{u}}\}^{t}, \{{p}_{\boldsymbol{u'}}\}^{t}) + d_{L}(\{{p'}_{\boldsymbol{u}}\}, \{{p'}_{\boldsymbol{u'}}\}) \ge 1 + t + d_{L}(0,\frac{q}{2}) = 1 + t + \frac{q}{2} \ge 1+t + t =2t+1$. 
		
		\noindent $\bullet$ Case $3$:  $d_{L}({p}_{\boldsymbol{u}}, {p}_{\boldsymbol{u'}}) \ge 2$. The argument is same as that of Case $3$ for even values of $q$ in the proof of Lemma \ref{Lem_Cons1_FCLC}. \hfill $\blacksquare$.
		
\section{Proof of Lemma \ref{Lem_Cons2_FCLC}}\label{appendix:Lem_Cons2_FCLC}		
	
	For the modular sum function, we first show that \(d_{L}(\boldsymbol{u}, \boldsymbol{u'}) \ge d_{L}(f(\mathbf{u}), f(\mathbf{u'}) ) \). 
	
	Since \( f(\mathbf{u}) \neq f(\mathbf{u'}) \), WLOG assume \( f(\mathbf{u}) > f(\mathbf{u'}) \). Let $\boldsymbol{u}=(u_1,u_2,...,u_k)$ and $\boldsymbol{u'}=(u_1',u_2',...,u_k')$. We have,
	$d_{L}\mathbf{(u,u')} = \sum_{i=1}^{k} w_{L} ((u_i - u'_i) \bmod q)$. And,
	\allowdisplaybreaks
	\begin{align*}
		 d_{L}(f(\boldsymbol{u}), f(\boldsymbol{u'})) & = w_L((f(\mathbf{u}) - f(\mathbf{u'})) \\& = w_L((\sum_{i=1}^{k} {u}_{i}) \bmod q - (\sum_{i=1}^{k} {u'}_{i}) \bmod q)  \\& = w_L( (\sum_{i=1}^{k}(u_i - u'_i) \bmod q) \bmod q) \\& \le \sum_{i=1}^{k} w_L((u_i - u'_i) \bmod q), \text{since Lee wight satisfies triangular inequality \cite{RS}} \\& = d_{L}\mathbf{(u,u')}.
	\end{align*} 
	
	Now the proof of  Lemma \ref{Lem_Cons2_FCLC} is done for odd and even $q$ separately. \\
	
	\noindent $1.$ Proof for odd values of $q$. \\
	Since $f(\mathbf{u}) \in \{0,1,2,...,q-1\}$ and $q$ is odd, the mapping $f(\mathbf{u}) \rightarrow p_{\mathbf{u}} =(2f(\mathbf{u})) \bmod q $ in (\ref{eq6}) is a bijection on $\{0,1,2,...,q-1\}$. Therefore, when \( f(\mathbf{u}) \neq f(\mathbf{u'}) \), we have $ p_{\mathbf{u}} \neq p_{\mathbf{u'}} $. Consider the following cases. \\
	\noindent $\bullet$ Case $1$: $d_{L}({p}_{\boldsymbol{u}}, {p}_{\boldsymbol{u'}}) =1$. Since $f(\mathbf{u}) \in \{0,1,2,...,q-1\}$, we have $f(\mathbf{u}) \bmod q = f(\mathbf{u})$. Therefore, (\ref{eq6}) in Construction \ref{Cons2_MS} is same as (\ref{Eq_Cons1}) in Construction \ref{Cons1}. As shown in the proof of Lemma \ref{Lem_Cons1_FCLC}, for odd $q$ we have, $d_{L}({p}_{\boldsymbol{u}}, {p}_{\boldsymbol{u'}}) =1$ implies either $f(\mathbf{u}) - f(\mathbf{u'}) \equiv \frac{q-1}{2} \pmod q$ or $f(\mathbf{u}) - f(\mathbf{u'}) \equiv \frac{q+1}{2} \pmod q$. The Lee distance between any two elements $x$ and $y$ of the integers $ \bmod \text{ } q$ is the Lee weight of $(x-y) \bmod q$ \cite{CW}. Therefore, we have $d_{L}(f(\boldsymbol{u}), f(\boldsymbol{u'}))=w_L((f(\mathbf{u}) - f(\mathbf{u'})) \bmod q)= w_L(\frac{q-1}{2})=w_L(\frac{q+1}{2})=\frac{q-1}{2}$. Since \(d_{L}(\boldsymbol{u}, \boldsymbol{u'}) \ge d_{L}(f(\mathbf{u}), f(\mathbf{u'}) ) \), we have \(d_{L}(\boldsymbol{u}, \boldsymbol{u'}) \ge \frac{q-1}{2} \). Therefore, $d_{L}(\mathrm{Enc}(\boldsymbol{u}), \mathrm{Enc}(\boldsymbol{u'})) = d_{L}(\boldsymbol{u}, \boldsymbol{u'}) + d_{L}(\{{p}_{\boldsymbol{u}}\}^{t}, \{{p}_{\boldsymbol{u'}}\}^{t}) \ge \frac{q-1}{2} + t \ge \frac{2t+3-1}{2} +t = 2t+1$.
	
	\noindent $\bullet$ Case $2$:  $d_{L}({p}_{\boldsymbol{u}}, {p}_{\boldsymbol{u'}}) \ge 2$. For any $\mathbf{u},\mathbf{u'}$ with \( f(\mathbf{u}) \neq f(\mathbf{u'}) \), \(d_{L}(\boldsymbol{u}, \boldsymbol{u'}) \ge 1 \). Therefore, $d_{L}(\mathrm{Enc}(\boldsymbol{u}),\mathrm{Enc}(\boldsymbol{u'})) = d_{L}(\boldsymbol{u}, \boldsymbol{u'}) + d_{L}(\{{p}_{\boldsymbol{u}}\}^{t}, \{{p}_{\boldsymbol{u'}}\}^{t}) \ge 1 + 2t = 2t+1$. \\
	
	\noindent $2.$ Proof for even values of $q$. \\
	From (\ref{eq6}), it is clear that, when $f(\mathbf{u}) \in \{0,1,2,...,\frac{q}{2}-1\}$, we have $ p_{\mathbf{u}} \in \{0,2,4,...,q-4,q-2\}$,  and when $f(\mathbf{u}) \in \{\frac{q}{2},\frac{q}{2}+1,\frac{q}{2}+2,...,q-2,q-1\}$, we have $ p_{\mathbf{u}} \in \{1,3,5,...q-3,q-1\}$. Therefore, the mapping $f(\mathbf{u}) \rightarrow p_{\mathbf{u}}$ in (\ref{eq6}) is a bijection on $\{0,1,2,...,q-1\}$. Therefore, when \( f(\mathbf{u}) \neq f(\mathbf{u'}) \), we have $ p_{\mathbf{u}} \neq p_{\mathbf{u'}} $. Consider the following cases. 
	
\noindent $\bullet$ Case $1$: $d_{L}({p}_{\boldsymbol{u}}, {p}_{\boldsymbol{u'}}) =1$ implies either $f(\mathbf{u}) - f(\mathbf{u'}) \equiv \frac{q}{2} \pmod q$ or $f(\mathbf{u}) - f(\mathbf{u'}) \equiv \frac{q}{2}-1 \pmod q$, except in the case where $(f(\boldsymbol{u}),f(\boldsymbol{u'}))=(0,q-1)$ or $(q-1,0)$. Therefore, excluding the exceptional case, we have either $d_{L}(f(\boldsymbol{u}), f(\boldsymbol{u'}))=w_L((f(\mathbf{u}) - f(\mathbf{u'})) \bmod q)= w_L(\frac{q}{2})=\frac{q}{2}$ or $d_{L}(f(\boldsymbol{u}), f(\boldsymbol{u'}))= w_L(\frac{q}{2}-1)=\frac{q}{2}-1$. Since \(d_{L}(\boldsymbol{u}, \boldsymbol{u'}) \ge d_{L}(f(\mathbf{u}), f(\mathbf{u'}) ) \), we have \(d_{L}(\boldsymbol{u}, \boldsymbol{u'}) \ge \frac{q}{2}-1 \). Therefore, $d_{L}(\mathrm{Enc}(\boldsymbol{u}), \mathrm{Enc}(\boldsymbol{u'})) = d_{L}(\boldsymbol{u}, \boldsymbol{u'}) + d_{L}(\{{p}_{\boldsymbol{u}}\}^{t}, \{{p}_{\boldsymbol{u'}}\}^{t}) + d_{L}(\{{p'}_{\boldsymbol{u}}\},  \{{p'}_{\boldsymbol{u'}}\}) \ge d_{L}(\boldsymbol{u}, \boldsymbol{u'}) + d_{L}(\{{p}_{\boldsymbol{u}}\}^{t}, \{{p}_{\boldsymbol{u'}}\}^{t}) \ge \frac{q}{2}-1 + t \ge \frac{2t+3}{2} +t = 2t+\frac{3}{2} > 2t+1$.

Now consider the exceptional case, i.e., when $(f(\boldsymbol{u}),f(\boldsymbol{u'}))=(0,q-1)$ or $(q-1,0)$. In this case, by $(7)$ we have $({p'}_{\boldsymbol{u}},{p'}_{\boldsymbol{u'}})=(0,\frac{q}{2})$ or $(\frac{q}{2},0)$. That is, $d_{L}(\{{p'}_{\boldsymbol{u}}\}, \{{p'}_{\boldsymbol{u'}}\}) =\frac{q}{2}$.
Therefore, by the encoding given in Construction $2$, we have $d_{L}(\mathrm{Enc}(\boldsymbol{u}), \mathrm{Enc}(\boldsymbol{u'})) = d_{L}(\boldsymbol{u}, \boldsymbol{u'}) + d_{L}(\{{p}_{\boldsymbol{u}}\}^{t}, \{{p}_{\boldsymbol{u'}}\}^{t}) + d_{L}(\{{p'}_{\boldsymbol{u}}\}, \{{p'}_{\boldsymbol{u'}}\}) \ge 1 + t + d_{L}(0,\frac{q}{2}) = 1 + t + \frac{q}{2} \ge 1+t + t + \frac{3}{2} > 2t+1$.   
	
	\noindent $\bullet$ Case $2$:  $d_{L}({p}_{\boldsymbol{u}}, {p}_{\boldsymbol{u'}}) \ge 2$. For any $\mathbf{u},\mathbf{u'}$ with \( f(\mathbf{u}) \neq f(\mathbf{u'}) \), \(d_{L}(\boldsymbol{u}, \boldsymbol{u'}) \ge 1 \).  Therefore, by the encoding given in Construction \ref{Cons2_MS}, we have, $d_{L}(\mathrm{Enc}(\boldsymbol{u}), \mathrm{Enc}(\boldsymbol{u'})) = d_{L}(\boldsymbol{u}, \boldsymbol{u'}) + d_{L}(\{{p}_{\boldsymbol{u}}\}^{t}, \{{p}_{\boldsymbol{u'}}\}^{t}) + d_{L}(\{{p'}_{\boldsymbol{u}}\}, \{{p'}_{\boldsymbol{u'}}\}) \ge d_{L}(\boldsymbol{u}, \boldsymbol{u'}) + d_{L}(\{{p}_{\boldsymbol{u}}\}^{t}, \{{p}_{\boldsymbol{u'}}\}^{t}) \ge 1 +2t =2t+1$.

\end{appendices}

\end{document}